\newtheorem{proposition}{Proposition}[section]
\newtheorem{claim}[proposition]{Claim}
\newtheorem{theorem}[proposition]{Theorem}
\newtheorem{lemma}[proposition]{Lemma}
\newtheorem{corollary}[proposition]{Corollary}
\theoremstyle{definition}
\newtheorem{definition}[proposition]{Definition}
\newtheorem{remark}[proposition]{Remark}
\newtheorem{notation}[proposition]{Notation}
\newtheorem{example}[proposition]{Example}
\newcommand{\A}{\mathcal A}
\newcommand{\SStab}[1]{\mbox{Stab}\left(#1\right)}
\newcommand{\Stab}[1]{\mbox{PStab}\left(#1\right)}
\DeclareMathOperator{\pref}{pref}
\DeclareMathOperator{\Id}{Id}
\newcommand{\tcb}{\textcolor{blue}}
\title{On the rigidity of Arnoux-Rauzy words}
\author{V. Berth\'e,  S. Puzynina}
\address{Universit\'e  Paris Cit\'e, IRIF, CNRS, F-75013 Paris, France}
\email{berthe@irif.fr}
\address{St. Petersburg State University, Russia}
\email{s.puzynina@gmail.com}
\date{\today}
\thanks{This work was supported by the Russian Science Foundation, project
23-11-00133.}
\keywords{Substitutions; rigidity; Arnoux-Rauzy words; episturmian words.}
\subjclass[2010]{68R15, 05A05, 37B10}
\begin{document}

%\tableofcontents

\begin{abstract}
An infinite word generated by a substitution is rigid if all the
substitutions which fix this word are powers of the same
substitution.
 Sturmian words as well as  characteristic Arnoux-Rauzy words that are  generated by  iterating a   substitution are known to be rigid.    In the present paper, we prove that  all Arnoux-Rauzy words
 generated by iteraring a  substitution are rigid.   The proof relies on  two main  ingredients: first, the fact   that   the primitive  substitutions  that  fix an Arnoux-Rauzy word
 share a common power, and secondly,    the  notion of  normal form  of   an   episturmian  substitution (i.e.,   a  substitution  that  fixes  an Arnoux-Rauzy word).
  The main
difficulty   is  then of a combinatorial  nature and    relies on
the normalization process when taking powers of   episturmian
substitutions: the normal form of a square is not  necessarily
equal to the  square of  the  normal forms.

\end{abstract}

\maketitle

\section{Introduction}

Rigidity property is an algebraic property of infinite words that
are generated by  iterating a   substitution\footnote{By substitution, it is meant
here a non-erasing morphism.}:  such an infinite word  is rigid if all the substitutions which fix this word
are powers of a same substitution.
This is a
natural  property  that occurs for several  prominent families
 of   infinite words.
Rigidity has been  first considered for the Thue-Morse  word in \cite{Pansiot:81} (see also  \cite{Seebold:02}  for its generalization  as Prouhet words),
   for generalized  Fibonacci words in   \cite{Pansiot:83},   then  for the class of
Sturmian words in  \cite{Seebold:98} together with
\cite{RichommeSeebold:12} and \cite{RaoWen} (see
also~\cite{BerFreSir}), and  lastly  for strict epistandard  words
in \cite{Krieger:08}, which are also called characteristic Arnoux-Rauzy words. This  brief overview of the literature
shows  that   if there exist numerous results on the two-letter
case, the situation is more contrasted as soon as the size of the
alphabet increases. For instance, over a ternary alphabet, the
monoid of morphisms generating a given infinite word by iteration
can be infinitely generated, even when the word is generated by
iterating an invertible primitive morphism
(see~\cite{DiekertKrieger:09,Krieger:08}).

The aim of this paper is to    prove rigidity for Arnoux-Rauzy
words  that are generated by  iterating  a substitution (see Theorem \ref{theo:power}). The  class of  Arnoux-Rauzy
words, introduced in  \cite{AR},    provides a generalization of
Sturmian words, the latter   corresponding  to the case of  a
two-letter alphabet.  They  are defined in combinatorial terms
(see Section \ref{subsec:words}) and belong to the family  of
infinite  words  having linear factor complexity,  and more
precisely
 $(d -1)n + 1$ factors of length $n$ for all $n$, when defined over a $d$-letter alphabet. They have been  further  generalized as  episturmian words; see the survey \cite{GlenJustin}.
 %For more on episturmian words that are fixed points of substitutions, see \cite[Section 4.4]{GlenJustin} and the references therein.
Despite the fact that they share many properties of Sturmian
words, Arnoux-Rauzy words display a  more complex behavior. For
example, while Sturmian words are 1-balanced (i.e., the numbers of
occurrences of a letter in any two factors of the same length
differ by at most 1), Arnoux-Rauzy words do not have to be
balanced \cite{DBLP:journals/ijac/BertheCS13}. Arnoux-Rauzy words
also show weaker geometric properties than Sturmian words
\cite{10.36045/bbms/1102714169}, and their abelian subshifts have
much more complicated structure
\cite{DBLP:conf/dlt/KarhumakiPW18}.

% We focus here on   the class of  Arnoux-Rauzy words that are  generated by iterating a   substitution.
The rigidity of the subclass of {\em characteristic} Arnoux-Rauzy
generated by iterating a  substitution has been established
in~\cite{Krieger:08}.  We extend this result to {\em any}
Arnoux-Rauzy word generated   by  iterating a  substitution. We
note that not all Arnoux-Rauzy words are generated by iterating a single substitution
(see \cite{AR} and also, e.g., the survey \cite{GlenJustin}), and the
notion of rigidity considered in this paper applies only to words
generated by substitutions. We focus here on Arnoux-Rauzy words
since there exist  non-rigid  episturmian words that are not Arnoux-Rauzy words, such as highlighted in \cite[Section
5.3]{Krieger:08}.

Our proof follows the general line of the proofs in
\cite{Seebold:98} and \cite{RichommeSeebold:12} for Sturmian
words. However, the structure of the monoid of episturmian
substitutions (i.e., the substitutions that fix Arnoux-Rauzy
words)  is  more complicated over a larger alphabet;
 see e.g.~\cite{Richomme:03}.
Episturmian substitutions are described in terms of a  normalized
directive word \cite{JustinPirillo:02}, which itself relies on the
notion of block-equivalence. Our proof first uses  the  fact that
if an Arnoux-Rauzy word is fixed by two primitive substitutions, then
these substitutions coincide up to powers (see Theorem
\ref{theo:commonpower},   proved in  \cite{BerthDDLP:17}). We then
 go from the existence of a common power  to  the following property (stated in  Lemma \ref{lem:mainlem}):
let $\sigma$ and $\tau $ be two   episturmian  substitutions
 such that $\sigma^n= \tau^m$, for
$n \geq m \geq 1$; then, there exists an   episturmian substitution $\varrho$  such that
$\tau = \sigma \circ  \varrho $.
The main issue we will have to face comes from the fact that normalization does not behave
well with respect to taking squares and powers: the normalized form of a square of a  word in normal form is a priori not the square of its normal form.
This thus requires a careful study of %propagated errors  when normalizing
normalization of powers of normal forms, which is the main step of
the proof.
This is handled in Sections \ref{sec:normalization} and~\ref{sec:lemma}.

\bigskip

Let us sketch the contents of this paper.  Basic  notions are
introduced in Section \ref{sec:notions}. In particular, we recall
the notions of  block-equivalence, block-normalization   and
normal form of an episturmian substitution. The main statement and
the  general strategy are discussed  in Section \ref{sec:result}. The
normalization of powers of normal forms is discussed in detail in
Section \ref{sec:normalization} by focusing on the
block-normalization of powers and by introducing several  types of
errors (i.e., factors which are forbidden in the  normal form) that are
introduced when taking powers. The proof of the main step (Lemma \ref{lem:mainlem}) is
handled in Section \ref{sec:lemma} through several decomposition
lemmas. We lastly introduce and discuss  in Section
\ref{sec:morerigidity} the notion of weak rigidity.

\subsection*{Acknowledgements}
We are indebted to the  anonymous referees for their very careful reading of our manuscript, for the pertinent suggestions they made   which  helped us to improve significantly the exposition of the present paper,
and for the interesting questions they raised.

\section{Basic notions} \label{sec:notions}

\subsection{Words, substitutions and rigidity}\label{subsec:words}
Let $\mathcal{A}$ be a finite alphabet.  Let $\varepsilon$
denote the empty word of the free monoid ${\mathcal A}^*$,
${\mathcal A}^+$ the free semigroup and ${\mathcal
A}^{\mathbb{N}}$ the set of infinite words over ${\mathcal A}$.
For any word~$w$ in the free monoid~$\mathcal{A}^*$ (endowed with
the concatenation as operation), $|w|$~denotes the length of~$w$,
and $|w|_j$ stands for the number of occurrences of the letter~$j$
in the word~$w$. A~\emph{factor} of a (finite or infinite)
word~$w$ is defined as the concatenation of consecutive letters
occurring in~$w$. In other words, the word $u$ is a factor of the
finite word $w$ if there exist words $p$ and $s$ such that $w =
pus$. If $p = \varepsilon$ (resp., $s = \varepsilon$) we say that
$u$ is a {\em prefix} (resp., {\em suffix}) of $w$. For
$w=w_1\cdots w_n \in {\mathcal A}^*$,  the notation $\pref_k(w)$
stands for the prefix of length $k$ of $w$, i.e., $\pref_k(w)=w_1
\cdots w_k$, and the $k$-th letter  of $w$ is denoted by $w[k]$.
For $i,j$ integers with $i<j$,  the  set of integers  $\{i, i+1,
\dots, j\}$ is denoted as $[i,j]$. We use the notation  $w{[i, j
]} $ for the factor  $u= w_{i} w_{i+1} \cdots w_j$ of  $w$.  The
set of integers $[i,j ]$ is then called the {\em support}  of this
occurrence in $w$ of the factor $w_{i} w_{i+1} \cdots w_j$ and  $i$ is the \emph{index}
of this occurrence. If $u \in   {\mathcal A}^{\mathbb N}$, the notation
$\pref_{\ell} u$  similarly stands for the    prefix of    $u$ of length
$\ell$, i.e., $\pref_{\ell} u=u_1\cdots u_{\ell}$.  The \emph{reversal} (also called mirror image)
of a word $w=w_1 \cdots w_n \in \mathcal{A}^n$ is the word
%$\widetilde{w}= %S: We used tilde for the spins. We can either use w^R or nothing - I do not think we use this notation
$w_n \cdots w_1$.

Let $x$  be an infinite word in ${\mathcal A}^{\mathbb{N}}$.
A~factor~$w$ of~$x$ is said to be {\em left special}  if there
exist at least two distinct letters $a,b$ of the
alphabet~$\mathcal{A}$ such that $aw$ and~$bw$  are factors
of~$x$. The {\em language}  of the infinite  word $x$ is defined as the set of  its factors  and it is denoted
by ${\mathcal L}_x$.

An infinite word $x \in \mathcal{A}^\mathbb{N}$ is an
\emph{Arnoux-Rauzy word} if  the set of  its  factors  ${\mathcal
L}_x$ is  closed  under  reversal and  for all~$n$ it has  {\em
exactly}  one  left special factor of length~$n$. Arnoux-Rauzy
words   are also called strict episturmian words.
 An infinite word $x$  is an   {\em episturmian word} if the set of its factors  ${\mathcal L}_x$ is closed under reversal and for all~$n$ it has  {\em at most} one  left special factor of length~$n$. An Arnoux-Rauzy word is thus  an  episturmian word,  but an episturmian word is not necessarily an Arnoux-Rauzy word.
An episturmian word  is  called {\em characteristic } if all of
its left special factors are prefixes of it. A characteristic
Arnoux-Rauzy word is also called a {\em standard} episturmian
word, or else {\em epistandard}. For more information on
episturmian words, see the survey \cite{GlenJustin}. An infinite
word  is said to be {\em uniformly recurrent} if every factor
appears infinitely often and with bounded gaps. Arnoux-Rauzy words
are known to be uniformly recurrent.

  A \emph{substitution} $\sigma : {\mathcal A}^* \rightarrow {\mathcal A}^*$ is a monoid morphism  that is  assumed to be
non-erasing, that is,  the image of every  non-empty element  is
non-empty. All the morphisms that are considered in
the present paper  are non-erasing, and so they are substitutions. If there exists a letter $a \in {\mathcal A}$ such that
the word $\sigma(a)$ begins with $a$ and if $|\sigma^n(a)|$ tends
to infinity,
then there exists a unique infinite word, denoted by
$\sigma^\omega(a)$, which has all words $\sigma^n(a)$ as prefixes.  Such an infinite word is called a word \emph{generated by the
substitution} $\sigma$. A substitution $\sigma:{\mathcal A}^*\rightarrow {\mathcal A}^*$
is called \emph{primitive} if there is a positive integer $k$ such
that for all $a,b \in {\mathcal A}$, the letter $b$ appears in
$\sigma^k(a)$. For a primitive substitution $\sigma$, a \emph{fixed point} for $\sigma$,  i.e.,  an infinite word $u$ such that $\sigma(u)=u$,   must be generated by $\sigma$.
 Hence, we  use  equivalently   both  terminologies for an infinite word, namely  being  a fixed point of $\sigma$ or being generated by
 $\sigma$, for a primitive substitution $\sigma$.
If $\sigma$ is a primitive substitution, then
there
exists a power $\sigma^k$ that admits a fixed point, and
the set
of factors of any fixed point of $\sigma^{k}$   is uniformly recurrent (see for example Proposition
1.2.3 in~\cite{Fog02}). Furthermore, all these fixed points have
the same language that we call the {\em language} of the
substitution.

Let  $\sigma$  and $\tau$  be two substitutions on $\mathcal{A}$.
The  substitution  $\tau$ is a {\em conjugate} of $\sigma$ if
there exists $ v \in \mathcal{A} ^*$
 such that   either $ v\tau(w) = \sigma(w)v$   for all $w   \in \mathcal{A} ^*$, or $ \tau(w)v =v \sigma(w)$ for all $w   \in \mathcal{A} ^*$.

%An infinite word $x$ over the alphabet ${\mathcal A}$ is said to
%be {\em substitutive} if there exist a substitution $\sigma$ over
%an alphabet ${\mathcal B}$ with a fixed point $y = \sigma^{\omega}
%(b)$, for some $b \in {\mathcal B}$, and a morphism $\tau:
%{\mathcal B}^* \rightarrow {\mathcal A}^* $, such that $\tau(y) =
%x$. It is said to be {\em substitutive primitive} when $\sigma$ is
%primitive. {\color{red}S: Did we use this notion actually?}

%The {\em incidence matrix} (also called {\em substitution matrix}) of a substitution $\sigma$ defined over the alphabet ${\mathcal A}$ is the ${\mathcal A}\times {\mathcal A}%$-matrix  whose entry $(i,j) $ counts the number $|\sigma(j)|_i$ of occurrences of the letter $i$ in $\sigma (j)$.

 The {\em stabilizer} of an infinite word $x \in A^{\mathbb N} $  %, denoted by $\Stab{x}$,
  is the monoid of substitutions $\sigma$ defined on the alphabet ${\mathcal A}$ that satisfy $\sigma(x)=x$.
 Words that have a cyclic stabilizer are
called {\em rigid}\footnote{Note that this notion of rigidity has
no relation with the ergodic notion of rigidity.}. In other words, an infinite  word $x$ is
 rigid if there exists a substitution $\sigma$ such that for
any substitution $\tau$ such that $\tau(x)=x$, then there exists a
non-negative integer $n$ such that $\tau =\sigma^n$.  In the present paper, we concentrate on the iterative stabilizer according to the terminology of~\cite{Krieger:08}:
 we focus  here on  substitutions (i.e., non-erasing morphisms) and on infinite words generated by iterating a substitution.
For general  results on the possible growth of  elements of the stabilizer, we refer to~\cite{DiekertKrieger:09} and~\cite{DurRig}.
It is shown in particular that polynomial and exponential growth cannot co-exist in the stabilizer for aperiodic words.
We  discuss weaker notions of  rigidity in  Section \ref{sec:morerigidity}.

\subsection{Episturmian substitutions and their normal form} \label{subsec:normalform}

Episturmian substitutions  have been introduced in
\cite{JustinPirillo:02} as generalizations  to larger alphabets of
Sturmian substitutions, which correspond to the case of the
two-letter alphabet. We consider the following substitutions
 \begin{displaymath}
 \psi_{a}(b) =
  \begin{cases}
   ab  &   \text{ if $b \neq a$,} \\
   a    &   \text{  if $b=a$.}
  \end{cases}
 \quad \text{ and } \quad
 \overline{\psi}_{a}(b) =
  \begin{cases}
   ba   &   \text{ if $b \neq a$,} \\
   a    &   \text{if $b=a$,}
  \end{cases}
\end{displaymath}
and the permutation
 \begin{displaymath}
 \theta_{ab} :
  \begin{cases}
   a     \rightarrow b \\
   b        \rightarrow a \\
   c \rightarrow c  \text{ if $c\neq a,b$.}   \end{cases}
 \end{displaymath}

The {\em monoid of episturmian substitutions}  over  ${\mathcal A}$ is the monoid  generated by the permutations  $ \theta_{ab}$, for $a,b \in
 {\mathcal A}$,
 together with the set of substitutions $\psi_{a}$, $\overline{\psi}_{a}$, for $a \in {\mathcal A}$.
The {\em pure episturmian substitutions}  are  the substitutions  obtained by compositions of
elements of the form $\psi_{a}$ and $ \overline{\psi}_{a}$, for $ a \in {\mathcal A}$ (no permutation is  allowed besides the identity).
The  {\em epistandard substitutions } are    the substitutions   obtained by compositions  of  the  permutations together with the  substitutions  $\psi_{a}$ (that is, no $\overline{\psi}_b$ is allowed).  We use ${\mathcal S}_\A$ as   a notation for the set of
permutations over the alphabet ${\mathcal A}$.
%The submonoid  generated by the permutations together with the set of substitutions  $\psi_{a}$, defined for $a \in {\mathcal A}$ (that is, no $%\overline{\psi_{b}}$ is allowed) is called the {\em monoid of epistandard substitutions}.
%The {\em pure epistandard substitutions}  are  substitutions obtained by
%compositions of elements of the form $\psi_{a}$, for $ a \in
%{\mathcal A}$.

%\tcr{We also consider
%${\mathcal S}_\A$  a  fixed finite  alphabet  (of cardinality $(\mbox{Card} {\mathcal A})!$)
%together   with a  fixed  bijection with the set $S_\A$  of
%permutations. We thus  label  permutations as elements in
%${\mathcal S}_\A$. When we  use the  notation  $\theta$ for the
%letter  in  bijection with the permutation $\Theta$,  we also use
%$\theta^n $ for the letter  in  bijection with the permutation
%$\Theta^n$  for $n$ positive, and $\theta^{-1}$  for the letter
%in  bijection with the permutation $\Theta^{-1}$. The  letter in
%bijection with the  identity is  denoted as ${\Id}$. \tcr{TODO}}

The monoid of episturmian substitutions has been thoroughly investigated,  see e.g.~\cite{Richomme:03}.
We  will use in particular the following properties. Note that in both statements below, $x$ is assumed to  contain
all the letters of  the alphabet on which the substitution $\sigma$ is defined.

%\begin{theorem} \label{theo:morphism} Let $\sigma$  be an episturmian substitution that is  distinct from the identity.
%If $x$ is an infinite word such that $\sigma(x)=x$, then
%$x$   is an episturmian word.
%\end{theorem}
\begin{theorem} \cite[Theorem 3.13]{JustinPirillo:02}\cite[Theorem 11]{DJP:01} \label{theo:epi2}
Let $\sigma$ be a substitution.
If $x$ is an Arnoux-Rauzy word  and $\sigma(x)=x$, then $\sigma$  is an  episturmian substitution.
\end{theorem}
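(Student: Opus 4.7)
The plan is to show that any substitution $\sigma$ with $\sigma(x)=x$ lies in the monoid generated by $\{\psi_a,\overline{\psi}_a : a \in \mathcal{A}\}$ together with the permutations $\theta_{ab}$, proceeding by induction on $\|\sigma\| := \sum_{a \in \mathcal{A}}|\sigma(a)|$ via a ``peeling'' strategy that strips off one episturmian generator at a time.

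For the base case $\|\sigma\| = \Card(\mathcal{A})$, every image $\sigma(a)$ is a single letter, so $\sigma$ acts as a permutation of $\mathcal{A}$ and decomposes as a product of transpositions $\theta_{ab}$, all of which are episturmian by definition. For the inductive step I aim to factor $\sigma = \phi \circ \sigma'$, with $\phi$ a single episturmian generator and $\|\sigma'\| < \|\sigma\|$; one then checks that $\sigma'$ still fixes an Arnoux-Rauzy word (either $x$ itself or the ``desubstituted'' word $\phi^{-1}(x)$, which remains Arnoux-Rauzy because $\phi$ preserves the episturmian class), and the induction hypothesis yields that $\sigma'$ is episturmian, hence so is $\sigma$.

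The structural heart of the argument is identifying the right generator to peel. Let $b = x[1]$; since $\sigma(x)=x$, the image $\sigma(b)$ starts with $b$. The key claim to establish is that the set $\{\sigma(a)\}_{a \in \mathcal{A}}$ collectively matches the image pattern of either $\psi_b$ (every $\sigma(a)$ begins with $b$ and inside each $\sigma(a)$ every non-$b$ letter is immediately preceded by $b$), or, symmetrically, of some $\overline{\psi}_{b'}$ (the analogous suffix condition). Here the Arnoux-Rauzy hypothesis is essential: reversal closure of $\mathcal{L}_x$ together with the uniqueness of the left (and hence right) special factor of each length constrains where non-$b$ letters can sit inside the images, and pins down a common boundary letter across all images. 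When the pattern matches $\psi_b$, the inverse $\psi_b^{-1}$ applies unambiguously to each $\sigma(a)$, producing a non-erasing $\sigma'$ with $\sigma = \psi_b \circ \sigma'$.

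The main obstacle will be exactly this structural step. Justifying the common-letter factorization uses the \emph{strict} episturmian rigidity in full, not merely episturmianity, since the argument hinges on the existence of exactly one left special factor of each length and on the very restricted shape of return words in an Arnoux-Rauzy word. A secondary subtlety is that when $\sigma$ involves a non-trivial permutation component the boundary letter at which one peels need not literally equal $x[1]$, so one may first need to peel off a generator $\theta_{ab}$ to realign $\sigma$ with the letter structure of $x$ before $\psi_b$ or $\overline{\psi}_{b'}$ becomes available; I would handle this by allowing $\phi$ in the factorization to be a permutation whenever the image letters are not yet aligned, and by treating non-erasingness of $\sigma'$ as a free corollary of $\sigma$ being non-erasing.
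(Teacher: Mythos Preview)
The paper does not supply its own proof of this statement: Theorem~\ref{theo:epi2} is quoted from \cite{JustinPirillo:02} and \cite{DJP:01} and used as a black box in the proof of Theorem~\ref{theo:power}. There is therefore no in-paper argument to compare your proposal against.

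Regarding the proposal itself, there is a genuine gap in the inductive step. Suppose you succeed in factoring $\sigma = \psi_b \circ \sigma'$. From $\sigma(x)=x$ you deduce $\psi_b(\sigma'(x))=x$, hence $\sigma'(x)=\psi_b^{-1}(x)$. This says that $\sigma'$ \emph{sends} $x$ to the desubstituted word $y=\psi_b^{-1}(x)$; it does \emph{not} say that $\sigma'$ fixes $x$, nor that $\sigma'$ fixes $y$ (for the latter you would need $\sigma'(y)=y$, which you have not established). So your induction hypothesis, which demands that $\sigma'$ fix some Arnoux-Rauzy word, is not available. The cyclic-conjugate trick $\tau=\sigma'\circ\psi_b$ does satisfy $\tau(y)=y$, but $\|\tau\|$ need not be smaller than $\|\sigma\|$, so the induction still does not close.

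The proofs in \cite{JustinPirillo:02,DJP:01} avoid this by not inducting on a fixed-point hypothesis at all. They exploit the $S$-adic structure of an Arnoux-Rauzy word: $x$ admits a unique infinite desubstitution $x=\widetilde\psi_{a_1}\widetilde\psi_{a_2}\cdots$ over the generators $\psi_a,\overline\psi_a$, and any morphism with $\sigma(x)=x$ is forced to coincide with a finite prefix of this decomposition composed with a permutation. The induction is then along the directive sequence, and the invariant carried through is ``$\sigma$ maps one Arnoux-Rauzy word to another in the $S$-adic tower'', which is strictly weaker than ``$\sigma$ fixes an Arnoux-Rauzy word'' and is what actually persists under peeling. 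Your peeling intuition is the right picture, but the induction needs to be reformulated around this weaker invariant for the argument to go through.
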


%\begin{theorem}  \cite[Theorem 11]{DJP:01} \label{theo:epi3}
%Let $\sigma$ be a substitution.
%If $x$ is a  standard  Arnoux-Rauzy word   over the d-letter alphabet ${\mathcal A}$ and $\sigma(x)=x$, then $\sigma$  is a  standard episturmian morphism over ${\mathcal A}$.
%\end{theorem}

Episturmian substitutions over the  alphabet  ${\mathcal A}$  can be viewed as automorphisms of the free group generated by the alphabet ${\mathcal A}$.
This implies the following  property of   cancellativity.
\begin{proposition} \cite{JustinPirillo:02}\cite[Lemma 7.2]{Richomme:03} \label{prop:cancel}
The monoid of episturmian substitutions is left-cancellative   and
right-cancellative, i.e.,  for any episturmian substitutions $\sigma,
\tau, \rho$, if $ \sigma \circ \tau=\sigma \circ \rho$, then
$\tau=\rho$,  and if %$ \sigma \circ \tau=\rho \circ \tau$ then
$ \rho \circ \sigma=\tau \circ \sigma$, then
$\rho=\tau$.
\end{proposition}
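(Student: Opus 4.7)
The plan is to exploit the remark preceding the statement: every episturmian substitution extends uniquely to an automorphism of the free group $\FA$ generated by $\mathcal A$. Once this embedding is in place, cancellativity is immediate from the group structure of $\mathrm{Aut}(\FA)$, since invertibility on either side allows one to cancel.

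First, I would verify invertibility on each generator of the episturmian monoid. The permutations $\theta_{ab}$ are bijections of $\mathcal A$ and extend trivially to automorphisms. For $\psi_a$, which sends $a\mapsto a$ and $b\mapsto ab$ for $b\neq a$, one writes down a candidate inverse on $\FA$ by $a\mapsto a$ and $b\mapsto a^{-1}b$ for $b\neq a$, and checks directly that both composition orders yield the identity on the generators of $\FA$. The case of $\overline{\psi}_a$ is analogous with $b\mapsto b a^{-1}$ for $b\neq a$. Since a composition of free group automorphisms is again an automorphism, every element of the episturmian monoid gives rise to a well-defined element of $\mathrm{Aut}(\FA)$.

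Cancellativity then follows immediately. Given episturmian substitutions with $\sigma\circ\tau=\sigma\circ\rho$, the same identity holds between their extensions in $\mathrm{Aut}(\FA)$; composing on the left with the group-theoretic inverse $\sigma^{-1}$ yields $\tau=\rho$ in $\mathrm{Aut}(\FA)$. Since an automorphism of $\FA$ is determined by its values on the free generators $\mathcal A$, this equality descends to equality of substitutions on $\mathcal A^*$. Right-cancellativity is symmetric: compose on the right with $\sigma^{-1}$.

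The main obstacle is nothing deep, merely bookkeeping: one must write the candidate inverses of $\psi_a$ and $\overline{\psi}_a$ explicitly and check that both composition orders act as the identity on each generator. The only conceptual care required is that this genuinely takes place in $\FA$ rather than in $\mathcal A^*$, because the inverse of $\psi_a$ produces letters with negative exponent and therefore has no realization as a non-erasing monoid morphism on $\mathcal A^*$; cancellativity on the submonoid of episturmian substitutions is inherited from the larger group in which they all embed.
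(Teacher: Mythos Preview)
Your proposal is correct and follows exactly the approach indicated in the paper: the sentence immediately preceding the proposition states that episturmian substitutions can be viewed as automorphisms of the free group $\FA$, and that cancellativity follows from this; the paper gives no further details, simply citing \cite{JustinPirillo:02} and \cite[Lemma 7.2]{Richomme:03}. Your write-up just spells out the verification of invertibility on generators and the descent from $\mathrm{Aut}(\FA)$ back to substitutions, which is precisely the intended argument.
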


Episturmian words can be infinitely desubstituted over the set of
episturmian substitutions with the   desubstitution    being
described  in terms of  spinned directive words   (see
\cite[Theorem 3.10]{JustinPirillo:02}). However, an episturmian
word can  have several  desubstitutions.
 There is  a way to normalize the directive words of an episturmian word so that any episturmian word can be defined uniquely by its so-called normalized directive word \cite{JustinPirillo:02}. This normalization  relies   on the notion of block-equivalence.
%eads to  the notion of  a directive word of an episturmian substitution as  discussed  in details in  \cite{GlenLeveRich:08}.

We recall here how to use  this normalization in order to produce
a unique decomposition of any episturmian substitution,  according
to \cite{GlenLeveRich:08}. We follow the notation of
\cite{JustinPirillo:02,GlenLeveRich:08,GlenLeveRich:09}.  We
provide letters with a  notion  of  spin
 and introduce for each letter $a$  its  spinned version $\overline{a}$. The letter $\overline{a}$  is considered as  having   spin 1 while $a$  is considered
as   having spin $0$. We  then consider   the new  alphabet
$\overline{{\mathcal A}}= \{ \overline{a} \mid a \in {\mathcal
A}\}$. For $w \in ( {\mathcal A}  \cup \overline{{\mathcal
A}})^*$,  the opposite  of  $w$ is the word  $\overline{w}$
obtained from $w$  by exchanging all spins  in $w$. A  word in
$\overline{{\mathcal A}}^{*}$ is said  \emph{barred}.   A  (finite or infinite) word over the alphabet ${\mathcal
A} \cup \overline{{\mathcal A}}$ is    called a \emph{spinned
word}, with  a  spinned word  $x=\tilde{x}_1 \tilde{x}_2\cdots \tilde{x}_n $ being the spinned version of $ x_1 x_2 \cdots x_n$.
For instance,  if
${\mathcal A}=\{1,2,3\}$, then $\overline{3} 22$ is a spinned word
and it is a spinned version of the word $322$.

%The word $w$ is  called a spinned version of the  word $w$.
%Consider the word  $w'$ over the  alphabet ${\mathcal A}$  obtained by
%replacing  in $w$ each occurrence of a  letter  with spin (which  thus belongs to  $\overline{{\mathcal A}}$)  by its
%counterpart   in ${\mathcal A}$.}.
% For a
%spinned word $x$  we use the notation
% $x=\tilde{x}_1 \tilde{x}_2\cdots$  where $ \tilde{x} =x_i$ or $\overline{x}_i$  when the
%spins are not explicitly given.

 %We will use     $\alpha$ for   letters of   $( {\mathcal A}  \cup \overline{{\mathcal A}})$.
%The notation $\Spin (x)$  stands for  the word over the alphabet $\{0,1\}$
% formed  of the spins of the letters of $x$, for $x$ being a spinned  word. Here $\Spin(\overline{3}
% 22)=100$. {\color{red}S: Do we use the notation Spin?}

 The notion of spin extends to the  episturmian  substitutions $\psi_a $ and   $\overline{\psi}_{a}$,  by using the
convention  that $$\psi_{\overline a}= \overline{\psi}_a \mbox{
for all  }  a \in {\mathcal A}.$$ If $w= w_1 \cdots w_k \in (
{\mathcal A} \cup  \overline{{\mathcal A}})^*$,  we then define
the  (pure) substitution $\psi_w$ with directive word $w$ as
$$\psi_w=  \psi_{w_1} \circ  \ldots \circ  \psi_{w_k}.$$ One checks
that $ \psi_{\overline{w}}=\overline{\psi}_{w}.$
%The word $w$  is said to be a  \emph{directive word} of  the episturmian substitution  $\psi_w$.

 A   \emph{block-transformation} is the replacement in a spinned word of an occurrence of  a factor of the form
$xv \overline{x}$, where $x \in {\mathcal A}$, $ v \in ( {\mathcal
A} \setminus  \{x\})^*$, by  $  \overline{x} \overline{v} x$ or
vice-versa. We write it for short
\begin{equation}\label{eq:block_transf}xv \overline{x} \rightarrow
\overline{x} \overline{v} x   \quad \mbox { or }  \quad
\overline{x} \overline{v} x\rightarrow  xv \overline{x}  \quad (x
\in {\mathcal A}, \  v \in ( {\mathcal A} \setminus
\{x\})^*).\end{equation} Two finite spinned words $w,w'$ are said
to be \emph{block-equivalent}, and we write  $  w \equiv w'$, if
we can pass from one to the other by a (possibly empty) chain of
block-transformations.  The block-equivalence is an equivalence
relation over spinned words, and  if $ w \equiv  w'$, then $w$ and
$w'$ are spinned versions of a common word over ${\mathcal A}$.

%  \tcr{We now define the  notion of a normal form for   episturmian substitutions, according to   \cite{GlenLeveRich:09}.}
A finite spinned word $w$ is said to be
  in \emph{normal form}
 if $w$  has no factor in $ \cup_{a \in {\mathcal A}} \overline{a} \overline{\mathcal A}^*
 a$.  The normal form is unique, and any sequence of block transformations of the form $\overline{xv}x\to xv\overline{x}$ eventually yields the normal form.
%Any pure  episturmian substitution is known to have a
%unique normalized  directive word  by \cite[Lemma
%5.3]{GlenLeveRich:09}.
By \emph{block-normalization} of a spinned word (we also say {\em normalization} for short), we   mean a succession
of block-transformations  of the form $\overline{xv}x\to xv\overline{x}$  that produces its normal  form. The
interest of this notion comes from the  following theorem.
\begin{theorem}\cite[Theorem 3.1]{GlenLeveRich:09}
 \label{theo:spinned}
 Let $w$ and $w'$ be two spinned words over ${\mathcal A}  \cup \overline{{\mathcal A}}$. One has $\psi_w =\psi_{w'}$ if and only if $ w  \equiv w'$.
 \end{theorem}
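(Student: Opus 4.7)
The plan is to prove each direction separately. For the ``if'' direction (block-equivalence implies equality of substitutions), it suffices by transitivity to treat a single elementary block-transformation. Concretely, for any letter $x \in \mathcal{A}$ and any word $v \in (\mathcal{A} \setminus \{x\})^*$, one must verify the identity
$$\psi_x \circ \psi_v \circ \overline{\psi}_x \;=\; \overline{\psi}_x \circ \overline{\psi}_v \circ \psi_x.$$
This is a direct computation: evaluate both sides on an arbitrary letter $b \in \mathcal{A}$, splitting into the cases $b = x$ and $b \neq x$, and proceed by induction on $|v|$. The key observation driving the induction is that the ``pivot'' pattern $xv\overline{x}$ with $v$ avoiding $x$ interacts cleanly with each generator sitting inside $v$, because $\psi_y$ and $\overline{\psi}_y$ for $y \neq x$ both fix the letter $x$ at the boundary and only differ in the relative order of adjacent $y$ and $x$.

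The ``only if'' direction is the substantive one, and the natural route is through normal forms. I would first show that every spinned word $w$ is block-equivalent to some spinned word $\hat{w}$ in normal form: starting from $w$, repeatedly apply block-transformations $\overline{x}\overline{v}x \to xv\overline{x}$ to eliminate forbidden factors in $\bigcup_{a \in \mathcal{A}} \overline{a}\,\overline{\mathcal{A}}^* a$. Termination is ensured by a monovariant such as a suitable weighted count of unbarred letters (or equivalently, a lex-order on spin vectors read right-to-left) that strictly decreases at each such step. Combined with the ``if'' direction, this reduces the problem to the following uniqueness statement: if $w$ and $w'$ are both in normal form and $\psi_w = \psi_{w'}$, then $w = w'$ as spinned words.

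To establish uniqueness, I would argue by induction on $|w|$, recovering the first generator of the decomposition directly from $\psi_w$ and its spin. Since $\psi_a(a) = a$, $\psi_a(b) = ab$ for $b \neq a$, and symmetrically $\overline{\psi}_a(b) = ba$, the letter $a$ appearing first in the directive word of $\psi_w$ can be read off by looking at the common prefix (resp.\ suffix, if the first generator is barred) of the images $\psi_w(b)$ for $b \in \mathcal{A}$, letting one identify both the letter and its spin. Once the leading generator is determined, one peels it off using the left-cancellativity provided by Proposition~\ref{prop:cancel} and applies the induction hypothesis to the shortened directive words.

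The main obstacle is precisely the uniqueness step: one has to argue that the normal form condition is strong enough to prevent any ambiguity in the reconstruction of the leading generator, and moreover that peeling off that generator leaves a shorter spinned word that is still in normal form (so that the induction goes through). This requires a careful case analysis of how the prefix/suffix pattern of $\psi_w(b)$ interacts with the forbidden blocks $\overline{x}\,\overline{\mathcal{A}}^* x$, and may force strengthening the induction hypothesis, for instance by inducting not on length but on a finer parameter capturing the structure of the last block of barred letters in $w$.
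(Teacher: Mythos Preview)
The paper does not prove this theorem at all: it is quoted verbatim from \cite[Theorem~3.1]{GlenLeveRich:09} and used as a black box, so there is no ``paper's own proof'' to compare your attempt against. Your outline---checking a single block-transformation preserves $\psi_w$, then reducing the converse to uniqueness of normal forms---is exactly the strategy used in \cite{GlenLeveRich:09}; in that sense you are reconstructing the cited argument, not the present paper.

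Two remarks on the sketch itself. First, your termination monovariant is stated with the wrong orientation: under $\overline{x}\,\overline{v}\,x \to x\,v\,\overline{x}$ the number of unbarred letters \emph{increases}, and the lex order on spin vectors that works is the one read \emph{left-to-right} (the leftmost changed bit goes from~$1$ to~$0$). Second, in the uniqueness step your heuristic ``read the first generator off the common prefix/suffix of the images'' is not by itself enough, because for many $w$ in normal form the images $\psi_w(b)$ share both a common first letter and a common last letter; you must use the normal-form hypothesis to rule out that a competing normal form starts with a barred generator when yours starts with an unbarred one. You correctly flag this as the main obstacle, but your sketch stops short of the actual argument: in \cite{GlenLeveRich:09} this is handled by analysing how the first letter of each $\psi_w(b)$ is produced and showing that a barred leading generator forces a forbidden factor.
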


In particular, if
$$ \tilde{\psi}_{a_1} \circ \tilde{\psi}_{a_2} \circ \ldots \circ \tilde{\psi}_{a_k}=  \tilde{\psi}_{b_1} \circ \tilde{\psi}_{b_2} \circ \ldots \circ \widetilde{\psi}_{b_{\ell}},$$
where, for  all $i  ,j$ with $i\in \{1,\cdots,k\}$, $ j \in \{ 1,
\cdots, \ell\}$,  one has   $\widetilde{\psi}_{a_i} \in \{
{\psi}_{a_i}, \overline{\psi}_{a_i}\}$,
 $\tilde{\psi}_{b_j} \in \{ {\psi}_{b_j}, \overline{\psi}_{b_j}\}$,  and $a_i,$ $ b_j \in {\mathcal A}$,
 then $k =\ell$ and $a_i =b_i$, for all $i$.
%\tcr{Any pure episturmian substitution is known to have a unique normalized directive word by [GLR09,Lemma 5.3].}

We recall that pure  episturmian substitutions are the  episturmian substitutions  that have no
 permutation in their decomposition. Any pure episturmian substitution  has a unique directive word in normal form  by  \cite[Lemma 5.3]{GlenLeveRich:09}. Now  consider the  general case of episturmian substitutions (not
only pure ones).
 Let $\theta$ be a  permutation on the alphabet ${\mathcal A}$. It is readily verified that for all $ a \in {\mathcal A}$, one has
\begin{equation} \label{eq:permut}
\theta \circ  \psi_a= \psi _{\theta(a)} \circ \theta, \qquad
\theta \circ  \overline{\psi}_a=\overline{ \psi} _{\theta(a)} \circ \theta.
\end{equation}
According to this property,  if $\sigma$ is an   episturmian substitution, then  $\sigma$
admits a unique decomposition   as
$\sigma= \psi_w \circ \theta_{\sigma}$,
where $\psi_w$ is  a pure    episturmian substitution and $\theta_{\sigma}$ is  a permutation.

 \begin{definition}\label{defi:normal}
  The {\em normal decomposition }  of the episturmian substitution $\sigma$ is defined as the (unique) decomposition
$$\sigma=  \widetilde{ {\psi}}_{a_1}  \circ \cdots \circ   \widetilde{ {\psi}}_{a_n} \circ \theta_{\sigma},$$
with $ \widetilde{ {\psi}}_{a_i} \in \{{ {\psi}}_{a_i} ,\overline{ {\psi}}_{a_i} \}$ for  all $i$, and
where the spinned word $w_{\sigma}$ defined as  $w_{\sigma}=a_1 \cdots a_n$ is in normal
form.
%;  it is the normalized directive word of the pure
%substitution $\sigma \circ \Theta_{\sigma}^{-1}$ (as before, for
%all $i$, $ \widetilde{ {\psi}}_{a_i} \in \{ {\psi}_{a_i},
%\overline{\psi}_{a_i}\}$).
The spinned word $w_{\sigma}$ is called the
{\em normalized directive word} of $\sigma$ and the permutation
$\theta_{\sigma}$ is called the {\em normal permutation} of
$\sigma$. \end{definition}

  %We now recall that we have fixed  a bijection between  the set of permutations $S_\A$  and the alphabet  ${\mathcal S}_\A$.
 % Let $\theta_{\sigma} $ be the letter in the alphabet ${\mathcal S}_\A$  in bijection with the  permutation $\Theta_{\sigma}$.
  We use the notation
$$[\sigma]=\tilde{a}_1  \cdots   \tilde{a}_n  \theta_{\sigma}$$
 as   a word  over the alphabet  ${\mathcal A} \cup \overline{{\mathcal A}} \cup {\mathcal S}_\A$.
 We  call this word the {\em normal form}  of $\sigma$.
 If $\sigma$ is pure, then $\theta_{\sigma}$ is  equal to the identity, denoted as $\Id$. By abuse of notation, when $\theta_{\sigma}= \Id$, we
 write $[\sigma]=\tilde{a}_1  \cdots   \tilde{a}_n $ for short.
  If $\sigma$ is epistandard, then all the  $  \tilde{a_i} $'s  are equal to  $ {a_i}$.
The {\em block-normalization} of an episturmian substitution
$\sigma= \tilde{ {\psi}}_{a_1}  \circ \cdots \circ \,  \tilde{
{\psi}}_{a_n} \circ  \,\theta_{\sigma}$ refers to the block-normalization   of its
directive word $a_1\cdots a_n$.

In order to define formally the surjective map between the words
on the alphabet ${\mathcal A} \cup \overline{{\mathcal A}} \cup
{\mathcal S}_\A$ and corresponding episturmian substitutions, we
introduce the morphism $\mu$ which maps letters to substitutions
with $\mu: \tilde{a} \mapsto \tilde{\psi_{ a}},  \ \theta  \mapsto  \theta $. % In order to simplify the  notation and when there is no  risk of confusion, when the permutation is equal to  the identity, we do not write  the corresponding permutation   (see e.g. Example \ref{ex:typeII}).
%\mu(\theta)=\Theta \mbox{ if  the letter } \theta \mbox{ is in
%bijection with  the permutation }\Theta.$$
 One has $\sigma=
\mu([\sigma])$ and  $\sigma \circ
\theta_{\sigma}^{-1}=\mu (w_{\sigma})$, where  $w_{\sigma}$ stands for the normalized directive word of $\sigma$. The {\em length} of a spinned word $w$ corresponds
to the usual  notion of length over the alphabet ${\mathcal A}  \cup \overline{{\mathcal A}}$. The {\em length of an episturmian substitution}
 is  defined  as the length of any of its directive words,  denoted as
 $|\sigma|$. It is well defined  by Theorem~\ref{theo:spinned}.
We stress the fact that the length  of $ \sigma=\psi_{a_1} \circ  \cdots \circ  \psi_{a_k} \circ  \theta_{\sigma}$ is   $k$.

%\tcr{We also consider
%${\mathcal S}_\A$  a  fixed finite  alphabet  (of cardinality $(\mbox{Card} {\mathcal A})!$)
%together   with a  fixed  bijection with the set $S_\A$  of
%permutations. We thus  label  permutations as elements in
%${\mathcal S}_\A$.
%When we  use the  notation  $\theta$ for the
%letter  in  bijection with the permutation $\Theta$,  we also use
%$\theta^n $ for the letter  in  bijection with the permutation
%$\Theta^n$  for $n$ positive, and $\theta^{-1}$  for the letter
%in  bijection with the permutation $\Theta^{-1}$.

The main issue we will have to face comes from the fact that
normalization does not behave well with respect to taking powers: the normalized form of a power of a spinned word in
normal form  is  a priori  not the  power of its normal form.
Indeed, consider the   spinned word $ a \overline{a}$. Its square
$a \overline{a}a \overline{a}$  is  not in normal form because of
the occurrence of
 $\overline{a}a$. This issue  will  be handled in detail in Sections \ref{sec:normalization} and  \ref{sec:lemma}.

The following example shows  that an episturmian substitution may involve only one letter in its  normalized  directive word.

\begin{example}\label{Fibonacci} Consider   the Fibonacci substitution $\sigma$ defined  over the two-letter alphabet $\{a,b\}$ as
 $\sigma  : a \mapsto ab, \ b \mapsto a$. This is  the most classical  example of a  Sturmian substitution, i.e., of a two-letter episturmian substitution.
Over  $\{a,b\}$, one has $\psi_a:  a \mapsto a, \ b \mapsto ab$.
Let  $\theta_{ab}$ be the two-letter permutation that exchanges letters, i.e., $\theta_{ab}:   a \mapsto b, \ b \mapsto a$.
One has $\sigma= \psi_a \circ  \theta_{ab}$,  and  its normalized directive word  $w_{\sigma}$  is $a$. \end{example}

% Removed according to a comment of referee 1.

%\begin{definition} \label{def:two} We say that $\sigma$  admits  at least two letters in its normal  form  if there exist  at least two distinct letters in
%its  normalized   directive word $w_{\sigma}$.
%\end{definition}

%This property will play an important role in Section
%\ref{sec:lemma}.

\section{Main result and strategy}\label{sec:result}
Let us  state  and describe  the general  proof for the main
result of this paper.  We recall that Theorem \ref{theo:power}
extends the result from \cite[Theorem 15]{Krieger:08}
  which   proves rigidity for  strict epistandard  words, i.e.,  characteristic  Arnoux-Rauzy
  words.

\begin{theorem}\label{theo:power}
Arnoux-Rauzy words generated by substitutions are rigid.
 \end{theorem}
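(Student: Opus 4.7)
The plan is to choose a substitution $\sigma$ of minimal positive length in $\Stab{x}$ and show, by induction on length, that every other stabilizing substitution is a power of $\sigma$; Theorem \ref{theo:commonpower} produces a common power of any two such substitutions, and Lemma \ref{lem:mainlem} is the engine that converts such a common power into a left-factorization from which the induction proceeds.

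I would first dispose of two preliminaries. Let $x \in \mathcal{A}^{\mathbb{N}}$ be an Arnoux-Rauzy word and set $S = \Stab{x}$. By Theorem \ref{theo:epi2} every $\tau \in S$ is an episturmian substitution, so it has a well-defined length $|\tau|$. If $|\tau| = 0$, then $\tau$ is a permutation, and reading the identity $\tau(x) = x$ letter by letter shows that $\tau$ fixes every letter occurring in $x$, so $\tau = \Id$. If $|\tau| \geq 1$, then $|\tau^n(a)|$ tends to infinity for every letter $a$ occurring in $x$, so $\tau^n(a)$ is an arbitrarily long factor of $x$, and uniform recurrence of $x$ forces it eventually to contain every letter of $\mathcal{A}$; hence $\tau$ is primitive. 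Since $x$ is by hypothesis generated by a substitution, $S$ contains an element of positive length, so one may fix $\sigma \in S$ of minimal positive length.

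I would then prove by induction on $|\tau|$ that every $\tau \in S$ is of the form $\sigma^k$. The base case $|\tau|=0$ is the observation just made. For $|\tau|\geq 1$, minimality of $\sigma$ gives $|\tau|\geq|\sigma|$, and Theorem \ref{theo:commonpower} applied to the primitive substitutions $\sigma,\tau$ yields integers $a,b\geq 1$ with $\sigma^a=\tau^b$. If $a\geq b$, Lemma \ref{lem:mainlem} produces an episturmian $\varrho$ with $\tau=\sigma\circ\varrho$; applying left-cancellativity (Proposition \ref{prop:cancel}) to $\sigma(\varrho(x))=\tau(x)=x=\sigma(x)$ shows $\varrho\in S$, and since block-transformations preserve length we have $|\tau|=|\sigma|+|\varrho|$, so $|\varrho|<|\tau|$; the induction hypothesis gives $\varrho=\sigma^{k-1}$ and hence $\tau=\sigma^k$. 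If instead $b>a$, the same lemma applied to $\tau^b=\sigma^a$ with the roles of $\sigma$ and $\tau$ swapped gives $\sigma=\tau\circ\varrho'$ with $\varrho'$ episturmian, whence $|\sigma|=|\tau|+|\varrho'|\geq|\tau|$; combined with $|\tau|\geq|\sigma|$ this forces $|\varrho'|=0$, so $\varrho'$ is a permutation fixing $x$, and the base case yields $\varrho'=\Id$, whence $\tau=\sigma=\sigma^1$.

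The entire difficulty is packaged in Lemma \ref{lem:mainlem}; everything else above is bookkeeping with lengths, cancellativity, and a common power. The obstruction is the one already flagged in Section \ref{sec:notions}: block-normal forms are not preserved by taking powers (for instance the square of the normal-form word $a\overline{a}$ contains the forbidden factor $\overline{a}a$), so one cannot match the normal form of $\sigma^n$ against that of $\tau^m$ letter by letter. Controlling the block-normalization process as powers are taken, by classifying the local \emph{errors} introduced and tracking them so as to extract a common left factor $\sigma$ from $\tau$, is precisely the combinatorial task to which Sections \ref{sec:normalization} and \ref{sec:lemma} are devoted.
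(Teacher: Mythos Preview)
Your proof is correct and somewhat more direct than the paper's. The paper proceeds by proving, via induction on $\max(|\sigma|,|\tau|)$, the abstract statement that any two episturmian substitutions with a common power are themselves powers of a common substitution; to make the induction step work it invokes Lemma~\ref{lem:commute} (commutation of stabilizing substitutions, itself relying on epistandard rigidity from \cite{Krieger:08}) to deduce $\varrho\sigma=\sigma\varrho$ and hence $\sigma^{n-m}=\varrho^m$, then recurses on the pair $(\sigma,\varrho)$. You instead fix a minimal-length $\sigma\in\Stab{x}$ upfront, induct on $|\tau|$, and stay inside $\Stab{x}$ throughout by observing that $\varrho\in\Stab{x}$; this lets you apply the induction hypothesis directly to $\varrho$ and bypass Lemma~\ref{lem:commute} entirely. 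The paper's route yields a slightly stronger free-standing statement (common power $\Rightarrow$ common root, with no reference to $x$), while yours is leaner for the purpose at hand.

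Two minor points. First, the step ``$\sigma(\varrho(x))=\sigma(x)\Rightarrow\varrho(x)=x$'' is not an instance of Proposition~\ref{prop:cancel} (which is cancellativity in the monoid of substitutions); what you need is injectivity of an episturmian substitution as a map on $\mathcal{A}^{\mathbb{N}}$. This holds because each generator $\psi_a$, $\overline{\psi}_a$, $\theta_{ab}$ is injective on infinite words (the paper uses exactly this for $\psi_a$ in its own base-case argument), so just say that rather than citing Proposition~\ref{prop:cancel}. Second, your primitivity sketch asserts $|\tau^n(a)|\to\infty$ for every letter $a$, which is precisely the nontrivial content of Lemma~\ref{lem:primitive}; you should cite that lemma rather than re-sketch it. Finally, since $a|\sigma|=b|\tau|$ and $|\tau|\geq|\sigma|$, your case $b>a$ is actually vacuous, so the second case can be dropped.
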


The proof  of Theorem \ref{theo:power} is based on the following
results. We first  know from  \cite{BerthDDLP:17} the following theorem,
whose proof   relies  on the notion of return words. Note that
this   statement implies that the Perron-Frobenius eigenvalues of
$\sigma$ and $\tau$ are multiplicatively dependent, which is also
a consequence of Cobham's Theorem~\cite{Du:11}.

\begin{theorem}   \cite[Theorem 9]{BerthDDLP:17} \label{theo:commonpower}
Let $x$  be an Arnoux-Rauzy word that is a fixed point of both
$\sigma$ and $\tau$ primitive substitutions. Then there exist $i,j
\geq 1$ such that $\tau^i = \sigma^j$.
 \end{theorem}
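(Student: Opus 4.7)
My plan is to combine the Cobham--Durand theorem with the theory of return words, exploiting the especially tight return-word structure of Arnoux-Rauzy words. The first step is to use the fact that $x$, being an Arnoux-Rauzy word, is aperiodic. The Cobham--Durand theorem for primitive substitutive fixed points then forces the Perron--Frobenius eigenvalues $\lambda_\sigma$ and $\lambda_\tau$ to be multiplicatively dependent, so there exist $p,q \geq 1$ with $\lambda_\sigma^{p} = \lambda_\tau^{q}$. Replacing $\sigma$ by $\sigma^{p}$ and $\tau$ by $\tau^{q}$ (both still primitive and still fixing $x$), I may assume from here on that $\sigma$ and $\tau$ share a common Perron--Frobenius eigenvalue $\lambda$.

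The second step is to pass to return words. For a nonempty prefix $u$ of $x$, let $R_u$ denote the (finite) set of return words to $u$ in $x$; since $x$ is uniformly recurrent this set is finite, and for Arnoux-Rauzy words one has the much stronger property $|R_u| = |\mathcal{A}|$. I would choose $u$ as a common prefix of $\sigma^{N}(x)=x$ and $\tau^{N}(x)=x$ of large enough length, and then enlarge $u$ if necessary so that for every $r \in R_u$, both $\sigma(r)$ and $\tau(r)$ decompose cleanly as concatenations of elements of $R_u$. This yields induced primitive substitutions $\widehat{\sigma},\widehat{\tau}$ on the $|\mathcal{A}|$-letter alphabet $R_u$, both fixing the derived sequence $D_u(x) \in R_u^{\mathbb{N}}$. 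Because $D_u(x)$ inherits uniform recurrence and low factor complexity from $x$, it remains in the Arnoux-Rauzy--like framework where these techniques are effective.

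The final step identifies $\widehat{\sigma}$ with $\widehat{\tau}$ up to passing to further common powers, and then pulls this identification back to the original substitutions. Since $\widehat{\sigma}$ and $\widehat{\tau}$ act on the same finite set, fix the same sequence, and (by the reduction in step one) share the Perron--Frobenius eigenvalue $\lambda$, their incidence matrices on $R_u$ carry identical Perron--Frobenius data; Moss\'e's recognizability theorem applied to $D_u(x)$ gives a unique parsing of $D_u(x)$ into $\widehat{\sigma}$-blocks and into $\widehat{\tau}$-blocks, which together with the eigenvalue equality forces $\widehat{\sigma}^{a} = \widehat{\tau}^{b}$ for some $a,b \geq 1$. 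Unwinding the return-word coding (a concatenation of images in $R_u$ lifts uniquely to a concatenation in $\mathcal{A}^{*}$ since $u$ occurs only at the prescribed positions) then produces $\sigma^{i} = \tau^{j}$ for suitable $i,j \geq 1$. The main obstacle I anticipate is the synchronization in step two: guaranteeing that \emph{both} $\sigma$ and $\tau$ cleanly induce substitutions on a single $R_u$ requires a careful, perhaps iterative, choice of $u$; and the final unwinding requires showing that distinct primitive substitutions fixing $x$ cannot induce the same substitution on $R_u$, which is precisely where the rigidity properties specific to Arnoux-Rauzy words (the exact count $|R_u| = |\mathcal{A}|$ and the invertibility of episturmian substitutions in the free group) must be invoked.
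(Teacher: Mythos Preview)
The paper does not prove this theorem; it cites it from \cite{BerthDDLP:17} and only remarks that the proof there ``relies on the notion of return words'' and that the multiplicative dependence of the Perron--Frobenius eigenvalues is a \emph{consequence} (also obtainable via Cobham's theorem), not an input. So there is no in-paper proof to compare against; your task was really to reconstruct the argument of \cite{BerthDDLP:17}.

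Your first two steps are in the right spirit: the passage to derived sequences over return words is indeed the engine, and for Arnoux--Rauzy words one does have $|R_u|=|\mathcal A|$ and the derived sequence is again Arnoux--Rauzy. The genuine gap is your Step~3. You write that Moss\'e recognizability together with the eigenvalue equality ``forces $\widehat\sigma^{a}=\widehat\tau^{b}$,'' but this is precisely the statement to be proved, transported to the derived system; recognizability gives unique $\widehat\sigma$-parsings and unique $\widehat\tau$-parsings separately, and equal dominant eigenvalues do not by themselves pin one parsing to the other (two distinct primitive substitutions can fix the same sequence and share a Perron eigenvalue without being powers of one another in general). What actually closes the argument in \cite{BerthDDLP:17} is a finiteness property specific to this class: iterating the derivation $x\mapsto D_u(x)$ yields only finitely many sequences up to relabeling, so the towers of derived sequences for $\sigma$ and for $\tau$ eventually synchronize, and from that synchronization one reads off a common power. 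Your sketch alludes to this (``remains in the Arnoux--Rauzy--like framework'') but never uses it; you should replace the recognizability-plus-eigenvalue heuristic with this finiteness-of-derived-sequences argument. Similarly, the ``unwinding'' you describe is not a separate difficulty once the synchronization is established, since the return-word coding is a bijective recoding of $x$.
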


We then  prove   primitivity  for  the  substitutions that fix Arnoux-Rauzy words.
\begin{lemma}\label{lem:primitive}
Let $x$ be an Arnoux-Rauzy word   over the  alphabet ${\mathcal A}$ and  let $\sigma$ be a substitution such that  $\sigma(x)=x$,  with $\sigma$ not equal to the identity.
Then $\sigma$  is  primitive. \end{lemma}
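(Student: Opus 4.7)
The plan is to first apply Theorem \ref{theo:epi2} to conclude that $\sigma$ is episturmian, and to work with its normal form $\sigma = \tilde\psi_{a_1}\circ\cdots\circ\tilde\psi_{a_n}\circ\theta$. The case $n=0$ is impossible: if $\sigma = \theta$ is a mere permutation, then $\sigma(x) = x$ forces $\theta$ to fix every letter occurring in $x$, and since $x$ is Arnoux-Rauzy all letters of $\mathcal{A}$ occur, so $\theta = \Id$, contradicting $\sigma \neq \Id$. Thus $n \geq 1$; set $S = \{a_1, \ldots, a_n\}$. A short induction on $n$ yields the key observation: for every $b \in \mathcal{A}$, the image $\sigma(b)$ contains every letter of $S$ (each $\tilde\psi_{a_i}$ injects $a_i$ into any non-empty word), while the only letter of $\mathcal{A} \setminus S$ that can appear in $\sigma(b)$ is $\theta(b)$, and it appears exactly once. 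In particular, for $c \in \mathcal{A} \setminus S$ the letter $c$ belongs to $\sigma(b)$ if and only if $b = \theta^{-1}(c)$.

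Let $T = \bigcup_{k \geq 0} \theta^k(S)$ be the $\theta$-closure of $S$. The heart of the proof is to show $T = \mathcal{A}$. Assume the contrary and fix $c \in \mathcal{A} \setminus T$; since $\mathcal{A} \setminus T$ is $\theta$-invariant, the entire $\theta$-cycle $c_0 = c,\, c_1 = \theta^{-1}(c),\, \ldots,\, c_{L-1}$ of $c$ lies in $\mathcal{A} \setminus S$. Decomposing $x = \sigma(x_1)\sigma(x_2)\cdots$, each occurrence of $c_i$ at position $p$ of $x$ sits in a unique block $\sigma(x_j)$ with $x_j = c_{i+1}$ and at a fixed internal position $r_i$; this yields a bijective source map $F_{c_i}\colon E_{c_i} \to E_{c_{i+1}}$ (where $E_d$ denotes the set of positions of $d$ in $x$), satisfying $F_{c_i}(p) \leq p$. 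Composing around the cycle produces an injection $F\colon E_c \to E_c$ with $F(p) \leq p$. When $L \geq 2$ one checks that $F(p) = p$ is impossible (it would force $c_0 = c_1$), and when $L = 1$ at most one $p$ can satisfy $F(p) = p$ (two distinct fixed points would pin the prefix of $x$ to incompatible values, using that equality requires the directive word to be constant and the preceding letters of $x$ to be the unique letter of $\sigma$-image length one). A minimality argument on the smallest non-fixed element of $E_c$ then produces an element of $E_c$ strictly below it, a contradiction; hence $E_c$ is finite, contradicting the fact that $c$ recurs infinitely often in the Arnoux-Rauzy word $x$.

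With $T = \mathcal{A}$ established, let $N$ be the order of $\theta$. Using the permutation relation \eqref{eq:permut} repeatedly, $\sigma^N$ admits a directive word with support $\bigcup_{i=0}^{N-1} \theta^i(S) = T = \mathcal{A}$; since block-normalization preserves the underlying unsigned letters, this support remains $\mathcal{A}$ after passing to normal form. The key observation then gives $\sigma^N(b) \supseteq \mathcal{A}$ for every $b \in \mathcal{A}$, so $\sigma^N$ is primitive with exponent $1$, and therefore $\sigma$ itself is primitive. The main obstacle is the descent step of the second paragraph: because spins may cause the source map to fail to be strictly decreasing, naive infinite descent is unavailable, and one must combine injectivity, minimality, and an explicit analysis of the at most one fixed point of $F$.
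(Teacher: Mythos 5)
Your proof is correct, but it takes a genuinely different route from the paper's. The paper's argument is short and leans on an external criterion: it shows that $(|\sigma^n(a)|)_n$ tends to infinity for every letter $a$ --- a non-growing letter would force a suitable power of $\sigma$ to have normal form $\psi_a^i\circ\overline{\psi}_a^{\,j}$, which has no infinite fixed point --- and then invokes \cite[Proposition 5.5]{Queffelec:10} (growth of all letters plus uniform recurrence of the fixed point implies primitivity). You instead prove primitivity directly: from the normal form you extract the exact set of letters occurring in $\sigma(b)$ (all of $S=\{a_1,\dots,a_n\}$, plus $\theta(b)$ exactly once when $\theta(b)\notin S$), show that the $\theta$-closure $T$ of $S$ must be all of $\mathcal{A}$ via the source-map descent on occurrences of a letter $c\notin T$ (the injection $F:E_c\to E_c$ with $F(p)\le p$ and at most one fixed point forces $|E_c|\le 1$, contradicting recurrence), and conclude that $\sigma^N$ maps every letter onto a word containing all of $\mathcal{A}$, where $N$ is the order of $\theta_\sigma$. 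I checked the delicate points --- the weak inequality $F_{c_i}(p)\le p$, the equality analysis ($F(p)=p$ forces $c_0=c_1$ when $L\ge 2$, and at most one fixed point when $L=1$ via the uniqueness of the letter with image of length one), and the final step that a pure directive word supported on all of $\mathcal{A}$ makes every image contain every letter --- and they all hold. What each approach buys: the paper's proof is two sentences of combinatorics plus a citation, while yours is self-contained (no appeal to the Queff\'elec criterion) and yields the sharper quantitative conclusion that the primitivity exponent can be taken to be the order of the normal permutation; the price is the more intricate descent argument, which you rightly flag as the main obstacle.
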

\begin{proof}
The  infinite word $x$ is uniformly recurrent due to being an
Arnoux-Rauzy word. The substitution $ \sigma$ is episturmian by Theorem \ref{theo:epi2}.
 We now prove   by contradiction  that
$(|\sigma^n(a)|)_n$ tends to infinity  for each letter $a  \in
{\mathcal A}$. Suppose indeed  that for some letter $a$, the sequence of lengths
$(|\sigma^n(a)|)_n$ does not tend to infinity. With the notation of
Definition \ref{defi:normal}, consider $n$ such that
$\theta_{\sigma}^n=\Id$, and take $\sigma'=\sigma^n$. We have
$\theta_{\sigma'}=\Id$. We also have that $(|(\sigma')^\ell(a)|)_\ell$ does
not tend to infinity. From the normal form of episturmian
substitutions,   one deduces that  no power  of $\sigma'$ contains
some   $\tilde{\psi_b}$ in its normal form, with  $b\neq a$:
otherwise, applying $\sigma'$ to a word adds at least one letter
for each occurrence of $a$ in the word. Since $\sigma \neq \Id$, this is only possible when
$\sigma'= \psi_a^i \circ \overline{\psi_a}^j$, but this
substitution does not have an infinite fixed point, which gives us
the desired contradiction. We thus deduce  from the fact that
$(|\sigma^n(a)|)_n$ tends to infinity  for each letter $a  \in
{\mathcal A}$ together with  $x$ being uniformly recurrent that
$\sigma$ is primitive from \cite[Proposition 5.5]{Queffelec:10}.
\end{proof}

 Next lemma  provides a commutation property  for  substitutions fixing a  common Arnoux-Rauzy word.
 \begin{lemma}\label{lem:commute}
 Let $\sigma$ and $\tau$ be two   substitutions.
If $x$ is an Arnoux-Rauzy word  such that  $\sigma(x)=\tau(x)=x$,
then   $\tau \circ  \sigma= \sigma  \circ  \tau$.
\end{lemma}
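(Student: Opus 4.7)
The plan is strong induction on the sum $|\sigma|+|\tau|$ of the lengths of the normalized directive words of $\sigma$ and $\tau$ (Definition~\ref{defi:normal}). By Theorem~\ref{theo:epi2}, both substitutions are episturmian; setting aside the trivial case where either is the identity, they are primitive by Lemma~\ref{lem:primitive}. In the base case $|\sigma|=0$ or $|\tau|=0$, the corresponding substitution is a pure permutation fixing the Arnoux-Rauzy word $x$; since every letter of $\mathcal{A}$ appears in $x$, such a permutation fixes each letter and is therefore the identity, making commutation immediate.

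For the inductive step I assume $|\sigma|,|\tau|\ge 1$ and, by symmetry, $|\sigma|\le|\tau|$. The structural heart of the argument is to produce a factorization
\[
\tau=\sigma\circ\varrho
\]
where $\varrho$ is an episturmian substitution fixing $x$ with $|\varrho|=|\tau|-|\sigma|<|\tau|$. The construction would exploit the uniqueness of the normalized directive word $\Delta$ of the Arnoux-Rauzy word $x$: translating $\sigma(x)=x$ and $\tau(x)=x$ through the commutation rules~\eqref{eq:permut} between the $\psi_a$'s and the permutations identifies $w_\sigma$ and $w_\tau$, up to block-normalization at the junction with the tail of $\Delta$, with prefixes of $\Delta$, so that $w_\tau=w_\sigma\cdot w'$ for some spinned word $w'$ of length $|\tau|-|\sigma|$. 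Reassembling via~\eqref{eq:permut} yields the explicit candidate $\varrho=\psi_{\theta_\sigma^{-1}(w')}\circ\theta_\sigma^{-1}\theta_\tau$, and a direct computation with~\eqref{eq:permut} gives $\sigma\circ\varrho=\tau$; that $\varrho(x)=x$ then follows from $\sigma(\varrho(x))=\tau(x)=x=\sigma(x)$ and the injectivity of $\sigma$ on infinite words (recall that episturmian substitutions are automorphisms of the free group, as noted just before Proposition~\ref{prop:cancel}). The inductive hypothesis, applied to the pair $(\sigma,\varrho)$ of total length $|\sigma|+|\varrho|=|\tau|<|\sigma|+|\tau|$, then gives $\sigma\circ\varrho=\varrho\circ\sigma$, whence
\[
\sigma\circ\tau=\sigma\circ\sigma\circ\varrho=\sigma\circ\varrho\circ\sigma=\tau\circ\sigma.
\]

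The main obstacle I anticipate is the rigorous justification of the factorization step. Precisely the phenomenon highlighted in the introduction and addressed in Sections~\ref{sec:normalization} and~\ref{sec:lemma} -- that block-normalization does not commute with concatenation -- entails that $w_\sigma$ need not appear literally as a prefix of $\Delta$, so the spin flips at the junction must be tracked carefully to ensure that the candidate $\varrho$ is well-defined, episturmian, and really fixes $x$. An alternative, purely algebraic route -- applying Theorem~\ref{theo:commonpower} to obtain $\sigma^j=\tau^i$ and then again to the pair $(\sigma\tau,\tau\sigma)$ to deduce $(\sigma\tau)^p=(\tau\sigma)^p$ (with $p=q$ forced by the additivity of length of episturmian substitutions, itself a consequence of Theorem~\ref{theo:spinned}) -- meets a comparable difficulty when extracting $\sigma\tau=\tau\sigma$ from the equality of these $p$-th powers, and does not seem to be any shorter than the direct directive-word induction.
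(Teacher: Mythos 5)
Your proposed induction hinges entirely on the factorization step: from $\sigma(x)=\tau(x)=x$ and $|\sigma|\le|\tau|$ you want to extract an episturmian $\varrho$ with $\tau=\sigma\circ\varrho$ and $\varrho(x)=x$. You correctly identify that justifying $w_\tau=w_\sigma\cdot w'$ is ``the main obstacle,'' but you then leave it unresolved, and it is not a detail that can be tracked ``carefully at the junction'': because block-normalization does not commute with concatenation, the normalized directive word $w_\sigma$ need not occur literally as a prefix of the normalized directive word of $x$, nor as a prefix of $w_\tau$. This prefix property is essentially the content of Lemma~\ref{lem:mainlem}, which the paper only proves under the additional hypothesis $\sigma^n=\tau^m$ (obtained from Theorem~\ref{theo:commonpower}) and at the cost of the entire case analysis of Sections~\ref{sec:normalization} and~\ref{sec:lemma}. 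So as written your argument assumes, at its key step, a statement at least as hard as the main technical result of the paper; the proof is not complete.

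The paper's own proof of Lemma~\ref{lem:commute} takes a quite different and much shorter route that avoids directive words altogether: by \cite{DJP:01} one replaces $\sigma$ and $\tau$ by conjugate \emph{epistandard} substitutions $\sigma',\tau'$ fixing an Arnoux-Rauzy word $y$ with the same language; Krieger's rigidity theorem for epistandard words then gives $\sigma'=\varrho^t$, $\tau'=\varrho^s$, hence $\sigma'\circ\tau'=\tau'\circ\sigma'$. Since conjugate substitutions send abelian-equivalent words to words of equal length, this forces $|\sigma\circ\tau(a)|=|\tau\circ\sigma(a)|$ for every letter $a$, and combined with $\sigma\circ\tau(x)=\tau\circ\sigma(x)$ and the fact that every letter occurs in $x$, one concludes $\sigma\circ\tau=\tau\circ\sigma$. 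If you want to salvage your approach, you would either need to prove the prefix property directly (which is the hard part of the paper) or, as the paper does, lean on the already-known epistandard case via conjugation.
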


\begin{proof}
Let $x$ be  an Arnoux-Rauzy word over the  alphabet
${\mathcal A}$  such that $\sigma(x)=\tau(x)$, with $\sigma$ and $\tau$  being   substitutions.  According   to  \cite[Fact 3.4]{GlenLeveRich:08}, \cite[Remark 3.5]{GlenJustin} and \cite{JustinPirillo:02},
 there exists an   epistandard  Arnoux-Rauzy word  $y$  that has  the
same language as $x$, and  $\tau'$, $\sigma'$
epistandard substitutions that are conjugate respectively to $\tau$
and $\sigma$
  such that $\sigma'(y)=\tau'(y)$. By
\cite[Theorem 15]{Krieger:08}, there exists  a substitution
$\varrho$ such that $\sigma'=\varrho^t$ and $\tau'=\varrho^s$, by
using   rigidity for fixed points of epistandard   substitutions.
Then, one has  $\sigma' \circ \tau'= \tau'\circ \sigma' =
\varrho^{s+t}$. Now since $\tau$ and $\tau'$ are conjugate, we
have that $|\tau(a)|$ and $|\tau'(a)|$ are of the same length and
moreover are abelian equivalent, i.e., contain the same numbers of
occurrences of each letter. Since $\sigma$ and $\sigma'$ are
conjugate, the images of abelian equivalent words under $\sigma$
and $\sigma'$ are of the same length (in fact, they are also
abelian equivalent), hence $ |\sigma \circ \tau(a)|= | \tau\circ
\sigma(a)|$ for each $a \in {\mathcal A}$. We deduce from $\sigma
\circ \tau(x) =\tau \circ \sigma(x)$ that $\sigma \circ \tau =\tau
\circ \sigma$ by considering the respective images of each letter
in $x$ and by recalling that all the letters of ${\mathcal A}$
occur  in $x$.
 \end{proof}

We  now  want to go from  the existence of a  common power
(Theorem \ref{theo:commonpower})  to a ``prefix property''. This
is the object of next lemma which  is  the main    step in  the
proof of Theorem \ref{theo:power}. This is the analogue of
\cite[Proposition 4.1]{RichommeSeebold:12} which is  proved for
Sturmian words.

\begin{lemma} \label{lem:mainlem}
Let $\sigma$ and $\tau $ be two   episturmian  substitutions
 such that $\sigma^n= \tau^m$, for
$n \geq m \geq 1$. Then there exists an   episturmian substitution $\varrho$  such that
$\tau = \sigma \circ  \varrho $.

\end{lemma}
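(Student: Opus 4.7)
The plan is to translate the identity $\sigma^n=\tau^m$ into a statement about block-equivalence of spinned words via Theorem \ref{theo:spinned}, reduce the conclusion to a ``prefix'' property of spinned words modulo block-equivalence, and then exploit the rigid structure imposed by the equality of two large compositions to establish this property.

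First I would write $\sigma=\mu_{w_\sigma}\circ\theta_\sigma$ and $\tau=\mu_{w_\tau}\circ\theta_\tau$ in normal form. Iterating \eqref{eq:permut} gives
$$\sigma^n=\mu_{U_n}\circ \theta_\sigma^n \quad \text{with} \quad U_n := w_\sigma\,\theta_\sigma(w_\sigma)\,\theta_\sigma^2(w_\sigma)\cdots\theta_\sigma^{n-1}(w_\sigma),$$
and analogously $\tau^m=\mu_{V_m}\circ\theta_\tau^m$ with $V_m := w_\tau\,\theta_\tau(w_\tau)\cdots\theta_\tau^{m-1}(w_\tau)$. Since the decomposition of an episturmian substitution into a pure part and a permutation is unique, the hypothesis $\sigma^n=\tau^m$ yields $\theta_\sigma^n=\theta_\tau^m$ together with $\mu_{U_n}=\mu_{V_m}$; by Theorem \ref{theo:spinned} this last equality is exactly $U_n\equiv V_m$. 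Since block-transformations preserve length, comparing $|U_n|=n|w_\sigma|$ with $|V_m|=m|w_\tau|$ and using $n\geq m$ gives $|w_\sigma|\leq|w_\tau|$.

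The goal is then reformulated as follows. Suppose I can exhibit a spinned word $w'$ such that $w_\tau\equiv w_\sigma w'$. Defining $\theta_\varrho:=\theta_\sigma^{-1}\theta_\tau$, $w_\varrho:=\theta_\sigma^{-1}(w')$, and $\varrho:=\mu_{w_\varrho}\circ\theta_\varrho$, the relation \eqref{eq:permut} yields
$$\sigma\circ\varrho=\mu_{w_\sigma}\circ\theta_\sigma\circ\mu_{w_\varrho}\circ\theta_\varrho=\mu_{w_\sigma\,\theta_\sigma(w_\varrho)}\circ\theta_\sigma\theta_\varrho=\mu_{w_\sigma w'}\circ\theta_\tau=\mu_{w_\tau}\circ\theta_\tau=\tau,$$
as required. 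Hence everything reduces to showing that $w_\sigma$ is a prefix of $w_\tau$ modulo block-equivalence. Projecting onto $\mathcal{A}^*$ (which forgets spins and is invariant under $\equiv$), the relation $U_n\equiv V_m$ together with $|w_\sigma|\leq|w_\tau|$ immediately yields that the underlying unspinned word of $w_\sigma$ is a prefix of that of $w_\tau$; only the \emph{spinned} lift remains.

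The main obstacle lies precisely in this lift. Although $w_\sigma$ and $w_\tau$ are individually in normal form, the concatenation $w_\sigma\,\theta_\sigma(w_\sigma)$ inside $U_n$ is generally not, and normalizing it requires block-transformations that can propagate across the boundary of $w_\sigma$ and cascade far to the right; consequently, reading the first $|w_\sigma|$ spinned letters of $w_\tau$ need not yield a spinned word block-equivalent to $w_\sigma$. To handle this, I would proceed by induction on $m$ (the base case $m=1$ forcing $|w_\sigma|\le|w_\tau|$ with matching underlying prefix, and then requiring only a spin adjustment) and classify the ``errors'', that is, the forbidden factors of the form $\overline{a}\,\overline{\mathcal{A}}^*a$ that appear when one concatenates two spinned normal forms. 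This analysis, carried out in Sections \ref{sec:normalization} and \ref{sec:lemma} through several decomposition lemmas, produces an explicit sequence of block-transformations that turn a spinned prefix of length $|w_\sigma|$ of $w_\tau$ into $w_\sigma$, so that the remaining suffix provides the desired $w'$. The delicate point is the interplay between errors created at the boundary of a ``$w_\sigma$-block'' and those appearing further to the right, which is what forces the inductive scheme rather than a one-shot argument.
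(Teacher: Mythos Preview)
Your reduction is correct and matches the paper exactly: writing $\sigma^n=\mu_{U_n}\theta_\sigma^n$, $\tau^m=\mu_{V_m}\theta_\tau^m$, deducing $U_n\equiv V_m$ and $|w_\sigma|\le|w_\tau|$, and observing (this is the paper's Remark~\ref{rem:prefix}) that it suffices to show $w_\tau\equiv w_\sigma w'$ for some spinned word $w'$. You have also correctly located the difficulty: normalization does not commute with concatenation, so the spinned prefix of length $|w_\sigma|$ of $w_\tau$ need not be block-equivalent to $w_\sigma$.

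However, the substantive part of the argument is missing. Your proposed ``induction on $m$'' is not a workable scheme as written: you give no inductive step, and it is unclear what instance with smaller $m$ the hypothesis $\sigma^n=\tau^m$ would reduce to. (Your base case $m=1$ is in fact trivial, since then $\tau=\sigma^n=\sigma\circ\sigma^{n-1}$; no spin adjustment is needed.) The sentence ``This analysis, carried out in Sections~\ref{sec:normalization} and~\ref{sec:lemma}\ldots'' is a deferral, not a proof.

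What the paper actually does is quite different from an induction on $m$: it is an exhaustive case analysis on the structure of the \emph{errors} appearing in $Z_{\sigma^n}$ and $Z_{\tau^m}$ (the unnormalized concatenations $U_n\theta_\sigma^n$ and $V_m\theta_\tau^m$). After disposing of the case where neither word has an error, the paper shows (Lemma~\ref{lem:2no}) that if one has an error so does the other, and then splits according to whether the leftmost errors have the same letter (Lemma~\ref{lem:diff}), the same starting index (Lemma~\ref{lem:prefix}), the same length (Lemma~\ref{lem:samelength}), and---in the main case of equal letter---whether the errors are both of type~I (Lemma~\ref{lem:casepropa}), both of type~II (Lemma~\ref{lem:type33}), or of mixed type (Lemma~\ref{lem:type32}); the one-letter case is handled separately (Lemma~\ref{lemma:oneletter}). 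In each branch one compares the numbers of $a$'s and $\overline{a}$'s in suitable prefixes of $Z_{\sigma^n}$ and $Z_{\tau^m}$ (via Propositions~\ref{prop:prefix} and~\ref{prop:prefix2}) to conclude that the corresponding substitutions agree on a prefix long enough to factor out $\sigma$. None of this is captured by an induction on $m$, and your proposal does not supply any of these arguments.
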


The proof of this lemma is quite involved; we provide it in
Section \ref{sec:lemma}. Using this lemma, we now can prove
Theorem \ref{theo:power}.

\bigskip

\begin{proof}
Let $x$ be an Arnoux-Rauzy word.  Let $\sigma$, $\tau$ be   two
substitutions distinct  from the identity  such that
$\sigma(x)=\tau(x)=x$. By Theorem \ref{theo:epi2}  they are both
episturmian. By Lemma \ref{lem:primitive},  $\sigma$ and  $\tau$
are primitive. By Lemma \ref{lem:commute},  they commute. By
Theorem \ref{theo:commonpower},  they   have a common power,
i.e.,  there exist $n,m \geq 1 $ such that $\sigma^n=\tau^m$.

 We now  prove by induction on $\max(|\sigma|,|\tau|)$ that if  there exist $n,m \geq 1 $ such that
$\sigma^n=\tau^m$, then there exist integers $k$, $\ell$ and an
episturmian substitution $\varrho$ such that $\sigma=\varrho^k$,
$\tau=\varrho^{\ell}$.  Note that this step is   the analogue of
\cite[Corollary 4.3]{RichommeSeebold:12} which holds for Sturmian
words (i.e., for the two-letter case).

We have  $|\sigma| \geq 1$ and $  |\tau| \geq 1$ since they are
primitive.

 If $|\sigma|=|\tau|=1$, then  $\sigma=\tau$. Indeed, since the lengths are equal to 1, due to Lemma \ref{lem:mainlem} and  to the uniqueness of the normal form,
 there exist a letter $a$ and permutations $\theta_{\sigma}$ and $\theta_{\tau}$ such that $\sigma$ and $\tau$ are  both either  of the form
 $\sigma= \psi_a \circ  \theta_{\sigma}$, $\tau= \psi_a \circ  \theta_{\tau}$,  or they both involve  $\overline{\psi_a}$, with both cases  being  symmetric.
 We assume  that  they both involve $ \psi_a$.
  Now since $x=\sigma(x)=\tau(x)$, we have $\psi_a(\theta_{\sigma} (x)) = \psi_a (\theta_{\tau}
 (x))$. Notice that the form of the substitution $\psi_a$ implies that if $\psi_a(w)=\psi_a(w')$ for
some  infinite words $w$ and $w'$, then $w=w'$ (since $\psi_a$ only inserts a letter $a$ before each letter which is not
 $a$). So, $\theta_{\sigma} (x) = \theta_{\tau}
 (x)$.  Since $\theta_{\sigma}$ and $\theta_{\tau}$
 are permutations, it follows that $\theta_{\sigma}=\theta_{\tau}$.

Suppose w.l.o.g. that $n\geq m$ and that the induction property
holds for substitutions with lengths smaller than
$\max(|\sigma|,|\tau|)$ which is equal to $ |\tau|$ (since
$n|\sigma|= m | \tau|$ by the definition of the length of
substitution). By Lemma \ref{lem:mainlem}, there exists an
episturmian substitution $\varrho$ such that   $ \tau=\sigma \circ
\varrho$. From  $ \tau \circ \sigma=\sigma \circ \tau$, we deduce
that $\sigma \circ \varrho \circ \sigma= \sigma \circ \sigma \circ
\varrho$,   and thus $\varrho \circ \sigma= \sigma \circ \varrho$
by left-cancellativity (see Proposition \ref{prop:cancel}). If
$\varrho=\Id$, then $\sigma=\tau$, which provides the desired
induction  property. The case  where $\varrho $ a permutation is similar
to the case where both substitutions have  length 1 above. We now
assume $|\varrho| \geq 1$, and thus $|\tau| > \max
(|\sigma|,|\varrho|)$, which yields in particular $m < n$. One has
$ \tau^m= \sigma^m \circ \varrho^m= \sigma^n$, and consequently
$\sigma ^{n-m}= \varrho ^ m$ (again by left-cancellativity). We
now can  apply the induction hypothesis to $\sigma$ and $\varrho$.
Hence there exist $\varrho'$ episturmian substitution, $k$, $\ell$
integers such that $\sigma=\varrho'^k$, $\varrho=
(\varrho')^{\ell}$, which  also yields    $\tau=
(\varrho')^{k+\ell}$. This   ends the induction proof.
\end{proof}

\section{Block-normalization of powers} \label{sec:normalization}
We recall that the normalized form of a square (power) of a
spinned word in normal form  may not be equal to the square
(power) of its normal form. We study in this section how
normalization behaves with respect to  powers $\sigma^n$ for an
episturmian substitution $\sigma$, by describing a  normalization algorithm in Section \ref{subsec:algo}.
Examples are given in Section \ref{subsec:examples}, and
 more precise
statements are provided in Sections \ref{subsec:secondlevel},
\ref{subsec:moreIII} and \ref{subsec:morenormalform}.  Lastly, the normalization of $[\sigma]^n$ in the case where the normalized
directive word of  $\sigma$ contains
only one letter is  handled in Section \ref{subsec:oneletter}.

\subsection{First notation}

We recall that  $[\sigma]=\tilde{a}_1\cdots
\tilde{a}_k\theta_{\sigma}$  is the word over the alphabet
${\mathcal A} \cup \overline{\mathcal A} \cup
 S_\A$  representing
$\sigma$ in its normal form as defined in Section
\ref{subsec:normalform}. From now on we will keep the notation $k$
for the length of $\sigma$.  We assume $k \geq 1$.  The normalized directive word of
$\sigma$ is    $w_{\sigma}=\tilde{a}_1\cdots \tilde{ a}_k$ and
  $\theta_{\sigma}$ is the normal permutation of $\sigma$.
We  also recall that the letters  $\tilde{a}_i$ belong to  $
{\mathcal A} \cup \overline{ {\mathcal A} }$.

% {\color{red}S: Here and everywhere: do we write $\tilde{a}_i$ or $\widetilde{a_i}$? V: I prefer $\tilde{a}_i$ }

%S: moved to basic notions
%The notation   $\pref_{\ell} [\sigma^n]$ stands for the    prefix
%of    $\sigma^n$ of   length $\ell$ and similarly for
%$\pref_{\ell} [\sigma]^n$. For $1 \leq  \ell  \leq nk   $, it
%belongs to  $({\mathcal A } \cup \overline{\mathcal A})^{+}$. For
%a letter $a$, the notation $|\pref_{\ell} [\sigma^n]|_a$ stands
%for the  number of occurrences of $a$'s  in  $[\sigma^n]$ (with an
%analogous notation  for   $\overline{a}$).

 Let $n$ be a positive integer. Let us now consider $\sigma^n$.    We want to compare the words  $[\sigma]^n$ and $[\sigma^n].$
As mentioned above (see also Example \ref{ex:typeII} and \ref{ex:typeIII}),  the word $[\sigma]^n$ does not have to be the normal form
of $\sigma^n$.  In order to get the normal form $[\sigma^n]$ of
 $\sigma^n$,  we first  handle the occurrences of permutations, which we  shift to the end,
 and secondly, we  perform block-transformations  (see Section \ref{subsec:algo}). The latter  is the  main step that  we  will   have to  work out.

The first piece of the normalization  algorithm   thus   consists in handling the permutation. To get the normalized form of $[\sigma^n]$, we start by shifting
the $n$ occurrences of the permutation $\theta_{\sigma}$ to the
end of the word using the relations (\ref{eq:permut}). In
particular, the  normal permutation of $\sigma^n$ is equal to  the
$n$-th power $\theta_{\sigma}^n$ of the  normal permutation of
$\sigma$.  Recall that the permutation $\theta_{\sigma}^n$ is seen both as a permutation and as a letter.
 This gives a word $Z_{\sigma^n}$ over the
alphabet ${\mathcal A} \cup \overline{ {\mathcal A} } \cup S_\A$,
which is  a priori not the normal form $[\sigma^n]$, and which is defined by
$$Z_{\sigma^n}=
 \tilde{a}_1\cdots \tilde{a}_k
\widetilde{\theta_{\sigma}(a_1)}\cdots
\widetilde{\theta_{\sigma}(a_k)} \cdots
\widetilde{\theta_{\sigma}^{n-1}(a_1)}\cdots
\widetilde{\theta_{\sigma}^{n-1}(a_k)}\theta_{\sigma}^n,
$$
with  $$\sigma^n = \widetilde{\psi}_{a_1}\cdots
\widetilde{\psi}_{a_k}\widetilde{\psi}_{\theta_{\sigma}(a_1)}\cdots
\widetilde{\psi}_{\theta_{\sigma}(a_k)} \cdots
\widetilde{\psi}_{\theta_{\sigma}^{n-1}(a_1)}\cdots
\widetilde{\psi}_{\theta_{\sigma}^{n-1}(a_k)}\theta_{\sigma}^n =
\mu (Z_{\sigma^n}).$$  One has
$$\sigma^n= \mu([\sigma^n])= \mu (Z_{\sigma^n}).$$
%  \mbox{ and  } [\sigma^n] \equiv Z_{\sigma^n}.$$

If  $n=1$,  then $Z_{\sigma}= [\sigma]$.
%$\theta_{\sigma}=\Id$, then $Z_{\sigma^n}= [\sigma^n]$.

Observe that  the values of spins   (i.e.,  the  occurrence  or
not of  a  bar  for $\widetilde{a_i}$)   are  $k$-periodic in
$Z_{\sigma^n}$. A factor of length $k$ that occurs at an index
congruent to $1$ modulo $k$ in $Z_{\sigma^n}$ is called a {\em
$k$-period} of $Z_{\sigma^n}$; it has the form $
\widetilde{\theta_{\sigma}^i (a_1)} \cdots
\widetilde{\theta_{\sigma}^{i}(a_k)}$ ($i=0, \cdots, n-1$)  .
 We use vertical bars to mark periods, which gives
$$Z_{\sigma^n}=
 \tilde{a}_1\cdots \tilde{a}_k \mid
\widetilde{\theta_{\sigma}(a_1)}\cdots
\widetilde{\theta_{\sigma}(a_k)} \mid  \cdots \mid
\widetilde{\theta_{\sigma}^{n-1}(a_1)}\cdots
\widetilde{\theta_{\sigma}^{n-1}(a_k)}  \mid
\theta_{\sigma}^n.
$$
Vertical bars are  thus  located between letters with indices $kj$ and
$kj+1$. Moreover periods are labeled with indices $j$ running from $1$
to $n$.

\begin{remark}\label{remarksbis}
\begin{enumerate}
\item \label{rem:4}  Let $n,m$ be positive integers, and let $\sigma$, $\tau$ be episturmian substitutions such that $\sigma^n=\tau^m$.  At each index
of $Z_{\sigma^n}$ and $Z_{\tau^m}$, we have $\tilde{b}$ for the
same $b\in\mathcal{A}$ but possibly with different spins.

%\item All the spins are barred between $i$ and $k$ in $Z_{\sigma^n}$.
%\item  \label{rem:5}
\item  One has $Z_{\sigma^n } \equiv [\sigma^n]$.
\end{enumerate}
\end{remark}

\subsection{The normalization algorithm}\label{subsec:algo}
The normalization of $[\sigma]^n$ in the case where the normalized
directive word of the episturmian substitution $\sigma$ contains
only one letter $\tilde{a}$, i.e., all its letters are equal to
$\tilde{a}$ is provided in Section \ref{subsec:oneletter}.

 We now consider the case  where $[\sigma]$ contains at least two distinct  letters.
  If the word $Z_{\sigma^n}$  is not  the normal form of $\sigma^n$, then  this means  that there are occurrences of non-normal spinned words
of the form $\overline{a }\overline{b_1}
\cdots \overline{b_\ell} a$  in   $Z_{\sigma^n}$, with  the letter  $a$ and  the letters   $b_i$'s  in ${\mathcal A}$. This leads us to introduce  the following definition.
We use the term ``simple''  in next definition   in order to avoid any confusion with  errors  called ``propagated''  below.

\begin{definition}[Simple error]
A \emph{simple $a$-error} is  an occurrence in
$Z_{\sigma^n}$   of a  factor of the form  $$ \overline{a}
\overline{b_1} \cdots \overline{b_\ell} a,$$
with  the letter  $a$ and  the letters   $b_i$'s  in ${\mathcal A}$ (and  the block $b_1\cdots b_\ell $ being possibly empty).  The letter $a$ is
referred  to as the {\em letter of the error}. A simple error is an $a$-error for some $a$ in ${\mathcal A}$.
\end{definition}

We assume in all this section  that there exists a simple
error in $Z_{\sigma^n}$.
 The leftmost occurrence of a simple error plays a
significant role, as stressed by Lemma \ref{lem:left}  (see Section \ref{subsec:secondlevel} below), which
basically says that all simple errors in $Z_{\sigma^n}$ are
defined by the leftmost error,  and that  there is only one possible simple error at the boundary of two periods. % (see   Section  \ref{subsec:propagation}   for  its  statement and its proof).

After correction of the simple errors using block-transformations
\eqref{eq:block_transf}, new simple errors might occur (see
Example \ref{ex:typeII}). This yields  the following  definition.
\begin{definition} [Propagated error of  the  first level] \label{def:maximal}
 Let $a \in {\mathcal A}$ and let  ${{\mathcal A} '}=
{ {\mathcal A} \setminus \{a\}}$.
A  {\em
propagated error of  the  first level}  is  defined as
an occurrence of  a factor  of $Z_{\sigma^n}$  of  the form
 \begin{equation}\label{eq:max_factor} \overline{a} ( \overline{{\mathcal A}')
}^* \overline{a}( \overline{{\mathcal A}') }^*  \overline{a}
\cdots \overline{a} ( \overline{{\mathcal A}') }^*  a(
\overline{{\mathcal A}') }^* a\cdots  a( \overline{{\mathcal A}')
}^* a = ( \overline{a} ( \overline{{\mathcal A}')}^*)^+ ( a
\overline{{\mathcal A}')}^*)^*a \end{equation}  that can neither
be extended  in $Z_{\sigma^n}$  to the left   by
$\overline{a}(\overline{{\mathcal A}'})^*$, nor to the right by
$(\overline{{\mathcal A}'})^*a$. We often write {\em propagated
$a$-error of the first level} referring to the role played by the
letter $a$.
\end{definition}

Indeed, after applying a block-transformation $\overline{a} (
\overline{{\mathcal A}') }^*  a \to a ( {\mathcal A}')^*
\overline{a}$ to the only simple error here, we encounter two new
simple errors immediately to the left and to the right from the
initial one, and so on inside the factors of the form
\eqref{eq:max_factor}. The terminology ``first level'' in the
above definition refers to the fact that there will be  possibly
another level of block-transformations if after correcting the
errors inside factors of the form \eqref{eq:max_factor} yet new
errors occur (see Example \ref{ex:typeIIIbasic}).

 We now   describe   the algorithm allowing one  to get the normal
form of $[\sigma]^n$  (which is  also the normal form of
$\sigma^n$),  when the normal form of $\sigma$ admits at least two distinct
 letters, $\sigma$ is assumed to have  at least one simple error
and  $ n\geq 2$. We  first move the permutation to the end. This
gives the word $Z_{\sigma^n}$.  We then correct the simple errors
simultaneously using block-transformations. The fact that we can
correct them independently comes from Lemma \ref{lem:left}.
% whose supports are   in $[i+kj, p+(j+1)k]$, for $j=0, \dots, n-1$.
If the outcome is  not in normal form, this means
  that simple errors     propagate
when  performing block-transformations, that is, there exist  propagated errors of the first level.
%We consider
%such  a  propagated error of the first level which
%contains    $k=|\sigma|$  in its support, i.e.,  it overlaps the first and second  periods. Let   $i$ stand for the index of its
%first  letter  in the word $Z_{\sigma^n}$ and   let $p$ be such
%that
% $p+k$ is the  index of its last letter.
 According to Lemma  \ref{lem:leftmost},  the  supports of propagated errors of the first level  do not intersect from  a period to the next one   and  the block-normalization is
 performed  at the boundary of each  pair of successive $k$-periods.
 Now, let    $Z'_{\sigma^n}$ stand for  the word
obtained after the first level of block-transformations.
%The word  $Z'_{\sigma^n}$  is
%obtained  after
%block-transformations  that are performed inside the supports $[i+kj,
%p+k(j+1)]$ ($0 \leq j < n-1$) of the errors of the first level.
The exact form of $Z'_{\sigma^n}$
 is provided in Lemma \ref{lem:propa}.
 If
$Z'_{\sigma^n}$ is not in normal form  (see Example
\ref{ex:typeIII}), we normalize it with the {\em second level of
block-transformations} consisting in this last round of
block-transformations to be performed.    Assertion \ref{III:vi}
from Lemma \ref{lem:type3} provides the details of the  second
level of block-transformations to be performed  which   are proved
to   lead to the  normal form of $[\sigma^n]$.

By Lemma \ref{lem:type3pre},  after the first level of
block-transformations, if a new simple error is created in the
$j$-th $k$-period,  with $j=2, \dots, n-1$, then
 new simple errors are created in each $k$-period of index $j$,  with  $j=2, \dots, n-1$ (all periods except for the first one and the last one). This justifies the following definition.

\begin{definition}[Errors of type I and II]\label{def:typeII}
Let  $Z'_{\sigma^n}$ be the word   obtained  after  performing  the first level of block-transformations.
Propagated  errors of the first level are  said   of type I if  $Z'_{\sigma^n}$ is in  normal form, i.e.,  $Z'_{\sigma^n}=[\sigma^n]$.
They are  said of type II otherwise.
In other words,  all the propagated  errors of the first level   of the  word $Z_{\sigma^n}$   are  of type II if $Z'_{\sigma^n}$ is not  in  normal form.
For sake of simplicity in the terminology,     an error  of type I (resp. II) refers to a   propagated  error of the first level of type I (resp. II).
We also say that  $Z_{\sigma^n}$ has    errors  of type I  (resp. II).\end{definition}

We will rely on the  following notation.
\begin{notation}\label{notation}
% The letter  of the leftmost
%error (if any)  is  denoted by $a$.
The  notation $k$ stands  for
 the length of $\sigma$,   $i$ for the  index of occurrence of the  leftmost
propagated error of the first level in $Z_{\sigma^n}$,   and $k+p$  for the  last index of the leftmost
propagated error of the first level, i.e.,  $[i,k+p]$ is the support of  the leftmost
propagated error of the first level.  The letter  of the leftmost
error (if any)  is  denoted by $a$.
\end{notation}

 The second level of   block-transformations
is described in detail in  Lemma
\ref{lem:type3}, Assertion \ref{III:vi} below. Let us give a brief overview of this second level.
One has  $Z'_{\sigma^n}[i+kj]=a$ and
$Z'_{\sigma^n}[p+k+kj]=\overline{a}$ for $j=0, \dots k-2$, by
Lemma \ref{lem:propa}, Assertions \ref{IIii} and \ref{IIiii}.
By Lemma \ref{lem:type3pre},
simple errors that occur in $Z'_{\sigma^n}$
when errors are of type II are factors of the form
$\overline{a} \overline{b_1}\cdots\overline{b_t} a$ occurring at
positions $[p+kj, i+kj]$ for $j=1,\dots, k-2$. These simple errors
can also propagate to blocks of $\tilde{a}$ of length greater than
2 and including positions $i+kj$ and $p+kj$. The second level
of block-transformations consists in normalizing these errors,
after which we obtain the normal form of $\sigma^n$.

\subsection{Examples}\label{subsec:examples}

After correction of the simple errors using block-transformations
\eqref{eq:block_transf}, new simple errors might occur. We first
give an example that illustrates the first level
block-transformations that  has to be performed.

\begin{example}\label{ex:typeII}
Let  $\sigma$ with normal form $$[\sigma]=a \overline{b} a c
\overline{ad ae }$$
% $$[\sigma]=a
%\overline{b} a c \overline{ad ae }d$$ %S: must be no last d
with $a,b,c,d,e$ being
distinct letters and   $\theta_{\sigma}$   being  the identity.

%This example will be continued in  Example \ref{ex:typeII}. \tcr{TODO.}

In order to  normalize $Z_{\sigma^3}$ with
$$Z_{\sigma^3}=a \overline{b} a c
\overline{a}\overline{d} \tcb{\overline{a}\overline{e} \mid  a}
\overline{b} a c \overline{a}\overline{d}
\tcb{\overline{a}\overline{e} \mid a} \overline{b} a c
\overline{a}\overline{d} \overline{a}\overline{e} ,$$ first, we  fix   the  simple $a$-errors   that occur  at the boundaries of  the  periods,
i.e., occurrences of $\overline{ae}| a.$ Their letters   are  marked with blue.
 Their  block-normalization  gives
$$Z_{\sigma^3}=a \overline{b} a c
\overline{a}\overline{d} \tcb{\overline{a}\overline{e} \mid  a}
\overline{b} a c \overline{a}\overline{d}
\tcb{\overline{a}\overline{e} \mid a}
\overline{b} a c
\overline{a}\overline{d} \overline{a}\overline{e}  \equiv  a \overline{b} a c
\overline{a}\overline{d}
\tcb{{a}{e} \mid \overline{a}}
%\tcb{\overline{a}\overline{e} \mid  a}
\overline{b} a c \overline{a}\overline{d}
\tcb{{a}{e} \mid \overline{a}}  \overline{b} a c
\overline{a}\overline{d} \overline{a}\overline{e} .$$

 We see that, after having  normalized the  simple errors, new  simple errors occur  (i.e., occurrences of  factors $\overline{a} \overline{d} a $    and  $\overline{a} \overline{b} a $), which induces a
 propagation  in the  block-transformations to be performed; the involved letters    are marked with red:
    $$Z_{\sigma^3} \equiv a \overline{b} a  { c}
{\color{red}\overline{a}\overline{d}
{a}{e} | \overline{a} \overline{b}
a} c{\color{red}\overline{a}\overline{d}
{a}{e}| \overline{a} \overline{b}
a } c \overline{a}\overline{d}
\overline{a}\overline{e}.$$

 After performing the corresponding  block-normalization,  and by  noticing that
the letter $c$  with spin $0$ stops the propagation,  we get
the normal form of $\sigma^3$:

$$Z'_{\sigma^3}=[\sigma^3]= a \overline{b} a  {c}
{\color{red}a d \overline{a}e | a b
\overline{a}} {c} {\color{red}a d
\overline{a}e | a b \overline{a} } c
\overline{a}\overline{d}
\overline{a}\overline{e}.$$

%\tcr{ In this example  $k=|\sigma|=8$,  and the support  of  the leftmost  $a$-error, which  is  of type I  is $[5,11]$, with $i=5$, $p=3$.}
\end{example}

\begin{example}[Error of type I with a permutation]\label{ex:typeIIperm}
Let  $\sigma$ with normal form $$[\sigma]=c \overline{b} a
c\overline{b} c \overline{a} \overline{b} \  \theta_{\sigma} $$
with $a,b,c$ being distinct letters,  and  $\theta_{\sigma}$ being
the permutation $\theta_{ac}$ exchanging $a$ and $c$.

We have

$$Z_{\sigma^3}= c \overline{b} a
c\overline{b} c {\color{red} \overline{a} \overline{b} \mid a }
\overline{b} c a\overline{b} a {\color{red} \overline{c}
\overline{b} \mid c } \overline{b} a c\overline{b} c \overline{a}
\overline{b}  \mid \theta_{\sigma} ^3.
$$

This word contains two  propagated errors of the first level; the letters of their  supports  are marked with red: the
first one is of letter $a$, and the second one is of letter $c$.
Normalizing these errors, we get the normal form $[\sigma^3]$ of  $Z_{\sigma^3}$:

$$Z'_{\sigma^3}=[\sigma^3]= c \overline{b} a
c\overline{b} c {\color{red} a b \mid \overline{a} } \overline{b}
c a\overline{b} a {\color{red} cb \mid \overline{c} }\overline{b}
a c\overline{b} c \overline{a} \overline{b} \mid \theta_{\sigma}^3.
$$

In this example  $k=|\sigma|=8$,  and the support  of  the
leftmost  $a$-error of type I  is $[7,9]$, with $i=7$, $p=1$.
\end{example}

The following  example is  an illustration of the second level of  block-transformations.
 \begin{example}   \label{ex:typeIIIbasic} Let   $\sigma$ with  $[\sigma]=a
\overline{bca} $, with $a,b,c$   distinct letters  and the  normal permutation $\theta_{\sigma}$ being  equal to the identity.
%(we recall that the letter $\theta_{Id}$ is  in bijection with the identity).
Letters in red  below indicate simple errors that occur at the boundary between two periods (and they coincide with the  leftmost   propagated errors of  the  first level), whereas   letters in blue  refer to
simple errors created after performing this first level of  block-transformations  on  letters in red.  One has
$$\sigma^3=  Z_{\sigma^3}= a
\overline{bc} {\color{red} \overline{a}} \mid  {\color{red}{ a}}
\tcb{\overline{bc}}  {\color{red} \overline{a} } \mid
{\color{red}{ a}} {\color{blue}\overline{bc}} \overline{a}  ,$$
$$Z'_{\sigma^3}=  a  \overline{bc}   {\color{red}{a}} \mid  {\color{red}\overline{a}}     {\color{blue}\overline{bc} }  {\color{red}{a}} \mid  {\color{red}\overline{a}}  {\color{blue}  \overline{bc}}  \overline{a}.$$
We notice that  $Z'_{\sigma}$  is not in normal form because of the factor ${\color{red}\overline{a}}     {\color{blue}\overline{bc} }  {\color{red}{a}}$.
Lastly  we get for the normal form  $[\sigma^3]$ of $\sigma^3$
$$[\sigma^3]=a \overline{bc} a \mid
a  {bc} \overline{a} \mid
\overline{a}
%\psi_b\psi_c
 \overline{b}\overline{c}
\overline{a} .$$
\end{example}

\begin{example}[Error of type II]\label{ex:typeIII}

Let $\tau$ with normal form $$[\tau] =a \overline{b} a
\overline{c} \overline{a}\overline{d} \overline{a}\overline{e} $$
with pairwise distinct letters $a,b,c,d,e$ and $ \theta_{\tau} $
being  the identity. The difference with $\sigma$ with normal form
$[\sigma]=a \overline{b} a c \overline{ad ae }  $   (handled
previously  in Example \ref{ex:typeII}) is that the letter $c$
occurs  with a  spin equal to $1$  (i.e., as $\overline{c})$. Let
us normalize $\tau^3$. The first  level is   the same as for
$\sigma$ (up to bars on $c$'s). We thus first normalize the blocks
$\overline{a}\overline{e} | a$. The letters  in the supports of
the  propagated errors of the first level  are  marked by red;
the letter that causes the  second level  of  block-transformations
is marked by blue (before and after normalization):

$$Z_{\tau^3} = a \overline{b} a \overline{c}
{\color{red}\overline{a}\overline{d}
\overline{a}\overline{e} | a \overline{b}
a } {\color{blue}\overline{c}} {\color{red}
\overline{a}\overline{d}
\overline{a}\overline{e}| a \overline{b}
a } \overline{c} \overline{a}\overline{d}
\overline{a}\overline{e} .$$
The first level  of  block-transformations gives as before:
$$Z'_{\tau^3}= a \overline{b} a \overline{c}
{\color{red}a d \overline{a}e | a b
\overline{a} } {\color{blue}\overline{c}}
{\color{red}a d \overline{a}e | a b
\overline{a} } \overline{c}
\overline{a}\overline{d}
\overline{a}\overline{e} .$$

This  creates  the   $a$-error $  \overline{ac}  {a}$, and the
second  level  of  block-transformations produces the normal form $[\tau^3]$ of $\tau^3$:

$$[\tau^3] = a \overline{b} a \overline{ c}
{\color{red}a d \overline{a}e | a b
a } {\color{blue}c} {\color{red}\overline{a} d
\overline{a}e | a b \overline{a} }
\overline{c} \overline{a}\overline{d}
\overline{a}\overline{e} .$$

One has $ k=8$, $i=5$, $p=3$. We see that  the    full propagation of the errors during the normalization process involves letters whose supports form an interval that overlaps all the periods, namely $[i,p+(n-1)k]=[5,19]$.

\end{example}

%\subsection{The case of a one-letter normal word} \label{subsec:oneletter}
% In this section, we handle separately the normalization in
%the case where the normalized directive word of the episturmian
%substitution $\sigma$, also assumed to be of positive length,
%contains only one letter $\tilde{a}$, i.e., all its letters are
%equal to $\tilde{a}$ (this will be used in Lemma
%\ref{lemma:oneletter}). We thus have for the normal form of
%$\sigma$:
%$$[\sigma]=a^{s}\overline{a}^t\theta_{\sigma}$$ for some integers
%$s,t\geq 0$,  with $s+t\geq 1$.
%
%Consider first  the case where $\theta_{\sigma}(a)=a$. After
%applying several times the block-transformation $\overline{a}a\to
%a\overline{a}$, we get   for the normal form of $\sigma^n$, for
%$n \geq 1$:
%$$[\sigma^n]=a^{sn}\overline{a}^{tn}\theta_{\sigma}^n.$$
%
% Now consider the case where $\theta_{\sigma}(a)\neq a$.
%To get the normal form of $[\sigma^n]$ from $[\sigma]^n$, we only
%have to move the permutation to the right, using the rule
%\eqref{eq:permut}. So,
%$$[\sigma^n]=a^{s}\overline{a}^{t} \mid
%\theta_{\sigma}(a)^{s}\overline{\theta_{\sigma}(a)}^{t} \mid
%(\theta^2_{\sigma}(a))^{s}\overline{\theta^2_{\sigma}(a)}^{t} \mid
%\cdots \mid
%(\theta^{n-1}_{\sigma}(a))^{s}\overline{\theta^{n-1}_{\sigma}(a)}^{t}
%\mid \theta_{\sigma}^n.$$
%
%

\subsection{First level of  block-transformations}\label{subsec:secondlevel}
Let $\sigma$ be an episturmian substitution of positive  length $k$.
 We  first prove prove that all simple errors in $Z_{\sigma^n}$ are
defined by the leftmost error,  and that there is at most one
simple error at each boundary of two periods.

\begin{lemma} \label{lem:left}
Let $\sigma$ be an  episturmian substitution such that   $Z_{\sigma^n}$ contains   a simple error, with $ n\geq 2$. The occurence of the  leftmost  simple error  is at
the boundary between the first and second periods.   In particular,
it contains in its support the integer $k= |\sigma|$. Let $a$ stand for its letter.
For any $j$ with $1\leq j \leq n-1$,  one also  gets at the boundaries   of  the $j$-th   and the $j+1$-th
periods  a  simple $\theta_{\sigma}^{j-1} (a)$-error
 in $Z_{\sigma^n}$. Moreover, the supports of the simple errors  in
$Z_{\sigma^n}$ are $k$-periodic, %come from  the leftmost one %S: Reformulated to give a precise statement, added an informal explanation before the statement
 and there is only one
possible simple error at the boundary of two  periods.
\end{lemma}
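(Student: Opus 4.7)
The plan is to exploit two basic structural features of $Z_{\sigma^n}$: its spin pattern is exactly $k$-periodic, and the underlying letters in the $i$-th period are the $\theta_\sigma^{i-1}$-images of those in the first period, where $\theta_\sigma$ is a bijection on $\mathcal{A}$.

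First I would rule out simple errors entirely contained in a single period. The $i$-th period of $Z_{\sigma^n}$ is $\widetilde{\theta_\sigma^{i-1}(a_1)}\cdots \widetilde{\theta_\sigma^{i-1}(a_k)}$, which has the same spin profile as $w_\sigma$; since $w_\sigma$ is in normal form and $\theta_\sigma^{i-1}$ is injective, this factor contains no simple error. Hence any simple error in $Z_{\sigma^n}$ must straddle a boundary between two consecutive periods, and its support must therefore contain both boundary indices $jk$ and $jk+1$ for some $j$. The leftmost such error consequently occurs at the first boundary, and its support contains $k$. A side observation I would add at this point is that no simple error can span more than two consecutive periods: if it did, its barred middle would cover an entire period, which by $k$-periodicity of spins would force every $\tilde a_j$ to be barred, leaving no room for the closing unbarred letter of the error.

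Next I would establish the description of errors at arbitrary boundaries together with the $k$-periodicity of their supports. Writing the leftmost simple error as $\overline a\,\overline{b_1}\cdots\overline{b_\ell}\,a$ with support $[k-s+1,\, k+t]$ encodes $a_{k-s+1}=a$, $\theta_\sigma(a_t)=a$, the spin pattern ``barred everywhere except at the rightmost position'', and the absence of $a$ among the intermediate underlying letters. Shifting this window by $(i-1)k$ positions reads the same spin pattern by $k$-periodicity, while the underlying letters become their $\theta_\sigma^{i-1}$-images on the left side of the boundary and $\theta_\sigma^i$-images on the right side. Using $\theta_\sigma^i(a_t)=\theta_\sigma^{i-1}(a)$ together with the injectivity of $\theta_\sigma^{i-1}$ to rule out $\theta_\sigma^{i-1}(a)$ among intermediate letters, the factor read off at the shifted window is exactly the simple $\theta_\sigma^{i-1}(a)$-error obtained by applying $\theta_\sigma^{i-1}$ letter-by-letter to the leftmost error, with support $[ik-s+1,\, ik+t]$. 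This simultaneously yields both the promised letter at the $i$-th boundary and the $k$-periodicity of supports.

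Finally I would prove uniqueness of the simple error at each boundary by contradiction. Suppose two distinct simple errors with supports $[p_1,q_1]$ and $[p_2,q_2]$ both straddle the boundary $ik$. Both supports contain $\{ik,ik+1\}$ and are intervals, so they overlap. Since the spin pattern of any simple error is ``barred everywhere except at the last position'', matching spins at $q_1$ and $q_2$ forces $q_1=q_2$; the unbarred letter at this common final position then determines the error's letter $c$ uniquely, so both errors share the same letter $c$. If $p_1<p_2$, then position $p_1$ lies strictly inside the middle of the second error but carries the letter $\overline c$, contradicting the fact that the middle of a simple $c$-error contains no occurrence of $c$. Hence $p_1=p_2$ and the two errors coincide. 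The most delicate point in the whole argument is the clean separation between the $k$-periodic spin pattern and the non-trivial action of $\theta_\sigma$ on letters across periods; once this bookkeeping is in place, all three assertions reduce to the normal-form property of $w_\sigma$ combined with the bijectivity of $\theta_\sigma$.
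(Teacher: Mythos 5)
Your proof is correct and follows essentially the same route as the paper's (which is far terser): each period is a letterwise permutation image of the normal-form word $w_\sigma$, so simple errors must straddle period boundaries, and the $k$-periodicity of spins together with the bijectivity of $\theta_\sigma$ transports the leftmost error to every boundary; you merely spell out the uniqueness claim that the paper leaves implicit. The only slip is in that uniqueness step, where ``position $p_1$ lies strictly inside the middle of the second error'' should read ``position $p_2$ lies strictly inside the middle of the first error'' (since $p_1<p_2\le ik<q_1$ puts $p_2$, which carries $\overline{c}$, among the letters $\overline{b_j}$ with $b_j\neq c$); with that index swap the intended contradiction goes through.
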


\begin{proof}
Simple errors  cannot occur inside
the $k$-periods (blocks of length $k=|\sigma|$), since the word
$Z_{\sigma^n}$ is a concatenation of words of length $k$ in normal
form. Due to $k$-periodicity of bars and letters (modulo the
permutation $\theta_{\sigma}$), simple errors occur with
period $k$. So, the support of the  leftmost  simple error  crosses the boundary
 between the first and second periods. The fact that there is only one
possible simple error at the boundary of two  periods comes from the shape of simple errors.
\end{proof}

We now prove the  important property  that,  in the first level,
 propagated errors do not affect each other from one $k$-period to another, since their supports do not intersect (see Lemma \ref{lem:leftmost}).
Consequently, if there are no more simple errors after this first level  of  block-transformations, then
 the changes in the
process of normalization are local within the  intervals $[i+kj,
p+k(j+1)]$ ($j=0,\ldots,{n-2}$). In particular, for the
corresponding substitutions, one gets $$\mu(\pref_{p+k}
([\sigma^n]))= \mu(\pref_{p+k}([\sigma]^n)).$$

 \begin{lemma}\label{lem:leftmost}
Let $\sigma$ be an episturmian substitution of positive length $k$
such that   $Z_{\sigma^n}$ contains   a simple error, with $ n\geq 2$. We consider
 a  propagated error of the first level which
contains $k$ in its support. Let   $i$ stand for the index of its
first  letter  in the word $Z_{\sigma^n}$ and   let $p$ be such
that
 $p+k$ is the  index of its last letter. One has   $ p < i$,  i.e., the  supports of propagated errors of the first level  do not intersect,
 and    the block-normalization is thus
 performed  at the boundary of each  pair of successive $k$-periods.

 \end{lemma}
 \begin{proof} Let $a$ be the letter of  the leftmost error in $Z_{\sigma^n}$.
In $Z_{\sigma^n}$ letters $\tilde{a}$   occur with   spin $0$
from index $k+1$ to $k+p$ (they are not barred), whereas
letters $\tilde{a}$ occur with  spin   $1$   from index $i+k$ to
$2k$, by $k$-periodicity. Hence if the letter of the next error  is also $a$, the next propagated error
of the first level starts  (at $i+k$) after the previous one ends
(at $p+k$).  Consider  now the case where   the letter of the next error  is $b$ with  $b\neq a$.
The  letter $a$  has  spin 0 at position $k+p$, and a  $b$-error  contains only  barred letters  for letters distinct from  $b$. So, the next  propagated error of  the  first level
 also starts at position $k+i>k+p$. This implies that the  supports of propagated errors of the first level  do not intersect from  a period to the next one.
\end{proof}

The  support of  the leftmost
propagated   error of the first level  in $Z_{\sigma^n}$ is thus   the
interval of integers $[i,p+k]$,   with $ 1 \leq i \leq k,$ $1 \leq p  \leq  k$.
There  is  also a  propagated  $\theta_{\sigma}(a)$-error  of  the first level  that starts at index  $i+k$.
So,  we have $i>1$ and $p<k$.
 Similarly,   there  is  also a  propagated  $\theta^{j}_{\sigma}(a)$-error  of  the first level  that starts at index  $i+jk$, for any  $0 \leq j < n-1$.
  The
initial positions of  the propagated errors  of the first level are $i+kj$,
%$1\leq i\leq k$,
$0 \leq j < n-1$, and their supports are $[i+kj, p+k(j+1)]$, for  $0
\leq j < n-1$. The word  $Z'_{\sigma^n}$, obtained after the first level of block-transformations,
is obtained  after
block-transformations  that are performed inside the supports $[i+kj,
p+k(j+1)]$ ($0 \leq j < n-1$) of the errors of the first level.
The exact form of $Z'_{\sigma^n}$
 is provided in Lemma \ref{lem:propa}   which
 describes  the   changes of spins  performed  during  the first level of block-transformations.

But before  stating  Lemma \ref{lem:propa}, we  first discuss and illustrate  a special case of the notion of a  propagated error of the first
level,  namely when it contains a simple error $\overline{a}a$ (clearly, this is
only possible at the index $k$, since $Z_{\sigma^n}[1,k]$ and
$Z_{\sigma^n}[k+1,2k]$ do not have  simple errors inside them). The
\emph{central part} of a propagated $a$-error of the first level
is then defined as the longest factor of the form $\overline{a}^+
\mid a^+$ that contains $k$ in its support (in particular
$Z_{\sigma^n}[k]=\overline{a}$ and $Z_{\sigma^n}[k+1]=a$).

\begin{example} \label{ex:central}
({\bf{Error of type I with a central part}})

%\tcr{Shall we move it to Section 4.5?} S: done

This example  illustrates  Assertion \ref{IIv} in Lemma
\ref{lem:propa} stated below.

Let $\sigma$ with normal form $[\sigma]=a^3  b \overline{a}^3
\overline{c} \overline{a}^3\overline{d} \overline{a}^3 $. By
considering $Z_{\sigma^3}$, we get  errors $\overline{a} a$ of the
first level. The letters of the supports of  the  propagated errors  of the first
level are marked in red below:

$$Z_{\sigma^3} = a^3  b {\color{red}\overline{a}^3
\overline{c} \overline{a}^3\overline{d} \overline{a}^3|a^3} b
{\color{red}\overline{a}^3  \overline{c}
\overline{a}^3\overline{d} \overline{a}^3| a^3} b \overline{a}^3
\overline{c} \overline{a}^3\overline{d} \overline{a}^3 .$$

During  the normalization, first we normalize  the  central part
$\overline{a}^3| a^3$, which gives:

$$\overline{a}^3| a^3 \to \overline{a}^2 a|
\overline{a} a^2\to \overline{a} a \overline{a}| a \overline{a} a
\to  a \overline{a} a | \overline{a} a \overline{a} \to a^2
\overline{a} |  a \overline{a}^2 \to a^3 | \overline{a}^3 .$$

Then we handle its propagation to the left (with the error
involving $d$):

$$ a^3  b {\color{red}\overline{a}^3
\overline{c} \overline{a}^3\overline{d} a^3 | \overline{a}^3} b
{\color{red}\overline{a}^3 \overline{c} \overline{a}^3\overline{d}
a^3 | \overline{a}^3} b \overline{a}^3  \overline{c}
\overline{a}^3\overline{d} \overline{a}^3 \to $$

$$ a^3  b {\color{red}\overline{a}^3
\overline{c} \overline{a}^2 a d \overline{a} a^2 | \overline{a}^3}
b {\color{red}\overline{a}^3 \overline{c} \overline{a}^2 a d
\overline{a} a^2 | \overline{a}^3} b \overline{a}^3  \overline{c}
\overline{a}^3\overline{d} \overline{a}^3 .$$

Now we fix twice the error $\overline{a}a$ in
$\overline{a}^2 a$ (which occurs twice after the  first  two occurrences of the letters $\overline{c}$):

$$ a^3  b {\color{red}\overline{a}^3
\overline{c} a \overline{a}^2  d \overline{a} a^2 |
\overline{a}^3} b {\color{red}\overline{a}^3 \overline{c} a
\overline{a}^2  d \overline{a} a^2 | \overline{a}^3} b
\overline{a}^3  \overline{c} \overline{a}^3\overline{d}
\overline{a}^3 .$$

Now we fix  the error involving $c$:

$$ a^3  b {\color{red}\overline{a}^2 a
c \overline{a}^3  d \overline{a} a^2 | \overline{a}^3} b
{\color{red}\overline{a}^2 a c \overline{a}^3  d \overline{a} a^2
| \overline{a}^3} b \overline{a}^3  \overline{c}
\overline{a}^3\overline{d} \overline{a}^3 $$

 and finally the part $\overline{a}^2 a$ between
 $b$ and $c$, as well as the part $\overline{a}
 a^2$ after  $d$, which gives:

$$[\sigma^3]=a^3  b {\color{red}a \overline{a}^2
c \overline{a}^3  d  a^2 \overline{a}| \overline{a}^3} b
{\color{red}a \overline{a}^2 c \overline{a}^3  d  a^2
\overline{a}| \overline{a}^3} b \overline{a}^3  \overline{c}
\overline{a}^3\overline{d} \overline{a}^3  .$$
\end{example}

Let us now state Lemma \ref{lem:propa}  which characterizes the word $Z'_{\sigma^n}$ obtained after the  first
level of block-transformations. We  assume that
$Z_{\sigma^n}$ contains  a  propagated error of the first level
and that the letter of  its leftmost one is $a$ (with $a \in {\mathcal A}$), and we use  Notation \ref{notation}.
 See also Figure \ref{fig:level2}  below for
an illustration of the case where there is no central part.

\begin{lemma}\label{lem:propa} Let $\sigma$ be an episturmian substitution of length $k \geq 1$.

%During   the second level of block-normalizations, the following holds.

\begin{enumerate}
%\item \label{IIsupp} One has $ 1 \leq p<i$. {\color{red} S: This was stated in Lemma \ref{lem:leftmost}. Remove?}
%\item \label{IIsuppbis}
%If  the errors are of type I and $\theta_{\sigma}(a)=a$,
%then $ p < i-1$,  and  there  exists an
%occurrence of a letter $b \neq a$  in $[k+p+1, k+i-1]$, and
%thus also in $[p+1, i-1]$. S: stated as a corollary after Lemma
%\ref{lem:type3pre}
\item\label{IIi}    If a letter  $\tilde{b}$, with $b \in {\mathcal A}$ and  $b \neq a$,   occurs  in the support  $[i,k+p]$ of the  leftmost propagated  error  of the first level in   $Z_{\sigma^n}$, then it occurs  with spin $1$ (i.e., as  $\overline{b}$)    in   $Z_{\sigma^n}[i,k+p]$, and  with spin $0$ in $Z'_{\sigma^n}[i,k+p]$.
\item  \label{IIii}  The first   letter   of the  leftmost  propagated error  of the first level   occurs   (at index $i$) as  $\overline{a}$ in $Z_{\sigma^n}$, and  as  $a$ in $Z'_{\sigma^n}$.
\item \label{IIiii} The  last letter    of the  leftmost error  propagated error  of the first level  occurs  (at index $k+p$) as
$a$  in $Z_{\sigma^n}$, and as   $\overline{a}$ in  $Z'_{\sigma^n}$.
 \item \label{IIbis}
 The spins  of the letter $a$  in the leftmost  propagated  error  of the first level are    equal to $1$  before index $k$ included and  to $0$  after index $k$ in $Z_{\sigma^n}$.

\item \label{IIiv}  If the leftmost propagated  error  of the first
level has no central part,
% \tcr{, i.e. the leftmost simple error contains
%a letter  $b$ with $b \neq a$,
% leftmost %propagated  error  of the first level
%simple error %{\color{red}S: corrected, please check}
%in
%$Z_{\sigma^n}$ (which thus contains $k$ in its support) contains a
%letter  $b$ with $b \neq a$ (i.e., there is no central part)
 then
the spins of all the occurrences of $a$ in $[i,k+p]$, except the
first and last ones, coincide in $Z_{\sigma^n}$  and
$Z'_{\sigma^n}$.

\item  \label{IIv}
Otherwise,   the leftmost propagated  error  of the first level
  contains a central part       of the form $\overline{a} ^+ \mid  a ^+$ in $Z_{\sigma^n}$ with  $Z_{\sigma^n}[k]=\overline{a}$ and $Z_{\sigma^n}[k+1]=a$.

Let $r,q,r',q'$ be such that the   central part is of the form
$\overline{a }^r  \mid a ^q$ and  its normalization  is of the
form $a ^{q'}   \overline{a} ^{r'}$.
\begin{enumerate}
\item If  the leftmost  propagated error of the first level  contains only  letters $a$,  then  $r'=r$ and  $q'=q$.
\item  If the   leftmost   propagated error  of the first level contains an occurrence of   a  letter $\overline{b}$  with $b\neq a$   at an index  smaller  than $k$ and
%$Z_{\sigma^n}(k+1 \cdots p)= {\psi}_a ^{p-k} $, i.e.,
 if it contains only occurrences of the letter $a$
after index $k+1$,  then
 $r'=r+1$  and  $q'=q-1$.
\item  If the  leftmost propagated error  of the first level contains an occurrence
of a letter $\overline{c}$   with $c\neq a$ at an index bigger
than $k$
 and if  it contains only occurrences of the letter $a$
before  index $k$,  then   $r'=r-1$  and  $q'=q+1$.

%\item  if    the  central part  in $Z_{\sigma^n}$ is followed by a letter $c$   different from $a$  in the   leftmost error
% and it contains only occurrences of the letter $a$
%before  index $k$,  then   $r'=r-1$  and  $q'=q+1$;

\item  If  the leftmost propagated error  of the first level  contains an occurrence of a  letter $\overline{b}$  with $b\neq a$   at an index  smaller  than $k$
and an occurrence of a letter $\overline{c}$   with $c\neq a$ at
an index bigger than $k$,   then $r'=r$ and $q'=q$.  Moreover,
except for the central part,  for the first and the last occurrences of
$a$,  the  spins  of the occurrences of the letter $a$ inside the leftmost
propagated error  of the first level remain unchanged.

%\item    if the  central part   in $Z_{\sigma^n}$ is  surrounded by  letters $b$ and $c$   different from $a$ in  the  leftmost error, i.e.,
% $\cdots  \overline{ b }  \overline{a }^+ \mid  a ^+  \overline{ c }\cdots$, {\color{blue}and the support of the leftmost error of the second
% level contains these $\overline{ b }$ and $\overline{ c }$
% (in other words, the error propagates to the left and to the
% right from the central part)},
% then   $r'=r$ and $q'=q$.  Moreover, except for the   first and last  occurrences of  $a$,  the  spins  of the occurrences of the   letter  $a$   inside the  propagated error  of the second level  remain unchanged on the left of $b$, and on the right of $c$.

\end{enumerate}
\item \label{II9} The substitution   $ \mu(\pref_{k+p} Z_{\sigma^n})$
coincides with the substitution $\mu( \pref_{k+p} Z'_{\sigma^n} )$.
In particular, there exists an episturmian substitution  $\varrho$ such that
the substitution  $\mu(\pref_{k+p} Z'_{\sigma^n})$  is equal to
$\sigma \circ \varrho$.
\item \label{II10}  Assume that the errors  are of type I, i.e., by definition, $[\sigma^n]=  Z'_{\sigma^n}$.
If  an occurrence of the   letter $\tilde{a}$ changes its spin in  $[i+1,k]$, then there
exists a   central part,  and   its  index of occurrence   belongs
to the central part. In addition, the changes in the process of
normalization are local within the  interval $[i+kj, p+k(j+1)]$
($j=0, \ldots, {n-2}$). In particular, $\mu(\pref_{p+k}
([\sigma^n]))=\mu(\pref_{p+k}([\sigma]^n))$.
 \end{enumerate}
\end{lemma}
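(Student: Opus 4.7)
The plan is to establish Assertions (1)--(10) by a careful bookkeeping of the block-transformations that constitute the first level of normalization, leveraging the structural description of a propagated error of the first level in Definition \ref{def:maximal}, the $k$-periodicity of errors from Lemma \ref{lem:left}, and the non-overlap of consecutive error supports guaranteed by Lemma \ref{lem:leftmost}. I would begin with Assertions (1)--(4), which describe the static shape of the leftmost propagated error in $Z_{\sigma^n}$. From Definition \ref{def:maximal}, the error takes the form $(\overline{a}(\overline{\mathcal{A}'})^*)^+(a(\overline{\mathcal{A}'})^*)^* a$, which immediately yields Assertion (1) (any letter $b \neq a$ appears barred in the support) together with Assertions (2), (3) regarding the spins of the endpoints $i$ and $k+p$. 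For Assertion (4), since the error is maximal and straddles the boundary at $k$, the letters $a$ inside the error at indices $\leq k$ inherit spin $1$ from the suffix of the first $k$-period while those at indices $> k$ inherit spin $0$ from the prefix of the second $k$-period; this assignment is forced by the definition of maximality together with $k$-periodicity.

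For Assertion (5), I would proceed by induction on the number of inner blocks $\overline{a}\overline{v}a$ identified in the error. The absence of a central part forces the error to contain a $\overline{b}$ with $b\neq a$ straddling the boundary at $k$, so the error decomposes as $\overline{a}\overline{v_0}\overline{a}\cdots\overline{a}\overline{v_m}av_{m+1}a\cdots v_{2m'}a$, where the transition block $\overline{v_m}av_{m+1}$ crosses the boundary. Applying the block-transformation innermost-first, each step $\overline{a}\overline{v_j}a \to av_j\overline{a}$ flips only the two outermost $a$-letters of that sub-block while preserving internal spins; after all transformations cascade, one verifies that only the letters $a$ at positions $i$ and $k+p$ have changed spin, as claimed.

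The main difficulty is Assertion (6), where the central part $\overline{a}^r \mid a^q$ straddling the boundary $k$ must be tracked through the normalization in interaction with the $\overline{b}$-blocks on either side. I would first verify the identity $\overline{a}^r \mid a^q \equiv a^q \mid \overline{a}^r$ by iterating $\overline{a}a \to a\overline{a}$, which directly gives Case (a) with $r' = r$ and $q' = q$. For Case (b), after normalizing the central part, the nearest $\overline{a}$ to the left of it, followed by a $\overline{b}$-block and then the newly produced $a$, forms a fresh simple error $\overline{a}\overline{v}a$; normalizing it and cascading further to the left through the remaining barred letters converts exactly one $a$ at the central-part positions into an $\overline{a}$, yielding $r' = r+1$ and $q' = q - 1$. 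Case (c) is handled symmetrically by propagation to the right, and Case (d) follows by observing that the two one-sided effects act on opposite sides of the boundary and their contributions offset each other, so $r' = r$ and $q' = q$ with all intermediate occurrences of $a$ fixed. Running the scheme against the template of Example \ref{ex:central} serves as the key consistency check.

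Finally, for Assertions (9) and (10), I would observe that all block-transformations performed at the first level are confined to the support of the propagated error, so $\pref_{k+p}(Z_{\sigma^n})$ and $\pref_{k+p}(Z'_{\sigma^n})$ are block-equivalent; by Theorem \ref{theo:spinned} they define the same substitution under $\mu$, and splitting this prefix into the full first $k$-period (which gives $\sigma$) followed by the first $p$ letters of the second period (which give an episturmian $\varrho$) shows that the substitution factors as $\sigma \circ \varrho$. Assertion (10) then combines the spin-preservation established in Assertions (5) and (6) (no $a$-letter in $[i+1,k]$ changes spin outside a central part) with the non-overlap of error supports from Lemma \ref{lem:leftmost} to obtain locality of the changes inside each window $[i+kj, p+k(j+1)]$; a final application of Theorem \ref{theo:spinned} to the length-$(p+k)$ prefixes yields $\mu(\pref_{p+k}([\sigma^n])) = \mu(\pref_{p+k}([\sigma]^n))$.
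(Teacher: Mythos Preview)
Your proposal is correct and follows essentially the same approach as the paper: both arguments read off Assertions (\ref{IIi})--(\ref{IIbis}) directly from the shape of the propagated error in Definition~\ref{def:maximal}, then handle Assertions (\ref{IIiv}) and (\ref{IIv}) by explicitly carrying out the cascade of block-transformations (simple error at the boundary first, then propagate outward), and finally deduce Assertions (\ref{II9}) and (\ref{II10}) from block-equivalence of prefixes together with locality of the transformations. The only cosmetic difference is that the paper writes out the full normalized form of the error in one line rather than phrasing it as an induction on the number of inner blocks, and for Assertion~(\ref{II9}) it invokes the permutation-shifting relation~(\ref{eq:permut}) rather than Theorem~\ref{theo:spinned}, but these amount to the same computation.
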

\begin{proof}
%One has $  k< p  +k < k+i$. Indeed,
%in $Z_{\sigma^n}$ letters $\tilde{a}$   occur with  $0$   spin
%from index $k+1$ to $k+p$ (they are not barred), whereas all
%letters $\tilde{a}$ occur with  spin   $1$   from index $i+k$ to
%$2k$ (see (\ref{eq:zn})). Therefore, the next propagated error
%of the first level starts  (at $i+k$) after the previous one ends
%(at $p+k$). By $k$-periodicity, we get  Assertion (\ref{IIsupp}).

%{\coor{red}Assertion \ref{IIsuppbis} is a corollary of Lemma
%\ref{lem:type3pre}. Which is in turn a corollary from some of the
%statements of this lemma. S: This is really bad, Lemma
%\ref{IIsuppbis} seems to be unseparable from this lemma... Or
%maybe we can just remove this item and put it into lemma
%\ref{lem:type3pre}}

The first part of Assertion (\ref{IIi}) and Assertion
(\ref{IIbis}) follow from the definition of the propagated error
of the first level.

The other assertions are derived from the block-normalization process  which works as
follows. First, we consider the  case where there is no  central part. The
block-normalization  of the propagated error of the first level (where the letters $c_{ef}$ and $d_{gh}$ belong to ${\mathcal A} \setminus \{a\}$  according to Definition \ref{def:maximal})

$$ \overline{a} \overline{c_{11}} \cdots
\overline{c_{1m_1}}  \cdots \overline{a}
\overline{c_{t1}} \cdots \overline{c_{tm_t}}
 \overline{a}
\overline{b_1} \mid  \cdots \overline{b_l} a \overline{d_{11}}
\cdots \overline{d_{1s_1}} a \overline{d_{u1}} \cdots
\overline{d_{us_u}}  a  $$ yields
$$
 a {c_{11}} \cdots
{c_{1m_1}}  \cdots \overline{a} {c_{t1}} \cdots {c_{tm_t}}
\overline{a} {b_1} \mid  \cdots {b_l} a {d_{11}} \cdots {d_{1s_1}}
a {d_{u1}} \cdots {d_{us_u}}  \overline{a}.
$$

Indeed, we first fix the simple error  $\overline{a}
 \overline{b_1} \mid  \cdots \overline{b_l} a$ between two
consecutive periods by applying block-normalization. Then, we
consecutively fix in the same way the newly occurred errors $
\overline{a} \overline{c_{j1}} \cdots \overline{c_{jm_j}} a$ to
the left, one  by one, for $j=t, \dots, 1$, and also the  errors  $
\overline{a}\overline{d_{j1}} \cdots \overline{d_{ js_{j}}} a$ to the right,  for $j=1, \dots,u$, as in Example \ref{ex:typeII}. % and \ref{}
The   first and last occurrences of $\tilde{a}$ change their spin
once, hence Assertions (\ref{IIii}) and (\ref{IIiii}). Except for
the first and last  occurrences of $\tilde{a}$, letters
$\tilde{a}$ inside the  propagated error of the first level change their spin twice,
and hence remain unchanged.  So, we proved the second part of
Assertion (\ref{IIi})  and Assertion (\ref{IIiv}).

%The second part of Assertion (\ref{IIi}) follows from the process
%of normalisation of the error: we first treat the simple error of
%the form $\overline{a} \overline{\mathcal{A}}'^* a$ between two
%consecutive periods by applying block-normalization, thus getting
%rid of bars for this series of letters from $\mathcal{A}'$. Then
%we consecutively treat in the same way the newly occurred errors
%to the left one by one,
%and also to the right, as in Example \ref{ex:typeII}.% and \ref{}

%  Assertions \ref{IIii} and \ref{IIiii} also follow from the normalisation process:
%  The first letter loses the bar when applying block-normalisation to the first block $\overline{a}
%\overline{\mathcal{A}}'^* a$, and the last letter gets the bar
%when applying block-normalisation to the last block $\overline{a}
%\overline{\mathcal{A}}'^* a$ (in fact, they are the last blocks to
%treat in the process of normalisation).

Now, consider the case where there is a   central part (Assertion
(\ref{IIv})), i.e., we need to normalize

$$ \overline{a} \overline{c_{11}} \cdots
\overline{c_{1m_1}}  \cdots \overline{a} \overline{c_{t1}} \cdots
\overline{c_{tm_t}}
 \overline{a}^r \mid a^q \overline{d_{11}}
\cdots \overline{d_{1s_1}} a \overline{d_{u1}} \cdots
\overline{d_{us_u}}  a  ,$$ similarly  as in Example \ref{ex:central}.

Case (a) corresponds to $t=0$ and $u=0$, and
using several times $\overline{a}a \to a \overline{a}$,  the block-transformations move the
bars to the right and yields
$a^q\overline{a}^r .$

Case (b) corresponds to $t>0$ and $u=0$, and the  block-transformations
(first inside the central part, then in the blocks to left of the
central part, and then again in the central part moving the bars
to the right) yield

$$
 a {c_{11}} \cdots
{c_{1m_1}}  \cdots \overline{a} {c_{t1}} \cdots {c_{tm_t}}
a^{q-1}\overline{a}^{r+1}.
$$

Cases (c) and (d) are proved exactly in the same way. Along the
lines, we also proved  Assertions (\ref{IIi}), (\ref{IIii}) and
(\ref{IIiii}) in the case when we have a central part.

For the proof of  Assertion (\ref{II9}), we use the fact that we can move  the permutations via (\ref{eq:permut}).

Assertion (\ref{II10}) is straightforward.

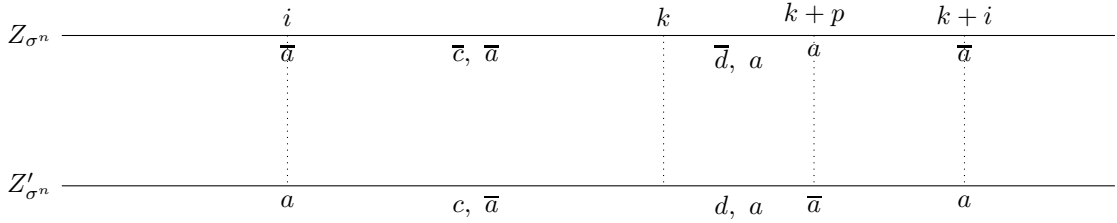
\begin{figure}
\begin{tikzpicture}
\draw  (1,0) -- (15,0);
\draw (4,0) node [above] {$i$};
\draw (4,0) node [below] {$\overline{a}$};
\draw (6.5,0) node [below] {$\overline{c}, \ \overline{a}$};
\draw (6.5,-2) node [below] {${c},\  \overline{a}$};
%\draw (6.5,-2) node [below] {${a}$};
\draw (4,-2) node [below] {${a}$};
\draw (10,0) node [below] {$\overline{d},\ {a}$};
\draw (10,-2) node [below] {${d},\ {a}$};
%\draw (10,-2) node [below] {$\overline{a}$};
\draw (9,0) node [above] {$k$};
\draw (11,0) node [above] {$k+p$};
\draw (11,0) node [below] {${a}$};
\draw (11,-2) node [below] {$\overline{a}$};
\draw (13,0) node [above] {$k+i$};
\draw (13,0) node [below] {$\overline{a}$};
\draw (13,-2) node [below] {${a}$};
\draw [dotted] (4,0)-- (4,-2);
\draw [dotted] (11,0)-- (11,-2);
\draw [dotted] (9,0)-- (9,-2);
\draw [dotted] (13,0)-- (13,-2);
\draw (1,0) node[left]{$Z_{\sigma^n}$};
\draw (1,-2) node[left]{$Z'_{\sigma^n}$};
\draw (1,-2) -- (15,-2);
\end{tikzpicture}
\caption{Illustration   of the transformations  performed  during  the first  level of block-transformations, in  the case where  there  is no central part.} \label{fig:level2}
\end{figure}
\end{proof}

The following technical proposition will be used in the proof of
Lemma \ref{lem:casepropa}. We state it for convenience, although it is a direct
consequence of Lemma \ref{lem:propa}, Assertion \ref{IIv}.
\begin{proposition} \label{prop:central}
 Let $\sigma$ be an episturmian substitution.
We assume that    $Z_{\sigma^n}$ contains  an $a$-error   having  a central part. Let $[i,k+p]$ stand for the  support of its leftmost   propagated error  of the first level.
Let $t$ be the  starting position of the central part  ($i\leq t \leq k$) and  let  $q$ be such  that its ending position is of the form  $k +q$
(with $k+q\leq k+p$), i.e.,
$$Z_{\sigma^n} [t,k]=\overline{a}^{k-t+1}, \quad Z_{\sigma^n} [k+1,k+q]={a}^{q}, \quad
Z_{\sigma^n} [t-1]\neq\overline{a}, \quad Z_{\sigma^n}
[k+q+1]\neq {a}.$$

Let $\ell$ be a
position in  the central part, i.e., $t\leq \ell  \leq k+q$. If
$[\sigma^n][\ell]=a$, then the following holds:

\begin{enumerate}
\item \label{prop:i}  if $t>i$ (the  leftmost   propagated error  of the first level does not start with the central part), then
$q\geq \ell-t+1$;

%In addition (although we do not really use it), if $r= l-t+1$,
%then $p>r$, i.e. the error does not end with the central part.

\item \label{prop:ii}  if $t=i$ (the leftmost   propagated error  of the first level starts with the central part), then
 $q\geq \ell-t$.

%In addition (although we do not really use it), if $r= l-t$,
%then $p>r$, i.e. the error does not end with the central part.
\end{enumerate}
\end{proposition}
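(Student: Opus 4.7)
The plan is to derive the proposition as a direct bookkeeping consequence of Lemma \ref{lem:propa}, Assertion \ref{IIv}. By assumption, the central part occupies positions $t,\dots,k+q$ in $Z_{\sigma^n}$ and has the form $\overline{a}^{r}\mid a^{q}$ with $r=k-t+1$. After the first level of block-normalization (which by Assertion \ref{II10} is in fact the normalization relevant here, since errors of the first level do not interact), this central part becomes $a^{q'}\overline{a}^{r'}$ with $q'+r'=q+r$. Consequently, in $[\sigma^n]$ the positions containing the letter $a$ inside $[t,k+q]$ are exactly $t,t+1,\dots,t+q'-1$. The hypothesis $[\sigma^n][\ell]=a$ therefore translates into the inequality
\[
q'\;\ge\;\ell-t+1.
\]

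It then remains to relate $q'$ to $q$, and this is governed by the four subcases of Lemma \ref{lem:propa}, Assertion \ref{IIv}, which themselves depend on whether some $\overline{b}$ with $b\neq a$ appears strictly to the left (at an index $<t$) or strictly to the right (at an index $>k+q$) of the central part inside the propagated error.

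For Assertion \ref{prop:i}, the hypothesis $t>i$ means that at least one position of the propagated error lies to the left of the central part. By maximality of the central part, the letter at position $t-1$ in $Z_{\sigma^n}$ is not $\overline{a}$; being inside the propagated error and sitting strictly before index $k$, its only possibility is $\overline{b}$ for some $b\neq a$. Hence only cases (b) and (d) of Assertion \ref{IIv} can occur, giving either $q'=q-1$ or $q'=q$. Combined with $q'\ge \ell-t+1$, both cases imply $q\ge \ell-t+1$. Symmetrically, for Assertion \ref{prop:ii}, the hypothesis $t=i$ rules out any $\overline{b}$ to the left of the central part inside the error, leaving only cases (a) and (c) of Assertion \ref{IIv}, so that $q'\in\{q,q+1\}$. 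Substituting into $q'\ge \ell-t+1$ yields $q\ge \ell-t$ in either case.

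There is no significant obstacle here: once the correspondence between the positions containing $a$ in the normalized central part and the integer $q'$ is made explicit, the statement reduces to reading off the four subcases of Assertion \ref{IIv} and noticing that the hypothesis $t>i$ (resp.\ $t=i$) selects exactly the subcases in which $q'\le q$ (resp.\ $q'\le q+1$). The only mild care needed is to argue, in the case $t>i$, that the letter immediately preceding the central part inside the propagated error is indeed of the form $\overline{b}$ with $b\neq a$, which follows at once from the maximality of the central part and the structure \eqref{eq:max_factor} of a propagated error of the first level before index $k$.
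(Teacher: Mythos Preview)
Your proof is correct and follows essentially the same approach as the paper's: both arguments reduce the statement to Lemma~\ref{lem:propa}, Assertion~\ref{IIv}, reading off the relevant subcases (b),(d) when $t>i$ and (a),(c) when $t=i$. The only cosmetic difference is that the paper counts the $\overline{a}$'s in the normalized central part (using $r'\ge r$, resp.\ $r'\ge r-1$) while you count the $a$'s (using $q'\le q$, resp.\ $q'\le q+1$); since $q'+r'=q+r$ these are equivalent bookkeeping choices.
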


\begin{proof} The first part follows from Lemma \ref{lem:propa}, Assertion \ref{IIv}, Parts (b) and (d).
Indeed,  one has in the normal form   at least $k-t+1$ $\overline{a}$'s  before  index $k+q$.
Since  $[\sigma^n][\ell]=a$, this  gives $k+q-\ell \geq k-t+1$,  i.e., $q \geq \ell-t+1$.

Similarly, the second part follows from Lemma \ref{lem:propa}, Assertion \ref{IIv}, Parts (a)
and (c).
\end{proof}

\subsection{More on the second level of block-transformations}\label{subsec:moreIII}

The following lemma provides a characterization of  the  case
where a second level of block-transformations has to be performed.
 We recall that the supports of the propagated   errors  of the first level in
$Z_{\sigma^n}$  are $[i+kj,  p+k(j+1)]$ for $j=0, \dots, n-2$.
%, and
%that $Z'_{\sigma^n}$ denote the word obtained after the first
%level of block-normalizations (we do not need the exact form of
%this word now, so it is described in Lemma \ref{lem:propa} in
%Subsection \ref{subsec:secondlevel}).

%{\color{red}S: Suggestion: state it here with $Z_{\sigma}$ and not
%$Z'_{\sigma}$ and prove it later after Lemma \ref{lem:propa}?} S: done

\begin{lemma}\label{lem:type3pre}
Let $\sigma$ be an episturmian substitution of positive length $k$
which admits at least two distinct letters in its normal form, and
such that $Z_{\sigma^n}$ contains a simple error, with $ n\geq
2$.  Let $a$ be the letter of  the leftmost simple error in
$Z_{\sigma^n}$. The word $Z'_{\sigma^n}$ obtained after the first
level of
block-transformations %(as described in Lemma \ref{lem:propa})
is not in normal form if and only if the next two conditions hold:

\begin{enumerate}
\item  \label{III:i} $\theta_{\sigma}(a)=a$ and $ n \geq 3$;
%\item \label{III:ii}   between two  consecutive errors  propagated   errors  of the first level in
%$Z'_{\sigma^n}$, one has  $Z'_{\sigma^n}[p+kj,i+kj]=a \overline{b_1}\cdots\overline{b_t a}$, for $j=1, \dots, n-2$
%(with $t$ possibly equal to $0$),  and the  letters $b_j$'s are distinct from $ a$ for all $j$.

\item \label{III:ii}  between two  consecutive   propagated   errors  of the first level in
$Z_{\sigma^n}$, one has
%$Z_{\sigma^n}[p+kj,i+kj]=a
%\overline{b_1}\cdots\overline{b_t a}$
$Z_{\sigma^n}[p+kj+1,i+kj-1]=\overline{b_1}\cdots\overline{b_t}$,
for  all  $j=1, \dots, n-2$  (with $t$ possibly equal to $0$),  where the
letters $b_{\ell}$'s  belong to ${\mathcal A}$ and are distinct
from $ a$ for all $\ell$. % S: actually, the bars on a's were indicated as in Z and not Z'. I also removed unnecessary a and \overline{a} at indices p and i
\end{enumerate}
\end{lemma}
\begin{proof}
Suppose that   there is  a simple  error  in   $Z'_{\sigma^n}$
once we have finished the first level of block-transformations.
 Let $[\ell, \ell'] $ be the support   of the leftmost occurrence  of a simple  error  in $Z'_{\sigma^n}$  and let $b \in {\mathcal A}$ stand for its letter.

 Let us prove by contradiction that $b=a$.  Let us  assume $ b \neq a$.  %This implies $\ell \geq k+1$.
 By $k$-periodicity,   $[\ell, \ell'] $ intersects the  support  $[i,k+p]$ of the propagated $a$-error of the first level we just normalized during the first level of block-transformations.
The support of the error cannot contain the position $i$ since we
have $Z'_{\sigma^n}[i]=a$ by Lemma \ref{lem:propa}, Assertion
\ref{IIii}, and a simple error of letter $b$ cannot contain any  other
letter without spin. So, we have $\ell>i$. However, we cannot have
$\ell\in [i,k+p]$, since a simple error of letter $b$ starts with
$\overline{b}$, whereas there are no occurrences of
$\overline{b}$
 in $Z'_{\sigma^n}[i,k+p]$  by Lemma \ref{lem:propa}, Assertion \ref{IIi}, since we assumed $ b \neq a$.  We thus get a contradiction, which implies that  $b=a$.

So, the new simple error is of the form $\overline{a}
\overline{b_1}\dots \overline{b_{t}}a$ with $b_i\in
\mathcal{A}\setminus \{a\}$  and $t\geq0$,
and  it is not contained completely in the support $[i,k+p]$
 of the leftmost error of the first level. Hence
this simple error starts at index $k+p$ (the last occurrence of
$\tilde{a}$ in the propagated $a$-error of the first level), i.e.,
$\ell=p+k$. Moreover, the occurrence of $a$ in it must be at index
$k+i$, i.e., $\ell' =k+i$ (so that
$Z_{\sigma^n}[k+i]=\overline{a}$ and $Z'_{\sigma^n}[k+i]=a$).
Otherwise the factor $Z_{\sigma^n}[i,k+p]$ could be continued to
the right keeping the form \eqref{eq:max_factor} and hence would
not be maximal, and  the  propagated error of the first level
would  not actually end at the index $k+p$. This implies that
$\theta_{\sigma}(a)=a$ and $ n\geq 3$.
We  then use the   $k$-periodicity for  the proof of Assertion
\ref{III:ii}.

The converse is straightforward: indeed, if we have in
$Z_{\sigma^n}[p+kj,i+kj]=a\overline{b_1}\cdots\overline{b_t}\overline{a}$,
for $j=1, \dots, n-2$, then
$Z'_{\sigma^n}[p+kj+1,i+kj-1]=\overline{a}\overline{b_1}\cdots\overline{b_t}a$,
which is a simple error.
\end{proof}

This proves in particular that all  propagated  errors of the first
level  have the same type. In particular, after the first level of
block-transformations, if a new simple error is created in the
$j$-th $k$-period,  with $j=2, \dots, n-1$, then
 new simple errors are created in each $k$-period of index $j$,  with  $j=2, \dots, n-1$ (all periods except for the first one and the last one).

In particular, Lemma \ref{lem:type3pre} implies the following:

\begin{corollary} \label{IIsuppbis}  Let $\sigma$ be an episturmian substitution which admits at least two distinct letters in its normal form.
 Let $n \geq 2$. Let $a$ be the letter of  the leftmost simple error in $Z_{\sigma^n}$.
 If  the errors are of type I and $\theta_{\sigma}(a)=a$, then
$ p < i-1$, and  there  exists an occurrence of a letter $b \in
{\mathcal A}$ with  $b \neq a$   in $[k+p+1, k+i-1]$ for each $k=0,\ldots, n-1$.
\end{corollary}

% The second level of   block-transformations
%is described in detail in  Lemma
%\ref{lem:type3}, Assertion \ref{III:vi} below. Let us give a brief overview of this second level.
%We will  prove that $Z'_{\sigma^n}[i+kj]=a$ and
%$Z'_{\sigma^n}[p+k+kj]=\overline{a}$ for $j=0, \dots k-2$ (see
%Lemma \ref{lem:propa}, Assertions \ref{IIii} and \ref{IIiii}). So,
%Lemma \ref{lem:type3pre} gives the description of the
%simple errors that occur in $Z'_{\sigma^n}$%=[\sigma^n]$
%when errors are of type II: they are factors of the form
%$\overline{a} \overline{b_1}\cdots\overline{b_t} a$ occurring at
%positions $[p+kj, i+kj]$ for $j=1,\dots, k-2$. These simple errors
%can also propagate to blocks of $\tilde{a}$ of length greater than
%2 and including positions $i+kj$ and $p+kj$. The second level
%of block-transformations consists in normalizing these errors,
%after which we obtain the normal form of $\sigma^n$.
%
%

Lemma  \ref{lem:type3}  below describes a  word $Z_{\sigma^n}$ containing
errors of type II  (see Definition \ref{def:typeII}) and the  action of the second  level of
block-transformations on  $Z'_{\sigma^n}$.    According to Lemma \ref{lem:type3pre},   $ n \geq 3$, $\theta_{\sigma}(a)=a$,  i.e., all simple errors  to be normalized have the same letter  $a$.
Moreover all errors are occurrences of non-normal spinned words
of the form $\overline{a }\overline{b_1} \cdots \overline{b_t} a$,
with the $b_i$'s  being distinct from $a$ and $t$ possibly equal
to $0$: between two consecutive errors of the first level in
$Z'_{\sigma^n}$ (at indices  $[p+1+jk,i-1+jk]$, with $j=1,\ldots
n-2$),
 %(with support$[i+jk,p+(j+1)k]$, with $j=0,\cdots n-1$)
one has  a factor of  the form $\overline{b_1} \cdots
\overline{b_t}$  with  the  letters $b_j$'s being
distinct from $ a$ for all $j$ (here we also use Lemma
\ref{lem:propa}, Assertions (\ref{IIii}) and (\ref{IIiii})).
Whereas the supports of
the errors of type I are disjoint, we see below that  in type II  the
leftmost propagated  error  of the first level that occurs in $Z_{\sigma^n}$
 propagates    on all the  $k$-periods of  $Z_{\sigma^n}$, i.e., on  $[i, p +k(n-1)]$ (with Notation \ref{notation}).

\begin{lemma}\label{lem:type3} Let $\sigma$ be an episturmian substitution of positive length which admits at least two distinct letters in its normal
form and %let $n \geq 3$.
assume that the errors in $Z_{\sigma^n}$ are of  type $II$. Let
$a$ be the letter of the leftmost error in $Z_{\sigma^n}$. Let
$Z'_{\sigma^n}$ be  the word obtained after the first level of
block-transformations (as described in Lemma \ref{lem:propa}).
Then the following holds.

\begin{enumerate}
%\item  \label{III:i} $\theta_{\sigma}(a)=a$ and $ n \geq 3$; \tcr{in Lemma \ref{lem:type3pre}}
%\item \label{III:ii}  the supports of the propagated   errors  of the first level in
%$Z'_{\sigma^n}$  are $[p+kj,  i+kj]$ for $j=1, \dots, n-2$; between two  consecutive errors, one has  $Z'_{\sigma^n}[k+p,k+i]=a \overline{b_1}\cdots\overline{b_t a}$,
%(with $t$ possibly equal to $0$),  and the  letters $b_j$'s are distinct from $ a$ for all $j$; \tcr{in Lemma  \ref{lem:type3pre}}
 \item \label{III:iii}  Any occurrence of a  letter  $\tilde{b}$, with $b\neq a$,   has a  spin   equal to $1$ in
$Z_{\sigma^n}$.
\item \label{III:iv}   Inside   the  first  period  of $Z_{\sigma^n}$,   each occurrence of $\tilde{a}$   has spin $0$ at indices  smaller than or equal to $p$ and  spin $1$  after that.
%after the  third level of block-transformations, any factor in $Z'_{\sigma^n}$ of the form
%$\overline{a}^s\overline{{b_1}}\cdots\overline{{b_t}} a^q$
% (with $t\geq 1$)  containing errors of the first
%level  (hence containing some $jk$ in their support) and which are of maximal possible length (i.e.,
%$Z'_{\sigma^n}[p+kj-s]\neq \overline{a}$, and
%$Z'_{\sigma^n}[i+kj+q]\neq a$, $s,q\geq 1$) become
%$a\overline{a}^{s-1}{b_1}\cdots {b_t} a^{q-1}\overline{a};$

%\item \label{III:vi} after the %first
%second level of  block-transformations,  the  maximal factors of
%the form $\overline{a}^s\overline{b_1}\dots\overline{b_t}a^q$ in
%$Z'_{\sigma^n}$at indices $p+kj-s+1, \ldots, i+kj+q-1$, for $j=1,
%\dots, n-2$ (with $Z'_{\sigma^n}[p+kj-s]\neq \overline{a}$, and
%$Z'_{\sigma^n}[i+kj+q]\neq a$, $s,q\geq 1$) become
%$a\overline{a}^{s-1} b_1 \cdots b_t a^{q-1}\overline{a};$

\item \label{III:vi} The second level of block-transformations is
performed in at most two steps.

\begin{itemize} \item First, we correct the simple errors $\overline{a}\overline{b_1}\dots\overline{b_t}a$ in
$Z'_{\sigma^n}$ at indices $p+kj, \ldots, i+kj$, for $j=1, \dots,
n-2$. The factors of the form $\overline{b_1}\dots\overline{b_t}$
in $Z'_{\sigma^n}$ at indices $p+kj+1, \ldots, i+kj-1$, for $j=1,
\dots, n-2$ become ${b_1}\dots{b_t}$.

\item Secondly,  for
$j=1, \dots, n-2$, we consider the maximal factors
of the form $\tilde{a}^s$ with support $[p-s+1+kj, p+kj]$(maximal in the sense that $Z'[p-s]\neq
\tilde{a})$ and $\tilde{a}^q$  with support $[i+kj, i+q-1+kj]$
(so that $Z'[i+q]\neq \tilde{a}$). They are normalized using
block-transformations $\overline{a}a\to a \overline{a}$.

More precisely, the following happens.
% if s>1 and q>1 are normalized as follows

 \begin{itemize} \item  If $s=1$, then in the normal form $[\sigma^n]$ we have $[\sigma^n][p+kj]=a$,  for $j=1,\dots n-2$.
Symmetrically, if $q=1$, then
$[\sigma^n][i+kj]=\overline{a}$  for $j=1,\dots n-2$. If $s=1$ and
$q=1$, then the second level of block-transformations is over.

\item If $s>1$ or $q>1$, then we continue the  block-normalization
as follows.
\begin{itemize}
\item[$\bullet$]

If there is a letter $c$ with $c \neq \tilde{a}$ in $Z'_{\sigma^n}[i, k+p]$, then the intervals
$[p-s+1+kj+k, p+kj+k]$ and $[i+kj, i+q-1+kj]$ do not intersect, for
$j=1, \dots, n-2$.
If $s>1$ and  $Z'_{\sigma^n}[ p-s+1+kj,p+kj]=a^{\ell}
\overline{a}^{s-\ell}$ for $j=1, \dots, n-2$, then  $[\sigma^n][ p-s+1+kj,p+kj]=a^{\ell+1} \overline{a}^{s-\ell-1}$ ($s>\ell$ due
to Assertion (\ref{IIiii}) from  Lemma \ref{lem:propa}). Symmetrically, if $q>1$ and  $Z'_{\sigma^n}[i+kj, i+q-1+kj]=a^{r} \overline{a}^{q-r}$ for $j=1, \dots,
n-2$, then  $[\sigma^n][i+kj, i+q-1+kj]= a^{r-1}
\overline{a}^{q-r+1}$ ($r>0$ due to Lemma \ref{lem:propa}).

 \item[$\bullet$]  If  $Z'_{\sigma^n}[i, k+p]$ contains only occurrences of $\tilde{a}$ (there is thus a central part),  then
$Z'_{\sigma^n} [i+kj, p+kj+k] =a^{\ell} \overline{a}^{k+p-i-\ell+1}$ ($j=1, \dots, n-2$); moreover,
these positions are not changed for $j=1, \dots n-3$,  i.e.,
$[\sigma^n][i+kj, p+kj+k] =a^{\ell} \overline{a}^{k+p-i-\ell+1}$;    $[\sigma^n][i, p+k] =a^{\ell +1} \overline{a}^{k+p-i-\ell}$ ($j=0$)
and $[\sigma^n][i+k(n-3), p+k(n-2)] =a^{\ell-1} \overline{a}^{k+p-i-\ell+2}$ ($j=n-3$),  if  $n \geq 4$.
\end{itemize}

\end{itemize}
\end{itemize}
After these transformations, we get the normal form of $\sigma^n$.

 % \item \label{III:v}  after the  second level of block-transformations,  one gets  the normal
 % form;

\item  \label{III:vii}   In $[\sigma^n]$, any occurrence of a  letter  $\tilde{b}$, with $b\neq a$,
occurs with spin    $0$
 in $[i,p+k(n-1)]$,  and  with  spin $1$  in  both the prefix of length $i-1$ and  the suffix of length ${k-p}$.

\item \label{III:viii}    One  has $[{\sigma^n}][i]=a$,   $[\sigma^n][k(n-1)+p]=\overline{a}$;
%, i.e., after  the full normalization process, the spin  of the  very first occurrence of $\tilde{a}$ in the full  error is  $0$, and its very last occurrence   has spin $1$;

\item  \label{III:ix}  The spin   of a letter $\tilde{a}$  at any index
which neither  belongs to the central  part (if there is a  central   part), nor   equals  $i$, nor equals  $p+(n-1)k$,
  is the same in $Z_{\sigma^n}$ and  in $[\sigma^n]$.

\end{enumerate}
\end{lemma}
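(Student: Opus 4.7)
The plan is to first extract the rigid structural information on spins of $Z_{\sigma^n}$ from the type II hypothesis and Lemma \ref{lem:type3pre}, then carry out the second level of block-normalization by splitting on the presence of a central part, and finally read off the global spin pattern in $[\sigma^n]$ by bookkeeping the spin flips incurred across both levels.

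First I would establish Assertions (\ref{III:iii}) and (\ref{III:iv}). For (\ref{III:iii}), Lemma \ref{lem:type3pre}(\ref{III:ii}) says that the gap between two consecutive first-level errors consists only of barred letters distinct from $a$, while Lemma \ref{lem:propa}(\ref{IIi}) says the same inside the support $[i,k+p]$ of the leftmost first-level propagated error; by $k$-periodicity of spins in $Z_{\sigma^n}$ (and the fact that $\theta_\sigma(a)=a$ by Lemma \ref{lem:type3pre}(\ref{III:i})), this accounts for every occurrence of every letter $b \neq a$. Assertion (\ref{III:iv}) follows from Lemma \ref{lem:propa}(\ref{IIbis}) applied inside the first $k$-period, using the fact that the leftmost first-level propagated error begins at position $i$ and ends at position $k+p$.

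The core of the proof is Assertion (\ref{III:vi}). By Lemma \ref{lem:type3pre}, after the first level we are left with exactly $n-2$ simple $a$-errors of the form $\overline{a}\overline{b_1}\cdots\overline{b_t}a$ having supports $[p+kj, i+kj]$ for $j=1,\ldots,n-2$; I would first apply the block-transformation $\overline{a}\overline{b_1}\cdots\overline{b_t}a \to a b_1 \cdots b_t \overline{a}$ to each of them. This creates, at every period boundary, one $a$ appended to the right of the block $\tilde{a}^s$ living at $[p-s+1+kj, p+kj]$ and one $\overline{a}$ prepended to the block $\tilde{a}^q$ at $[i+kj, i+q-1+kj]$. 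I would then split into two sub-cases. Sub-case (a): some $\tilde{c}$ with $c \neq a$ lies in $Z'_{\sigma^n}[i, k+p]$; by $k$-periodicity, the two neighbouring blocks of $\tilde{a}$'s are separated by a non-$a$ letter, so they normalize independently under iterated $\overline{a}a \to a \overline{a}$, producing the stated transitions $a^\ell \overline{a}^{s-\ell} \to a^{\ell+1}\overline{a}^{s-\ell-1}$ and $a^r \overline{a}^{q-r} \to a^{r-1}\overline{a}^{q-r+1}$. Sub-case (b): the interior consists only of $\tilde{a}$'s (the central-part situation); then blocks from neighbouring $k$-periods merge into a single long run $a^\ell \overline{a}^{k+p-i-\ell+1}$ on $[i+kj, p+kj+k]$ for interior $j$, while the prefix boundary of $Z_{\sigma^n}$ contributes one extra $a$ on the left and its suffix boundary contributes one extra $\overline{a}$ on the right, which accounts for the off-by-one shifts stated at $j=0$ and $j=n-3$.

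Once (\ref{III:vi}) is in place, the remaining assertions follow by tracking spin flips: every letter $\tilde{b}$ with $b \neq a$ receives exactly one flip inside $[i, p+k(n-1)]$, either during the first level within a support $[i+kj, k+p+kj]$ or during the second level within a newly corrected simple error, and none outside, which combined with (\ref{III:iii}) yields (\ref{III:vii}); the endpoints $i$ and $p+k(n-1)$ each receive a single first-level flip from Lemma \ref{lem:propa}(\ref{IIii})(\ref{IIiii}) that is never undone by the second level, giving (\ref{III:viii}); and any $\tilde{a}$ outside the central part either does not change spin at all or changes it twice, giving (\ref{III:ix}). The main obstacle is the central-part analysis in sub-case (b), where bars migrate across $k$-period boundaries during the iterated $\overline{a}a \to a\overline{a}$ steps, and one must verify that the resulting spin pattern is genuinely $k$-periodic on the interior periods while breaking precisely at the two extreme periods $j=0$ and $j=n-3$.
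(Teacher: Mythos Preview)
Your approach matches the paper's closely: you derive (\ref{III:iii}) and (\ref{III:iv}) from Lemmas \ref{lem:type3pre} and \ref{lem:propa}, you carry out the second-level transformation with the same case split on the presence of a non-$a$ letter in $[i,k+p]$, and your spin-flip bookkeeping for (\ref{III:vii})--(\ref{III:ix}) is the right idea.

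There is however one genuine gap in the logical organization. Assertion (\ref{III:vi}) asserts facts about $[\sigma^n]$, which presupposes that the word obtained after your second-level transformations is already in normal form. You describe the transformations and then move directly to reading off the global spin pattern, but you never verify that \emph{no new errors remain}. The paper addresses this explicitly, and in fact inverts your order: it first establishes the content of (\ref{III:vii})--(\ref{III:ix}) for the intermediate word produced by the procedure in (\ref{III:vi}) (not yet known to be $[\sigma^n]$), and then uses that structural information to check case by case that no $b$-error with $b\neq a$ can occur, that no error of the form $\overline{a}\,\overline{c_1}\cdots\overline{c_l}\,a$ with $l>0$ can occur (because such barred runs of non-$a$ letters only survive in the untouched prefix before $i$ and suffix after $p+k(n-1)$), and that no error $\overline{a}a$ can occur at the boundary positions $p-s+kj$ and $i+q-1+kj$. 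This last check is exactly where the hypothesis that $\sigma$ admits at least two distinct letters in its normal form enters, since it guarantees the maximal $\tilde a$-runs have length strictly less than $k$ and are therefore bounded by genuine non-$a$ letters. Only after this verification can one conclude that the intermediate word equals $[\sigma^n]$, so that (\ref{III:vi})--(\ref{III:ix}) hold for the normal form as stated. Your sketch should incorporate this termination argument and reorder the dependencies accordingly.
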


\begin{proof}
Assertion (\ref{III:iii}) follows from Lemma \ref{lem:type3pre},
Assertion (\ref{III:ii}) (for indices between $p+kj$ and $i+kj$),
and Lemma \ref{lem:propa}, Assertion (\ref{IIi}) (for indices
between $i+kj$ and $p+kj$).

Assertion (\ref{III:iv}) follows from Lemma \ref{lem:propa},
Assertion (\ref{IIbis}) and the fact that there are no occurrences
of $\tilde{a}$ between indices $p$ and $i$ when we have errors of
type II (Lemma \ref{lem:type3pre}, Assertion (\ref{III:i})).

The normalization process described in Assertion (\ref{III:vi}) is
a direct result of block-transformations applied first to the
simple errors $\overline{a}\overline{b_1}\dots\overline{b_t}a$,
and then to the neighbouring factors of the form $\tilde{a}^+$ (if
they are of length greater than 1). So, for Assertion
(\ref{III:vi}) we only need to prove that after these
block-transformations the obtained word is in normal form, i.e.,
does not contain any errors.

Now we proceed as follows. We first prove that the claims of
Assertions (\ref{III:vii})--(\ref{III:ix}) hold for the word
obtained after the second level of block-transformations described
in Assertion (\ref{III:vi}). Then we prove Assertion
(\ref{III:vi}) (that there are no errors after these
transformations). As a corollary, we will have that after the second
level of block-transformations we get the normal form, so the
Assertions (\ref{III:vii})--(\ref{III:ix}) hold true for normal
forms as stated.

Assertion (\ref{III:vii}) follows from Assertion (\ref{III:vi})
for indices between $p+kj$ and $i+kj$, $j=1, \dots, n-2$, and from
Lemma \ref{lem:propa}, Assertion (\ref{IIi}) for indices in the
support of the error of the first level (i.e., between $i+kj$ and
$p+kj+k$, $j=0, \dots, n-2$). The letters at indices before $i$
and after $k(n-1)+p$ are not touched by the first level  block-transformations by definition and also by the normalization described
in Assertion (\ref{III:vi}), so they stay unchanged (and hence
have spin 1 by Assertion (\ref{III:iii})).

Assertion (\ref{III:viii}) follows from Lemma \ref{lem:propa},
Assertions (\ref{IIii}) and (\ref{IIiii}) and $k$-periodicity,
since these indices are not touched by the block-transformations
described in Assertion (\ref{III:vi}).

Assertion (\ref{III:ix}) follows Assertion (\ref{III:vi}) and from
Lemma \ref{lem:propa}, Assertions (\ref{IIiv}) and (\ref{IIv}).

Now we prove that there are no more errors in the obtained word (Assertion (\ref{III:vi})). Suppose there are errors after the two
levels of block-transformations.

First note that the errors must be of the letter $a$. Indeed,
errors of another letter $b\neq a$ must be of the form
$\overline{b} \overline {c_1} \cdots \overline {c_\ell} b$ (say at
index $j$). Due to Assertion (\ref{III:vii}), we must have either
$j < i$ or $j> p+(n-1) k$, since only at these indices we can have
$\overline{b}$ for $b\neq a$. The second case is impossible due to
absence of letters $b$ of spin 0 after $ p+(n-1) k$, and the first
case is impossible since the first letter of spin 0 in the
conjectured simple error must be $b$ while it is $a$ at index $i$.
So, we indeed cannot have an error of a letter $b\neq a$.

Now we are going to prove that we cannot have errors of the letter
$a$ either. First note that there are no errors $\overline{a}
\overline {c_1} \cdots \overline {c_\ell} a$ for $\ell>0$: indeed, by
the previous assertions, series of occurrences of letters
$\overline {c}$ for $c\neq a$ can only occur in the first period
before index $i$ and in the last period after index $k(n-1)+p$.

If such a series occurs in the first period, this means that the
simple error $\overline{a} \overline {c_1} \cdots \overline {c_\ell}
a$ either ends at the position $i$ or is entirely contained in the
prefix of length $(i-1)$ of the word. The first case is impossible.
Indeed, by  Lemma \ref{lem:type3pre}, the occurrence of
$\tilde{a}$ preceding the position $i$ is at position $p$ and it
has spin 0 by Assertion (\ref{III:iii}); moreover, due to the fact that
this position is not touched by the first and second levels of
transformations,   its spin must be 1 to have an error. The second
case is impossible since we did not touch the prefix of length
$(i-1)$ on the first and second level of block-transformations, so
it coincides with the prefix of $[\sigma]$ which is in normal form
and hence cannot contain errors. For the last period the proof is
symmetric.

It remains to prove that we cannot have errors $\overline{a}a$.
Note that we cannot have errors inside the parts we normalized
during the first level of normalization and did not change during
the second level of normalization (between
 $i+q-1+kj$ and $p-s+k(j+1)$), as well as in the parts we normalized during the second level of
 normalization  described in Assertion (\ref{III:vi})
(between $p+q-1+kj$ and $i-s+kj$) or inside prefixes or suffixes
of $[\sigma]$ (before $i$ and after $p+k(n-1)$). So, it remains to
show that we cannot have errors  $\overline{a}a$ on the boundary
between these normalized parts, i.e., at positions $p-s+kj$ and
$i+q-1+kj$. This follows from the fact that we defined $s$ and $q$
such that $Z'_{\sigma^n}[p+kj-s]\neq \tilde{a}$, and
$Z'_{\sigma^n}[i+kj+q]\neq \tilde{a}$.
We remark
that we use the fact that we have at least two letters in the
normal form (also when we define the maximal intervals of the form
$\tilde{a}$: when there are at least two letters, then their
lengths are smaller than $k$, otherwise if there is only one
letter, they appear to be one interval covering the whole word).

 So, we
proved that after the second level of block-transformations we
have normalized forms, hence Assertions
(\ref{III:vii})--(\ref{III:ix}) hold for normal forms.

Figure \ref{fig:level3} corresponds to the case where  there is no
central part.

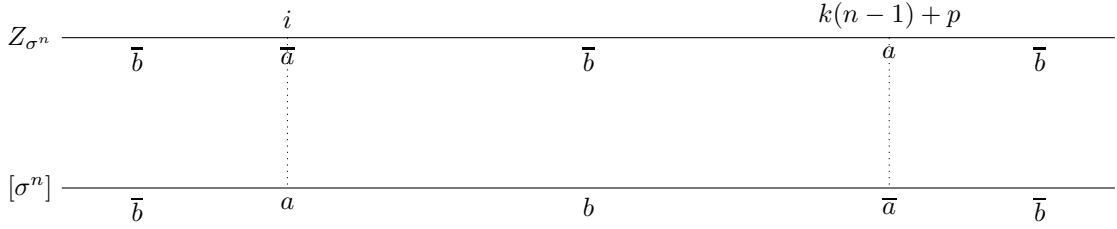
\begin{figure}
\begin{tikzpicture}
\draw  (1,0) -- (15,0);
\draw (4,0) node [above] {$i$};
\draw (4,0) node [below] {$\overline{a}$};
\draw (4,-2) node [below] {$a$};
\draw (2,-2) node [below] {$\overline{b}$};
\draw (14,-2) node [below] {$\overline{b}$};
\draw (14,0) node [below] {$\overline{b}$};

\draw (2,0) node [below] {$\overline{b}$};

\draw (8,0) node [below] {$\overline{b}$};

\draw (8,-2) node [below] {$b$};

\draw (12,-2) node [below] {$\overline{a}$};

\draw (12,0) node [above] {$k(n-1)+p$};
\draw (12,0) node [below] {$a$};
\draw [dotted] (4,0)-- (4,-2);
%\draw [dotted] (11,0)-- (11,-2);
%\draw [dotted] (9,0)-- (9,-2);
\draw [dotted] (12,0)-- (12,-2);

\draw (1,0) node[left]{$Z_{\sigma^n}$};
\draw (1,-2) node[left]{$[\sigma^n]$};
\draw (1,-2) -- (15,-2);
%\draw (1,-4) -- (18,-4);
%\draw (1,-4) -- (18,-4);
\end{tikzpicture}
\caption{Illustration   of the transformations  performed  during  the two levels of block-transformations, in  the case where  there  is no central part.} \label{fig:level3}
\end{figure}\end{proof}

%The interval   $[i,p+k(n-1)]$ is called the support of the full propagated error.
The following examples  illustrate Lemma \ref{lem:type3}.
\begin{example}
{\bf{(Error of type II: illustration of Assertion (\ref{III:vi}) from Lemma
\ref{lem:type3}, with a letter
$d$ with $d \neq \tilde{a}$ in $Z'_{\sigma^n}[i, k+p]$)}}

%\tcr{Shall we move it to Section 4.6?} S: Done
Let $\sigma$ be a
substitution with normal form
$$[\sigma]=a^3 \overline{b c} \overline{a} \overline{d}$$ with $a,b,c,d$
distinct letters and $\theta_{\sigma}$ being the identity. The
letters of the supports  of the   propagated errors of the first
level errors are marked by red;
%this  coincides with the  central part;
letters in  blue indicate letters that are involved in the
second level of  block-transformations. One has
$$Z_{\sigma^4 }= a^3 \overline{bc} {\color{red}
\overline{a} \overline{d} | a^3 } {\color{blue}\overline{b}
\overline{c}} {\color{red} \overline{a} \overline{d} | a^3 }
\overline{bc} {\color{red}\overline{a} \overline{d} | a^3 }
\overline{bc} \overline{a} \overline{d}.$$

The  first level of  block-transformations (which deals with the
red part) yields:

$$Z'_{\sigma^4}= a^3 \overline{b c}
{\color{red}   a  d | a^2 \overline{a}}
{\color{blue}\overline{bc}} {\color{red}
 a  d | a^2 \overline{a}}
{\color{blue}\overline{bc}} {\color{red}
 a d | a ^2\overline{a} } \overline{bc} \overline{a} \overline{d}.$$

And we create another simple error $\overline{a}\overline{bc}a$ (involved letters are marked by  red below), and fixing it
we get

$$ a^3 \overline{b c}
  a  d | a^2 {\color{red} a
bc \overline{a}}
   d | a^2 {\color{red} a
bc \overline{a}} d | a^2 \overline{a}  \overline{bc} \overline{a}
\overline{d}.$$

 In this example there are no errors in the series of $\tilde{a}$'s, so
 we obtainded the normal form on the previous step:

$$[\sigma^4]= a^3 \overline{b c}
  a  d | a^3
bc \overline{a}
   d | a^3
bc \overline{a} d | a^2 \overline{a}  \overline{bc} \overline{a}
\overline{d}.$$

\end{example}

\begin{example} {\bf (Error of type II: illustration of Assertion (\ref{III:vi}) from Lemma
\ref{lem:type3},  when $Z'_{\sigma^n}[i, k+p]$ contains only occurrences of $\tilde{a}$)}\label{ex:new}
%Consider now  $n=4$ in the previous example.

We  consider
$\sigma$ with normal form
$$[\sigma]=a^3 \overline{b c} \overline{a}^3$$ with $a,b,c$
distinct letters and $\theta_{\sigma}$ being the identity. The
letters of the supports  of the   propagated errors of the first
level errors are first  marked by red;  this  coincides with the  central
part; letters in  blue indicate letters that are involved in the
second level of  block-transformations. One has
$$Z_{\sigma^4 }= a^3 \overline{bc} {\color{red}
\overline{a}^3 |  a^3} {\color{blue}\overline{bc}} {\color{red}
\overline{a}^3 | a^3 } {\color{blue}\overline{b} \overline{c}}
{\color{red} \overline{a}^3 | a^3 } \overline{bc}
\overline{a}^3.$$

The  first level of  block-transformations (which deals with the
red part) yields:

$$Z'_{\sigma^4}= a^3 \overline{b c}
{\color{red}  a^3 |\overline{a}^3} {\color{blue}\overline{bc}}
{\color{red}
 a^3 |\overline{a}^3} {\color{blue}\overline{bc}}
{\color{red}
 a^3 | \overline{a}^3 } \overline{bc} \overline{a}^3.$$

And we create  as before  simple errors  $\overline{a}\overline{bc}a$;
fixing them as in the previous example we get:

$$a^3 \overline{bc} a^3 | \overline{a}^2 { \color{red}a bc\overline{a}}a^2 | \overline{a}^2 {\color{red}a bc \overline{a}} a^2 |\overline{a}^3 \overline{bc} \overline{a}^3  .$$

So here it remains to correct the simple  errors $\overline{a}a$ in the series
of $\tilde{a}$ (their supports are depicted in red below), which yields
$$  a^3 \overline{bc}
{\color{red} a^3 |
 a \overline{a}^2 } bc {\color{red} a^2 a
 | \overline{a}
  \overline{a}^2 } bc {\color{red}
a^ 2 \overline{a} | \overline{a}^3} \overline{bc} \overline{a}^3,$$
and thus

$$[\sigma^4]=
a^3 \overline{bc}
 a^3 |
 a \overline{a}^2  bc a^3
 | \overline{a}^3  bc
a^ 2 \overline{a} | \overline{a}^3\overline{bc} \overline{a}^3  .$$
\end{example}

\begin{remark}\label{rem:new}
\begin{enumerate}
\item
We  assume  that $Z_{\sigma^n}$ admits a simple error.
 Let $i$ stand  for the index of the leftmost   error  in $Z_{\sigma^n}$. The  prefix of length $i-1$  of the normal form $[\sigma]$  is a  prefix of $Z_{\sigma^n}$.
\item \label{rem:3}
 In the  full process of block-normalization we only make
block-transformations of the form $\overline{a b_1} \cdots
\overline{{b_s}} a \to a {b_1} \cdots {b_s} \overline{a}$.  Hence,
once the spin of a letter $b \neq a$ is  equal to $0$ in the  full
normalization process at some index  in the support of this error,
it  will never change back to spin $1$ at this index.
\end{enumerate}
\end{remark}

\subsection{More about normal forms}\label{subsec:morenormalform}

In the proof of Lemma \ref{lem:mainlem} we will use the following   technical general proposition about normal
forms.
It is used in particular in the proof of Lemma  \ref{lem:type32}.
\begin{proposition}\label{prop:prefix}  Let $\sigma$ and $\tau $ be episturmian  substitutions. We assume that $\sigma^n=\tau^m$ for $n,m$ positive integers.
We assume that  $Z_{\sigma^n}$ and $Z_{\tau^m}$ both contain
errors and  that both leftmost errors  are of the same letter $a$.
We also  assume that  all the  errors are of letter $a$  in
$Z_{\sigma^n}$ and $Z_{\tau^m}$, i.e., $\theta_{\sigma}(a)=a$ and
$\theta_{\tau}(a)=a$.

\begin{enumerate}
\item \label{prop:prefixi}
If one has
$|\pref_\ell(Z_{\sigma^n })|_{a}=|\pref_\ell(Z_{\tau^m})|_{a}$ for some  positive integer $\ell$,
%(and
%hence
%$|\pref_\ell(Z_{\sigma^n })|_{\overline{\psi}_a}=|\pref_\ell(Z_{\tau^m})|_{\overline{\psi}_a}$),
then the substitutions  $\mu(\pref_\ell(Z_{\sigma^n }))$ and $ \mu (\pref_\ell(Z_{\tau^m}))$
are equal.
\item  \label{prop:prefixii} If,  for some  index $\ell$,  $Z_{\sigma^n}
[\ell]=Z_{\tau^m}[\ell]=\tilde{b}$ for some  letter $b\neq a$,
then $\mu  (pref_\ell(Z_{\sigma^n}))=\mu (
\pref_{\ell}(Z_{\tau^m}))$ and also,    by left-cancellativity,
$\mu (pref_{\ell-1}(Z_{\sigma^n}))=\mu (
\pref_{\ell-1}(Z_{\tau^m}))$. In particular, we have
$|\pref_\ell(Z_{\sigma^n})|_{a}=|\pref_{\ell}(Z_{\tau^m})|_{a}$.

\item  \label{prop:prefixiii} If, for some index $\ell$,
 $Z_{\sigma^n} [\ell]=b$ and $Z_{\tau^m}[\ell]=\overline{b}$
for some  letter $b\neq a$, then
% $\mu
%(pref_\ell(Z_{\sigma^n}))=\mu ( \pref_{\ell}(Z_{\tau^m}))$
%{\color{red}S: The first equality is false, I think?..} and also
$|\pref_\ell(Z_{\sigma^n})|_{a}=|\pref_{\ell}(Z_{\tau^m})|_{a}+1$
and $|\pref_{\ell}
(Z_{\sigma^n})|_{\overline{a}}=|\pref_{\ell}(Z_{\tau^m})|_{\overline{a}}-1$.
\end{enumerate}
\end{proposition}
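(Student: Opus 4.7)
Since $\sigma^n=\tau^m$, Theorem~\ref{theo:spinned} gives $Z_{\sigma^n}\equiv Z_{\tau^m}$; in particular, the two words share a common underlying word over $\mathcal{A}$, as already recorded in Remark~\ref{remarks}(\ref{rem:4}), and they may differ only by spins. Under the running hypothesis $\theta_{\sigma}(a)=\theta_{\tau}(a)=a$ together with ``all errors are of letter $a$'', the normalisation algorithms of Sections~\ref{subsec:secondlevel} and~\ref{subsec:moreIII} that pass from either word to the common normal form $[\sigma^n]=[\tau^m]$ only use block-transformations of the shape $\overline{a}\,\overline{b_1}\cdots\overline{b_t}\,a\leftrightarrow a\,b_1\cdots b_t\,\overline{a}$. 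Each such transformation (i) preserves the total count of unbarred $a$'s, (ii) exchanges the spins of its two $a$-endpoints, and (iii) flips the spins of all non-$a$ letters lying strictly inside its support. I will use these three invariants throughout.

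For Part~(\ref{prop:prefixi}) I would argue by induction on $\ell$. If $Z_{\sigma^n}[\ell]=Z_{\tau^m}[\ell]$, the inductive hypothesis is applied to $\ell-1$ after noting that removing this common letter preserves the equality of $a$-counts, and the conclusion extends by composing on the right with $\tilde{\psi}_{Z_{\sigma^n}[\ell]}$. Otherwise the underlying letter is common but the spins differ; using the explicit structural descriptions in Lemma~\ref{lem:propa} (for errors of type~I) and Lemma~\ref{lem:type3} (for errors of type~II) of exactly which spins can differ between $Z_{\sigma^n}$ and $[\sigma^n]$, and symmetrically for $Z_{\tau^m}$, one localises the discrepancies to the supports of the propagated errors. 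The matching $a$-counts then allow one to pair these discrepancies and to exhibit an explicit chain of $a$-block-transformations inside $\pref_\ell(Z_{\sigma^n})$ producing $\pref_\ell(Z_{\tau^m})$; Theorem~\ref{theo:spinned} then yields the claimed equality of substitutions.

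For Part~(\ref{prop:prefixii}) the plan is to deduce from the common value $Z_{\sigma^n}[\ell]=Z_{\tau^m}[\ell]=\tilde{b}$, $b\neq a$, that the $a$-counts of the two prefixes coincide: by invariant~(iii) each $a$-block-transformation whose support strictly contains $\ell$ flips the spin of the $b$-letter at position $\ell$, so equality of these spins yields a parity constraint. Combined with the shape forced by Lemmas~\ref{lem:propa}(\ref{IIi}) and~\ref{lem:type3}(\ref{III:iii},\ref{III:vii}), which say that a non-$a$ letter has a prescribed spin depending only on whether its position lies inside or outside the support of a propagated error, the parity constraint upgrades to strict equality of the $a$-counts. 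Part~(\ref{prop:prefixi}) then applies, and the statement for length $\ell-1$ follows by cancelling the common trailing $\tilde{\psi}_b$ on the right via Proposition~\ref{prop:cancel}. For Part~(\ref{prop:prefixiii}), the disagreement of the spin of $b$ at position~$\ell$ means that $\ell$ lies on opposite sides of exactly one $a$-endpoint of a propagated error in the two words; a direct comparison using the explicit positions of these endpoints given in Lemmas~\ref{lem:propa}(\ref{IIii},\ref{IIiii}) and~\ref{lem:type3}(\ref{III:viii}) yields the $+1$ discrepancy in the $a$-count and, since the underlying $a$-count on prefixes is the same, the compensating $-1$ discrepancy in the $\overline{a}$-count.

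The main obstacle is Part~(\ref{prop:prefixi}), and more precisely the case where the spins at position $\ell$ disagree: in full generality two spinned words with the same underlying word and the same count of unbarred $a$'s need not be block-equivalent (as one sees from the normal forms of the spinned versions of $aba$), so the argument has to leverage the very rigid shapes of $Z_{\sigma^n}$ and $Z_{\tau^m}$ imposed by the hypothesis $\sigma^n=\tau^m$ together with Lemmas~\ref{lem:propa} and~\ref{lem:type3}, rather than any purely combinatorial symmetry of the free monoid of spinned words.
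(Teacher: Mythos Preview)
Your overall induction scheme for Part~(\ref{prop:prefixi}) matches the paper's, and the case $Z_{\sigma^n}[\ell]=Z_{\tau^m}[\ell]$ is handled the same way. The substantive difference, and the place where your plan has a genuine gap, is the case where the spins at position~$\ell$ disagree. You propose to ``exhibit an explicit chain of $a$-block-transformations inside $\pref_\ell(Z_{\sigma^n})$ producing $\pref_\ell(Z_{\tau^m})$'' by appeal to Lemmas~\ref{lem:propa} and~\ref{lem:type3}, but you do not carry this out, and (as you yourself note in the last paragraph) equality of $a$-counts on a prefix does \emph{not} by itself give block-equivalence of the prefixes. The paper avoids this difficulty entirely and never tries to exhibit such a chain.

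The paper's argument for the disagreeing-spin case splits on the underlying letter. If it is some $b\neq a$, the paper shows this case is \emph{impossible}: by Remark~\ref{remarks}(\ref{rem:3}) the spin of a non-$a$ letter can only change once (from $\overline{b}$ to $b$) during the full normalisation, and such a change alters the $a$-count of the length-$\ell$ prefix by exactly one; since both words normalise to the same $[\sigma^n]=[\tau^m]$, the two spins at position~$\ell$ must already agree. If the letter is $a$, say $Z_{\sigma^n}[\ell]=a$ and $Z_{\tau^m}[\ell]=\overline{a}$, the paper looks for some $\ell'<\ell$ at which the $a$-counts already match; if one exists, induction at $\ell'$ plus left-cancellativity finishes. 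If no such $\ell'$ exists, then for every $\ell'<\ell$ one has $|\pref_{\ell'}(Z_{\sigma^n})|_a<|\pref_{\ell'}(Z_{\tau^m})|_a$; in particular $Z_{\sigma^n}[1]=\overline{a}$ and $Z_{\tau^m}[1]=a$, so $[\sigma^n][1]=a$. The paper then observes that $Z_{\sigma^n}$ must have a propagated $a$-error starting at index~$1$ whose support is contained in $[1,\ell]$, normalises it (making the first letter~$a$), cancels this common first letter, and invokes the induction hypothesis at $\ell-1$.

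For Parts~(\ref{prop:prefixii}) and~(\ref{prop:prefixiii}) your parity/geometric arguments are in the right spirit but more elaborate than needed. The paper uses the same ``at most once'' observation for non-$a$ spin flips: if both have $\overline{b}$ at~$\ell$ then both flip, each gaining one $a$ in $\pref_\ell$; if both have $b$ then neither flips; if one has $b$ and the other $\overline{b}$, exactly one flips, giving the $\pm 1$ discrepancy. In each case Part~(\ref{prop:prefixi}) (or the direct count) gives the conclusion, with left-cancellativity yielding the length-$(\ell-1)$ statement.
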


\begin{proof}  First observe that $[\sigma^n]=[\tau^m]$.
The proof  of  Assertion (\ref{prop:prefixi})  is  by induction on  $\ell$. Suppose that for each
$\ell'<\ell$ the statement holds.
The induction property  clearly  holds  for $\ell=1$.

Note first that  $|\pref_\ell(Z_{\sigma^n })|_{a}=|\pref_\ell(Z_{\tau^m})|_{a}$  implies that
$|\pref_\ell(Z_{\sigma^n })|_{\overline{a}}=|\pref_\ell(Z_{\tau^m})|_{\overline{a}}$, by  Assertion (\ref{rem:4}) in Remark \ref{remarksbis}.

We first assume  that  at index $\ell$ in $Z_{\sigma^n}$, the
letter is    some  $\tilde{b}$ with $b\neq a$. At index $\ell$ in
$Z_{\tau^m}$, one has  the same letter $\tilde{b}$ (possibly with
a different spin),
 by  Assertion (\ref{rem:4}) in Remark \ref{remarksbis}. If  the  occurrences  at index $\ell$
of the letter $b$
in $Z_{\sigma^n}[\ell]$
and $Z_{\tau^m}[\ell]$ have the same spin, the proof follows from the
fact that the episturmian monoid is left-cancellative  (see Proposition \ref{prop:cancel}),   by considering   the prefix
of length $\ell-1$  together with the induction hypothesis for $\ell-1$.
And they cannot have different spins, since otherwise during the
normalization of the power of the substitution having $\overline{b}$ we must apply a
rule involving this position, which changes the number of $a$'s
in the prefix, and we can do it only once by Assertion (\ref{rem:3}) of Remark  \ref{rem:new}. Since $[\sigma^n]=[\tau^m]$, this is not possible.

We now assume that  $Z_{\sigma^n}[\ell]=\tilde{a}$ (and hence
$Z_{\tau^m}[\ell]$ is equal to $a$ or $\overline{a}$  by the above). As above, if
$Z_{\sigma^n}[\ell]$ and $Z_{\tau^m}[\ell]$ have the same spin, the proof
follows  by
left-cancellativity and from the induction hypothesis for $\ell'=\ell-1$.
So, we   have to  consider the case where in one of the representations
we have $a$, and in the other one $\overline{a}$.
Without loss of generality, assume that $Z_{\sigma^n}[\ell]={a}$ and
$Z_{\tau^m}[\ell]=\overline{a}$.

\begin{itemize}
\item  If, for some $\ell' < \ell$,
$|\pref_{\ell'}(Z_{\sigma^n})|_{a}=|\pref_{\ell'}(Z_{\tau^m})|_{a}$
and
$|\pref_{\ell'}(Z_{\sigma^b})|_{\overline{a}}=|\pref_{\ell'}(Z_{\tau^m})|_{\overline{a}}$,
then we are done by induction and left-cancellativity.

\item So, it
remains to consider the case where, for each length $\ell' < \ell$, we have
$|\pref_{\ell'}(Z_{\sigma^n})|_{a} \neq |\pref_{\ell'}(Z_{\tau'})|_{a}$.
Since $Z_{\sigma^n}[\ell]={a}$ and
$Z_{\tau^m}[\ell]=\overline{a}$, then
$|\pref_{\ell'}(Z_{\sigma^n})|_{a}<|\pref_{\ell'}(Z_{\tau'})|_{a}$
and
$|\pref_{\ell'}(Z_{\sigma^n})|_{\overline{a}}>|\pref_{\ell'}(Z_{\tau^m})|_{\overline{a}}$, for all $ 1 \leq  \ell' < \ell$.

In particular, for $\ell'=1$ this means that
$Z_{\sigma^n}(1)=\overline{a}$ and $Z_{\tau^m}(1)={a}$.
Since $[\sigma ^n]= [\tau^m]$  and the only modifications
during normalization are of the form $\overline{a b_1}
 \cdots \overline{{b_s}} a \to
a {b_1} \cdots {b_s} \overline{a}$ (see Assertion (\ref{rem:3}) of Remark \ref{rem:new}), we have
$[\sigma^n][1]=[\tau^m][1]=a$. Now, to convert the
first symbol $\overline{a}$ in $Z_{\sigma^n}$ to $a$ during  the  full process of  block-normalization from $Z_{\sigma^n}$ to $[\sigma^n]$,
we need to have a prefix of the form $\overline{a} \{
\overline{c} | c\in {\mathcal A}\}^* a$ of length at most
$\ell$ ($c \neq a$).  Indeed, consider  the first occurrence of $a$ in $Z_{\sigma^n}$. Before it,   we
cannot have an occurrence of a letter  $b$ with $b \neq a$ with spin  $0$, since it  would  block the
prefix for the  modifications that have to be done   (by Assertion (\ref{rem:3}) of Remark  \ref{rem:new}),  contradicting  the existence of  $a$ as a first letter for
$[\sigma^n]$.   Hence  there is a propagated   $a$-error of the first  level  starting at  index $1$ in $Z_{\sigma^n}$ and   with support included in $[1, \ell]$.
By normalizing  this error,   we obtain that the   first letter of  the word $Z'_{\sigma^n}$ is $a$,  and
again by left-cancellativity  applied to  $\mu(Z'_{\sigma^n} )=\mu(Z_{\tau^m})$ for the prefix of length $1$ (which is $a$) and by
induction hypothesis for $\ell'=\ell-1$. This ends the proof of Assertion  (\ref{prop:prefixi}). \end{itemize}

We now prove  Assertion (\ref{prop:prefixii}). Assume first that
both $Z_{\tau^m}$ and $Z_{\sigma^n}$ have $b$ at  position
$\ell$.  The normalization  is conducted independently  before and
after  index $\ell$. Assume now  that both $Z_{\tau^m}$ and
$Z_{\sigma^n}$ have $\overline{b}$ at  position $\ell$. Then,
during the normalization, it can only change to $b$ once and in both
$Z_{\tau^m}$ and $Z_{\sigma^n}$. As a  result of application of
the rule which involves $\ell$, the number of $a$'s in the prefix
increases by $1$, and no other rules change the number of $a$'s in
this prefix. So, by Assertion \ref{prop:prefixi}, the prefixes
correspond to equal substitutions.

For the proof of the second part of Assertion
(\ref{prop:prefixii}) and Assertion (\ref{prop:prefixiii}), it is
enough to notice that the only rule changing the number of $a$'s
and $\overline{a}$'s in the prefix of length $\ell$ is the one
changing the spin of the occurrence of $\overline{b}$ at the
position $\ell$ to $b$, which changes the numbers of $a$'s and
$\overline{a}$'s in the prefix by $1$. If the
conditions of Assertion (\ref{prop:prefixiii})  hold,  this rule is applied once for $Z_{\tau^m}$ during the
normalization, and if the
conditions of Assertion (\ref{prop:prefixii})  hold, it is either applied
for both $Z_{\sigma^n}$ and $Z_{\tau^m}$ once, or it is never
applied to either of them.
\end{proof}

The following proposition  is a generalization of Proposition \ref{prop:prefix} when  the errors  inside  $Z_{\sigma^n}$ or $Z_{\tau^m}$  do not have possibly  the same letter.

\begin{proposition}\label{prop:prefix2}
 We assume that  $Z_{\sigma^n}$ and $Z_{\tau^m}$ both contain  errors and  that both leftmost errors  have  the same letter  $a$.
 Let $[i,k+p]$ and $[i',k'+p']$
stand for the respective supports of the leftmost errors
($|\sigma|=k$ and $|\tau|=k'$).
 All the assertions of Proposition  \ref{prop:prefix}  also hold if
 \begin{enumerate}
 \item \label{prefix2:i} $\theta_{\sigma}(a)\neq a$ and  $\theta_{\tau}(a) = a$ and
 $\ell <k+i$,
 \item  \label{prefix2:ii} or (symmetrically) if  $\theta_{\tau}(a)\neq a$ and $ \theta_{\sigma}(a) = a$  and $\ell<k'+i'$,
 \item   \label{prefix2:iii} or (both) if
   $\Theta_{\sigma}(a)\neq a$ and $ \theta_{\tau}(a)\neq a$ and $\ell <\min(k+i, k'+i')$.
   \end{enumerate}
 \end{proposition}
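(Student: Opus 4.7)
The plan is to reduce each of the three cases to Proposition~\ref{prop:prefix} by observing that the length restriction on $\ell$ confines all the relevant block-transformations to a region in which only $a$-errors are visible. Recall that in $Z_{\sigma^n}$ the propagated errors of the first level occur at indices $i+kj$, $j=0,\ldots,n-2$, with respective letters $\theta_{\sigma}^j(a)$; the analogous statement holds for $Z_{\tau^m}$. Hence under the hypothesis $\theta_{\sigma}(a)\neq a$, the second propagated error of the first level in $Z_{\sigma^n}$, which is the first one not of letter $a$, has its support starting at position $k+i$. The bounds $\ell<k+i$ and $\ell<k'+i'$ in cases~(\ref{prefix2:i}), (\ref{prefix2:ii}) and (\ref{prefix2:iii}) therefore guarantee that, within $\pref_\ell(Z_{\sigma^n})$ and $\pref_\ell(Z_{\tau^m})$, only the leftmost error, which is of letter $a$ by assumption, can be involved.

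Granted this observation, I would follow the proof of Proposition~\ref{prop:prefix} verbatim. That proof proceeds by induction on $\ell$ and rests on three ingredients: (i) the $k$-periodicity of $Z_{\sigma^n}$ and $Z_{\tau^m}$, which is preserved in the present setting since neither $\theta_{\sigma}$ nor $\theta_{\tau}$ plays a role inside the relevant prefix; (ii) the fact, recorded in Assertion~\ref{rem:3} of Remark~\ref{remarks}, that the only block-transformations used during the full normalization are of the form $\overline{a}\overline{b_1}\cdots\overline{b_s}a \to a b_1\cdots b_s\overline{a}$ for the distinguished letter $a$; and (iii) left-cancellativity of the episturmian monoid (Proposition~\ref{prop:cancel}). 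Under the present length restrictions, no error of a letter different from $a$ appears in the prefix, so exactly the same induction yields the three assertions of Proposition~\ref{prop:prefix}.

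The only delicate point is the step from Proposition~\ref{prop:prefix} in which $Z_{\sigma^n}[1]=\overline{a}$ and one must argue that a propagated $a$-error must start at index $1$ in order to turn it into $a$ in $[\sigma^n]$. This local argument relies on Lemma~\ref{lem:propa} and on the impossibility of any occurrence of a letter $b\neq a$ with spin $0$ before the first occurrence of $a$; this part is insensitive to whether $\theta_{\sigma}(a)$ equals $a$ or not, so it goes through unchanged. The main obstacle is therefore purely one of bookkeeping: one must verify that Lemma~\ref{lem:leftmost} does indeed prevent propagation of the second error from spilling into the prefix. This is immediate once one notes that the support of the second propagated error of the first level starts at position $k+i$ (respectively $k'+i'$), so under the stated bounds on $\ell$ it lies strictly to the right of the prefix of length $\ell$, and the induction can be carried out exactly as in the proof of Proposition~\ref{prop:prefix}.
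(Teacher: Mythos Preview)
Your proposal is correct and takes essentially the same approach as the paper: the paper's proof is the single sentence ``the proof basically repeats the proof of Proposition~\ref{prop:prefix} (and the assertions of Remark~\ref{remarks} also hold for the index $\ell$ in the indicated prefix),'' and you have simply spelled out why the length bounds $\ell<k+i$, $\ell<k'+i'$ ensure that only the leftmost $a$-error enters the picture, so that the induction of Proposition~\ref{prop:prefix} goes through unchanged.
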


 \begin{proof} The proof basically repeats the proof of Proposition \ref{prop:prefix}  (and  Remark  \ref{rem:new}   also holds for the index $\ell $ in the indicated prefix).
\end{proof}

\subsection{The case of a one-letter normal word} \label{subsec:oneletter}
In this section, we handle  the normalization of $[\sigma]^n$
via block-trans\-for\-ma\-tions in the case where the normalized
directive word of the episturmian substitution $\sigma$ contains
only one letter $\tilde{a}$, i.e., all its letters are equal to
$\tilde{a}$ (this will be used in Lemma \ref{lemma:oneletter}). We
thus have for the normal form of $\sigma$:
$$[\sigma]=a^{s}\overline{a}^t\theta_{\sigma}$$ for some integers
$s,t\geq 0$,  with $s+t\geq 1$.

Consider first  the case where $\theta_{\sigma}(a)=a$. After
applying several times the block-transformation $\overline{a}a\to
a\overline{a}$, we get   for the normal form of $\sigma^n$, for $n
\geq 1$:
$$[\sigma^n]=a^{sn}\overline{a}^{tn}\theta_{\sigma}^n.$$

Now consider the case where $\theta_{\sigma}(a)\neq a$. To get the
normal form of $[\sigma^n]$ from $[\sigma]^n$, we only have to
move the permutation to the right, using the rule
\eqref{eq:permut}. So,
$$[\sigma^n]=a^{s}\overline{a}^{t} \mid
\theta_{\sigma}(a)^{s}\overline{\theta_{\sigma}(a)}^{t} \mid
(\theta^2_{\sigma}(a))^{s}\overline{\theta^2_{\sigma}(a)}^{t} \mid
\cdots \mid
(\theta^{n-1}_{\sigma}(a))^{s}\overline{\theta^{n-1}_{\sigma}(a)}^{t}
\mid \theta_{\sigma}^n.$$

\section{Decomposition lemmas and proof of  Lemma \ref{lem:mainlem}.}\label{sec:lemma}

The aim of this section is to prove Lemma \ref{lem:mainlem}.   We assume that $\sigma^n =\tau^m$, with $ n \geq m\geq 1$.
Observe that
$$[\sigma^n ]=[\tau^m], \qquad Z_{\sigma^n} \equiv [\sigma^n] = [\tau^m]
\equiv Z_{\tau^m}.$$ We want to prove that there exists an
episturmian substitution  $\varrho$   such that  $\tau =\sigma \circ
\varrho$. The letter $k$ stands for  the length $|\sigma|$  of
$\sigma$ and $k'$  stands for  the length  $| \tau | $ of  $\tau$.
One has
$$kn= k'm, \qquad k' \geq k.$$

 We use Notation \ref{notation}. Moreover, when working with $\sigma $ and $\tau$,
 we use letters with primes for $\tau$.  For instance, in the case where there are errors in  both $Z_{\sigma^n}$ and
$Z_{\tau^m}$,
  $a$    stands for the  letter of the  leftmost error in  $Z_{\sigma^n}$ and   $a'$    stands for the  letter of the  leftmost error in $Z_{\tau^m}$.

The proof of  Lemma \ref{lem:mainlem} is given in Section \ref{proof:mainlem} and relies on a succession of lemmas  stated and proved  in Section  \ref{subsec:lemmas}.
The following remark will   be used at several places in the proof.

\begin{remark} \label{rem:prefix}
We recall from Section \ref{subsec:normalform}
that
$[\sigma]= w_{\sigma} \theta_{\sigma}$ , where $w_{\sigma}$ is the normalized directive  word of $\sigma$.
If  the  normalized  directive word $w_{\sigma}$  of  $\sigma$  is a prefix of  the  normalized   directive word  $w_{\tau}$  of  $\tau$,
then  there exists an  episturmian substitution $\varrho$ such that  $\tau=\sigma\circ \varrho$. Indeed, let $w'$   be such that $w_{\tau}= w_{\sigma}w'$. One has
 $$\varrho = \theta_{\sigma}^{-1} \circ w' \circ \theta_{\tau}.$$
%where   $\theta_{\sigma} ^{-1}$ stand
%for  the letter which is in bijection with the permutation
%$\Theta_{\sigma} ^{-1}$.
\end{remark}

\subsection{Proof of  Lemma \ref{lem:mainlem}} \label{proof:mainlem}

We first can   restrict ourselves to the case where  $m \geq 2$.
Indeed, if $m=1$, then  $\sigma^n =\tau$ implies the existence of
 an  episturmian substitution   $\varrho$ such that $\tau=
\sigma \circ \varrho$, namely $\varrho=\sigma^{n-1}$.

We then remark that it is enough to consider the case when at
least one of $Z_{\sigma^n} $ or $Z_{\tau^m}$ has an error. Indeed,
otherwise, one has $Z_{\sigma^n}=[\sigma^n ]$ and
$Z_{\tau^m}=[\tau^m ]$. Since $[\sigma^n ]=[\tau^m ]$, this gives
$Z_{\sigma^n}= Z_{\tau^m}$. This implies that  $[\sigma]$ and
$[\tau]$ coincide up to the permutations  $\theta_{\sigma}$ and
$\theta_{\tau}$, since they are both prefixes of $Z_{\sigma^n}=
Z_{\tau^m}$. We then  can move $\theta_{\sigma}$
 inside $\mu(Z_{\tau^m})$  in such a way that
$\tau= \sigma \circ \varrho$.

So, we now assume that  $n \geq m \geq 2$  and at least one of
$Z_{\sigma^n} $ or $Z_{\tau^m}$  has an error. Here we provide the
list of lemmas handling all the cases needed for the proof of
Lemma \ref{lem:mainlem}. The precise statements and proofs are
provided in the next section. Indices $i$ and $i'$ below refer to Notation \ref{notation}.

%Lemma \ref{lemma:oneletter} handles the case where one of the two substitutions contains only one letter. Consider now the case where both substitutions  admit at least two letters in their normal form. By Lemma \ref{lem:2no},  both have an error. Lemma \ref{lem:diff} handles  the case where   errors  occur with different letters. Lemma  \ref{lem:casepropa} handles the case where both have an error  of type II for their leftmost error with the same letter. Lemma \ref{lem:type33}  handles the case where   both have an error  of type III for their leftmost error with the same letter. Lemma \ref{lem:type32} handles the case where  they  have different types of errors  for their leftmost error with   the same letter.

\begin{itemize}
   \item
Lemma \ref{lem:2no}  states that if  $Z_{\sigma^n}$ or
$Z_{\tau^m}$ has an error,  then so does  the other one.
 \item
 Lemma \ref{lem:prefix} handles the case where
  the left propagated  errors of the first level start at the same index ($i=i'$).

\item Lemma \ref{lem:diff} handles the case  where  the  left propagated  errors of the first level errors   occur in
 $Z_{\sigma^n}$ and $Z_{\tau^m}$ with different letters ($a
\neq a'$).

\item
Lemma  \ref{lem:samelength}  handles the case where $\sigma$ and
$\tau$ have the  same  length.

\item
Lemma  \ref{lem:casepropa}  handles the case where  $Z_{\sigma^n}$
and $Z_{\tau^m}$  both have   left propagated  errors of the first level,  with the same letter,  with
both being of type I.
\item
Lemma \ref{lem:type33}  handles the case where where
$Z_{\sigma^n}$ and $Z_{\tau^m}$  both have   left propagated  errors of the first level, with the same
letter,  with  both being of type II.
\item
Lemma \ref{lem:type32}  handles the case where  $Z_{\sigma^n}$ and
$Z_{\tau^m}$  both have   left propagated  errors of the first level, with the same letter,  but with
different types.
\item
Lastly, Lemma \ref{lemma:oneletter} handles the case where  one of
the  substitutions  $\sigma$ and $\tau$ admits only one letter in
its normal form.
\end{itemize}

\subsection{Decomposition lemmas} \label{subsec:lemmas}

We now state and prove the lemmas that are used in the proof of
Lemma \ref{lem:mainlem}. All lemmas  stated in this section,
except Lemma \ref{lemma:oneletter}  involve substitutions  that
admit at least two distinct  letters in  their normalized
directive sequence. In Lemma \ref{lemma:oneletter} we handle the
one-letter case separately.
% normal form (see Definition \ref{def:two}).
The statements and proof rely on  Notation \ref{notation}.

\begin{lemma}\label{lem:2no}  Let $\sigma$ and $\tau $ be episturmian  substitutions  such that
 $\sigma^n =\tau^m$, for $n,m \geq 2$, with   $\sigma$ and $\tau$  both admitting   at least  two  distinct letters
in their  normal form.
If
 one of the two words $Z_{\sigma^n}$ or $Z_{\tau^m}$    has an error,  then the  other one
also has   an error. \end{lemma}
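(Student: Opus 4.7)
The plan is to argue by contradiction. Assume without loss of generality that $Z_{\sigma^n}$ contains an error while $Z_{\tau^m}$ has none; then $Z_{\tau^m}=[\tau^m]=[\sigma^n]$, and we aim to derive an inconsistency. Let $[i,k+p]$ denote the support of the leftmost propagated error of the first level in $Z_{\sigma^n}$, with letter $a$, so that $1\leq i\leq k$ and $p\geq 1$.

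The first step is to compare the normalized directive words of $\sigma$ and $\tau$ on their initial segment. By Assertion~(\ref{rem:norm}) of Remark~\ref{remarks}, the prefix of length $i-1$ of $Z_{\sigma^n}$ coincides with $[\sigma][1,i-1]$; since block-normalization only modifies positions inside error supports, this prefix also coincides with $[\sigma^n][1,i-1]$. Since $i-1<k\leq k'$, the first $i-1$ letters of $Z_{\tau^m}$ are those of the directive word of $[\tau]$, and the equality $Z_{\tau^m}=[\sigma^n]$ forces $[\sigma][1,i-1]=[\tau][1,i-1]$. At position $i$, however, Lemma~\ref{lem:propa}, Assertion~(\ref{IIii}) (or Lemma~\ref{lem:type3}, Assertion~(\ref{III:viii}) in the type-II case) gives $[\sigma^n][i]=a$, while $[\sigma][i]=\overline{a}$, so $[\tau][i]=a$. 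Moreover, the minimality of $i$, combined with Definition~\ref{def:maximal}, forbids the existence in $[\sigma][1,i-1]$ of any $\overline{a}$ whose right-hand neighbours up to position $i-1$ are all barred non-$a$ letters: otherwise the propagated error would start strictly before $i$.

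The contradiction is extracted at the boundary between consecutive periods of $Z_{\tau^m}$. On the one hand, by Lemma~\ref{lem:propa}, Assertion~(\ref{IIiii}) (type~I) or Lemma~\ref{lem:type3}, Assertion~(\ref{III:viii}) (type~II), the normalization of the leftmost error in $Z_{\sigma^n}$ flips the spin of the letter $a$ from $0$ to $1$ at the right endpoint of the full propagated support, and by Lemma~\ref{lem:propa}, Assertion~(\ref{IIi}), every letter distinct from $a$ inside the error support becomes spin-$0$ in $[\sigma^n]$. On the other hand, $Z_{\tau^m}$ being in normal form forbids any factor of the form $\overline{c}\,\overline{\mathcal{A}'}^*c$ both inside a single period $[\tau]$ and across any of the boundaries $jk'\mid jk'+1$ of the periodic concatenation $[\tau]\,\widetilde{\theta_\tau([\tau])}\,\cdots$. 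Using $Z_{\tau^m}=[\sigma^n]$ to translate the prescribed spins at the $k$-periodic positions $i+k\ell$ (where $\overline{a}$ becomes $a$) and $k+p+k\ell$ (where $a$ becomes $\overline{a}$) dictated by the repeated errors in $Z_{\sigma^n}$ into the $k'$-periodic structure of $Z_{\tau^m}$, one exhibits such a forbidden factor in $Z_{\tau^m}$, contradicting the hypothesis that it has no error.

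The main obstacle lies in the combinatorial case analysis: one must treat separately the cases $k=k'$ (equivalently $n=m$) and $k<k'$, distinguish type-I from type-II errors, and within type~I allow for the possible presence of a central part as in Lemma~\ref{lem:propa}, Assertion~(\ref{IIv}). In each case the explicit spin modifications coming from Lemmas~\ref{lem:propa} and~\ref{lem:type3} clash with the $k'$-periodic pattern inherited from $[\tau]$, yielding the contradiction; the symmetric situation, where $Z_{\tau^m}$ has an error and $Z_{\sigma^n}$ has none, is handled by interchanging the roles of $\sigma$ and $\tau$.
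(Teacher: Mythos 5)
Your reduction to a contradiction is the right opening move, but the proof never actually reaches the contradiction: the decisive step is only announced. There is also a conceptual problem with the announced mechanism. Under your assumption one has $Z_{\tau^m}=[\tau^m]=[\sigma^n]$, and $[\sigma^n]$ is by definition in normal form, so $Z_{\tau^m}$ cannot contain a forbidden factor $\overline{c}\,\overline{\mathcal A}^{*}c$ --- there is no forbidden factor to ``exhibit''. The only usable constraint on the $\tau$-side is that the spins of $Z_{\tau^m}$ are $k'$-periodic (it is the concatenation of $m$ spinned copies of $w_\tau$, up to the permutation), and the contradiction must come from clashing this $k'$-periodicity against the non-$k'$-periodic spin pattern that normalization imprints on $[\sigma^n]$ (spin $0$ at index $i$ but spin $1$ at $i+k(n-1)$; spin $0$ at $p$ but spin $1$ at $p+kj$ for $j\geq 1$). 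That clash is immediate when $k'$ is a multiple of $k$ (in particular when $k=k'$), since then two positions at distance a multiple of $k'$ carry different spins in $[\sigma^n]$. When $k\nmid k'$ none of these distinguished positions line up modulo $k'$, and it is not clear how to conclude; this is exactly the case your closing paragraph on ``combinatorial case analysis'' gestures at without treating.

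The paper circumvents this difficulty entirely with a squaring trick: assuming (w.l.o.g.) that $Z_{\tau^m}$ has an error and $Z_{\sigma^n}$ does not, one checks that at the single index $k'(m-1)+i'$ the spin is $1$ in $[\tau^m]$ (this index lies after the last error support, hence is untouched by normalization) but $0$ in $[\tau^{2m}]$ (there it is the first letter of a propagated error, hence flipped by Lemma \ref{lem:propa}), so the spin words of $[\tau^m]^2$ and $[\tau^{2m}]$ differ; whereas the absence of errors in $Z_{\sigma^n}$ forces $Z_{\sigma^{2n}}$ to be error-free as well, so $[\sigma^{n}]^2$ and $[\sigma^{2n}]$ have identical spin words --- contradicting $[\sigma^n]=[\tau^m]$ and $[\sigma^{2n}]=[\tau^{2m}]$. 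To complete your argument you would either need to supply the missing analysis for $k\nmid k'$ or adopt a device of this kind.
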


\begin{proof}
We assume  w.l.o.g.  that  $Z_{\tau^m}$ has an error. We work by
contradiction  and we assume that $Z_{\sigma^n}$ has no error.
First notice that during the  normalization process the binary word
constituting of spins becomes lexicographically smaller with each
modification \eqref{eq:block_transf}: Indeed, each such
modification corresponds to replacing a factor $01\cdots 1$ by a
factor $10\cdots0$ of the same length. This means in particular
that if there is an error, then the word made of spins for the
normalized word is different  from the original word.

Now, due to $k'$-periodicity of the errors, $[\tau^{m}]^2$
contains an error (in fact, if $[\tau]^m$ contains a leftmost
simple error at position $s$, then $[\tau^{m}]^2$ contains a
simple error at position $((k'-1)m+s)$). Thus the binary spin
words are different for $[\tau^{m}]^2 $ and $ [\tau^{2m}]$.

However, $ [\sigma^{n}] =  [\tau^{m}]$ and   $ [\sigma^{2n}] =
[\tau^{2m}]$. % SP: This is used directly in the next sentence, so no explanation is needed. I added a formula just in case
Since  $Z_{\sigma^n}$ contains no error, we have
$[\sigma^{2n}]=[\sigma^{n}] ^2$, so, the words
  made of the spins are the same  for  $[\sigma^{2n}]$ and  for $[\sigma^{n}] ^2$, which yields  a contradiction with
   the  words   made of the spins  for  $[\tau^{m}]^2 $ and  $ [\tau^{2m}]$, which are not the same.
\end{proof}
\begin{lemma} \label{lem:prefix}
Let $\sigma$ and $\tau $ be episturmian  substitutions  such that
$\sigma^n =\tau^m$, with $ n \geq m\geq 2$, with  $\sigma$ and
$\tau$ both admitting   at least  two  distinct letters in their
normal form
  and with $Z_{\sigma^n}$ or $Z_{\tau^m}$  both having  an error. If
both    leftmost   errors  start at the same
index ($i=i')$, then there exists  an  episturmian substitution
$\varrho$ such that  $\tau =\sigma \circ \varrho$.
\end{lemma}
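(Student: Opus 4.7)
The plan is to show that the normalized directive word $w_\sigma$ is a prefix of $w_\tau$, so that Remark~\ref{rem:prefix} directly provides an episturmian substitution $\varrho$ with $\tau = \sigma \circ \varrho$. Since $|w_\sigma| = k \leq k' = |w_\tau|$, this reduces to matching the first $k$ spinned letters of $Z_{\sigma^n}$ with those of $Z_{\tau^m}$.

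First, I would show that the letter $a$ of the leftmost error in $Z_{\sigma^n}$ coincides with the letter $a'$ of the leftmost error in $Z_{\tau^m}$. Indeed, by Lemma~\ref{lem:propa}, Assertion~\ref{IIii}, both leftmost errors begin at position $i = i'$ with a barred letter; and Assertion~\ref{rem:4} of Remark~\ref{remarks} forces the underlying letters at any common index of $Z_{\sigma^n}$ and $Z_{\tau^m}$ to agree, whence $a = a'$.

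Next, I would compare the prefixes of length $i-1$. All block-transformations performed during the normalization of $Z_{\sigma^n}$, whether of the first or second level, act strictly inside the supports of the propagated errors, the first of which begins at index $i$. Hence positions $1, \ldots, i-1$ of $Z_{\sigma^n}$ are never modified and coincide with the corresponding prefix of $[\sigma^n]$; the analogous statement holds for $\tau$. Since $[\sigma^n] = [\tau^m]$ and $i = i'$, the prefixes of length $i-1$ of $Z_{\sigma^n}$ and $Z_{\tau^m}$ are equal.

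I would then handle positions $i, i+1, \ldots, k$. By Lemma~\ref{lem:propa}, Assertion~\ref{IIi}, any letter $b \neq a$ inside the support $[i, k+p]$ of the leftmost error in $Z_{\sigma^n}$ occurs with spin $1$, and by Assertion~\ref{IIbis} each occurrence of $a$ at an index at most $k$ also has spin $1$; thus every entry of $Z_{\sigma^n}$ on $[i, k]$ is barred. Applying the same assertions to $\tau$ on the sub-interval $[i, k] \subseteq [i, k'+p']$, which is legitimate since $k \leq k'$, every entry of $Z_{\tau^m}$ on $[i, k]$ is also barred. Combined once more with Assertion~\ref{rem:4} of Remark~\ref{remarks}, positions $i, \ldots, k$ of $Z_{\sigma^n}$ and $Z_{\tau^m}$ coincide verbatim. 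Together with the previous paragraph, $w_\sigma$ is a prefix of $w_\tau$, and Remark~\ref{rem:prefix} yields the desired $\varrho$. The main subtlety is keeping track of the fact that Assertions~\ref{IIi} and~\ref{IIbis} of Lemma~\ref{lem:propa} describe the original word $Z_{\sigma^n}$ before any block-transformation, so the spin analysis applies uniformly regardless of whether the errors are of type~I or of type~II.
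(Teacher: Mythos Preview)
Your proof is correct and follows essentially the same route as the paper: show that $w_\sigma$ is a prefix of $w_\tau$ by arguing that the two words $Z_{\sigma^n}$ and $Z_{\tau^m}$ agree on positions $1,\ldots,i-1$ (untouched by normalization) and that all spins on $[i,k]$ are equal to~$1$ in both (by Lemma~\ref{lem:propa}), then invoke Remark~\ref{rem:prefix}. Your explicit verification that $a=a'$ is an added detail the paper leaves implicit; it is correct, though in fact not strictly needed, since Assertions~\ref{IIi} and~\ref{IIbis} of Lemma~\ref{lem:propa} applied to~$\tau$ already force every letter (whether equal to $a'$ or not) to be barred on $[i',k']\supseteq[i,k]$.
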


\begin{proof}
The normal form $[\sigma]$  deprived from its last  permutation letter $\theta_{\sigma}$ (i.e., the normalized   directive word $w_{\sigma}$),   is a prefix of  $Z_{\sigma^n}$ by definition.  It is also a prefix of  $ Z_{\tau^m}$. Indeed, we have the same spins in  $Z_{\sigma^n}$ and $ Z_{\tau^m}$  before the  beginning $i
$ of   the leftmost error.  By Lemma \ref{lem:propa}, between  $i$ and $k$, all the spins are  equal to $1$ in  $Z_{\sigma^n}$, and similarly  between  $i$ and $k'$ (we have $i'=i$), all the spins are equal to $1$ in  $Z_{\tau^m}$.
This implies that  $w_{\sigma}$  is a   prefix of $Z_\tau$.  The  desired  conclusion comes from   Remark \ref{rem:prefix}. \end{proof}

\begin{lemma} \label{lem:diff}  Let $\sigma$ and $\tau $ be episturmian  substitutions  such that $\sigma^n =\tau^m$, for
$ n \geq m \geq 2$,  with   $\sigma$ and $\tau$  both admitting at
least  two distinct letters in their normal form. We assume that $Z_{\sigma^n}$ and $Z_{\tau^m}$ both
contain  errors and that the leftmost errors have different
letters. Then there exists  an episturmian substitution $\varrho$
such that $\tau =\sigma \circ \varrho$.
\end{lemma}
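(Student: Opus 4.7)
The plan is to establish $\tau = \sigma \circ \varrho$ by producing a spinned word $w$ such that $w_\sigma \cdot w$ is block-equivalent to $w_\tau$; by Theorem~\ref{theo:spinned} this would yield $\mu(w_\tau) = \mu(w_\sigma) \circ \mu(w)$, so that $\varrho := \theta_\sigma^{-1} \circ \mu(w) \circ \theta_\tau$ is an episturmian substitution with $\tau = \sigma \circ \varrho$.

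The key initial observation is that the hypothesis $a \neq a'$ forces $i \neq i'$: indeed, by Lemma~\ref{lem:propa} Assertion~(\ref{IIii}) when the errors are of type~I, or Lemma~\ref{lem:type3} Assertion~(\ref{III:viii}) when of type~II, one has $[\sigma^n][i] = a$ and $[\tau^m][i'] = a'$, and the equality $[\sigma^n] = [\tau^m]$ combined with $a \neq a'$ gives $i \neq i'$. Without loss of generality I assume $i' < i$. Then position $i'$ lies in the untouched prefix of $Z_{\sigma^n}$, so $Z_{\sigma^n}[i'] = [\sigma^n][i'] = a'$ with spin~$0$, whereas $Z_{\tau^m}[i'] = \overline{a'}$; moreover, by Lemma~\ref{lem:propa} Assertions~(\ref{IIi}) and~(\ref{IIbis}) applied to $\tau$, every letter of $w_\tau[i',k']$ carries spin~$1$.

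The main step is to exhibit the spinned word $w$, which I intend to build from a shifted suffix of $Z_{\tau^m}$, so that the concatenation $w_\sigma \cdot w$ has, starting at position $i'$, a factor of the form $a' v' \overline{a'}$ with $v'$ a sequence of spin-$0$ non-$a'$ letters; the block-transformation~\eqref{eq:block_transf} then turns this factor into $\overline{a'} \overline{v'} a'$, matching $w_\tau$ at position $i'$. The hard part will be verifying that the letters of $w_\sigma$ in the range $(i', \min(i, k'))$ are compatible with this block-transformation, which requires exploiting $a \neq a'$ together with the structural constraints imposed by Lemma~\ref{lem:propa} applied simultaneously to both $\sigma$ and $\tau$; a suitably adapted variant of Proposition~\ref{prop:prefix2} may also be needed to control the spin pattern across the overlap region. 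Once the block-equivalence $w_\sigma \cdot w \equiv w_\tau$ is secured, the conclusion follows immediately from Theorem~\ref{theo:spinned}.
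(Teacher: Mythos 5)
There is a genuine gap: your argument is a plan, not a proof. The central step is explicitly deferred --- ``I intend to build $w$ from a shifted suffix\dots'', ``the hard part will be verifying\dots'', ``a suitably adapted variant of Proposition~\ref{prop:prefix2} may also be needed'' --- and none of that verification is carried out. Worse, the reduction ``without loss of generality $i'<i$'' is not legitimate: the hypotheses and the conclusion are asymmetric in $\sigma$ and $\tau$ (from $n\geq m$ and $kn=k'm$ one gets $k'\geq k$, and the target is $\tau=\sigma\circ\varrho$, not $\sigma=\tau\circ\varrho$), so you cannot swap the two substitutions. In fact the configuration you chose to analyze never occurs: under the hypotheses of the lemma one always has $i<i'$, so your entire argument addresses a vacuous case and the real case is left untouched.

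The missing idea is a case analysis on the \emph{types} of the errors, which is where all the content of the lemma lies. The paper shows that when the leftmost errors have different letters $a\neq a'$: (i) both errors cannot be of type~II (comparing the ending indices $k(n-1)+p$ and $k'(m-1)+p'$ of the full propagated errors and using Assertions~(\ref{III:vii})--(\ref{III:viii}) of Lemma~\ref{lem:type3} yields a contradiction); (ii) the mixed-type case is likewise impossible (using the positions $i$ and $k+p$ where the spin flips, and passing to higher powers if necessary); and (iii) in the only surviving case, both errors of type~I, the condition $a\neq a'$ forces the supports $[i,k+p]$ and $[i',k'+p']$ to be disjoint with $i'>k+p$ (since $[\sigma^n][k+p]=\overline{a}$ by Assertion~(\ref{IIiii}) while any position inside the $a'$-error of $Z_{\tau^m}$ carries the letter $a$ unbarred by Assertion~(\ref{IIi})). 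Consequently $\pref_{k+p}Z_{\tau^m}=\pref_{k+p}[\tau^m]=\pref_{k+p}[\sigma^n]$, and Assertion~(\ref{II9}) of Lemma~\ref{lem:propa} gives $\mu(\pref_{k+p}[\sigma^n])=\sigma\circ\varrho_1$; since $k+p\leq k'$ this prefix is itself a left factor of $\tau$, whence $\tau=\sigma\circ\varrho$. Note in particular that the factorization is obtained by comparing already-normalized prefixes and invoking left-cancellativity, not by exhibiting an explicit chain of block-transformations from $w_\sigma\cdot w$ to $w_\tau$ as you propose; without the disjointness of supports established in (iii), the block-equivalence you hope to ``secure'' cannot be controlled.
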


\begin{proof}
Let $a$  be the letter of  the leftmost error in  $Z_{\sigma^n}$
for $\sigma$ and let  $a'$  be the letter  of the leftmost error
for $\tau$. We assume $a \neq a'$. We recall that  since $ n \geq
m$, then $k' \geq k$. Consider three cases according to the  types of
errors.
\begin{itemize}
\item
Assume first  that   the errors are of type I for both substitutions. One thus has $Z'_{\sigma^n}=[\sigma^n]= [\tau^m]=Z'_{\tau^m}$, by  Assertion (\ref{II10}) from
Lemma \ref{lem:propa}.  The  support $[i' ,k'+p']$  of the  leftmost propagated error of the first level in $Z_{\tau^m}$ cannot contain  $k+p$. Indeed,
 at position
$k+p$,   one has ${a}$  in   $Z'_{\tau^m}=[\tau^m]$ by  Assertion   (\ref{IIi}) ($a'\neq a$),  and  $\overline{a}$  in   $[\sigma^n]$   by  Assertion  (\ref{IIiii}).
Similarly,  the  support $[i , k+p]$  of the  leftmost propagated error of the first level in $Z_{\sigma^n}$ cannot contain  $k'+p'$.
Since $ k' \geq k  \geq i$, this   implies $k'+p '  >  i$, and thus  $k'+p'  > k+p$, and $i' >k+p$ from what precedes.
 Hence  the
 leftmost propagated error of the first level  in $Z_{\tau^m}$ starts strictly  after $k+p$. In particular   $k+p \leq  k'$ and $\pref_{k+p} Z_{\tau^m}=\pref_{k+p} [\tau]$.
Moreover,
the prefix of  length $k+p$ in   $Z_{\tau^m}$ is not modified by block-normalization  since $ k+p <i'$.
This yields  $\pref_{k+p} Z_{\tau^m}=  \pref_{k+p} [\tau^m] =\pref_{k+p} [\sigma^n]  \equiv   \pref_{k+p} Z_{\sigma^n}$.
By Assertion (\ref{II9}), there exists  an episturmian substitution  $\varrho_1$ such that
$ \mu( \pref_{k+p} [\sigma^n]) = \sigma \circ  \varrho_1$.
Since $ k +p \leq k'$,   there exists  an episturmian substitution $\varrho_2$ such that
$\tau =  \mu( \pref_{k+p} Z_{\tau^m})   \circ  \varrho_2$,
   which gives the existence of an episturmian substitution
 $\varrho$ such that  $\tau =\sigma \circ \varrho$.

\item The case where  the errors are of type II  for both substitutions is  impossible.  Let us prove it by contradiction and consider    the indices in  $Z'_{\sigma^n}$   and   $Z'_{\tau^m}$  where
the errors  end, namely    $k(n-1)+p   $ and $ k'(m-1)+p'$, respectively,  by   Lemma \ref{lem:type3}. Note that $a\neq a'$ implies that  $k(n-1)+p   \neq  k'(m-1)+p'$. Suppose that    $k(n-1)+p   > k'(m-1)+p'$.   By Assertion (\ref{III:viii}) from
 Lemma \ref{lem:type3},  the  letter at index  $k'(m-1)+p$
in  $[\tau^m]$ is $ {a'}$.  Since $a'\neq a$ and  since
$k'(m-1)+p$  is inside the error  for $\sigma$,  it occurs  with
spin $1$ in $[\sigma^n]$ by Assertion (\ref{III:vii}), and we  get
the desired contradiction  from $[\sigma^n]=[\tau^m]$. The same
reasoning applies if  the error in $\tau$ ends later than in
$\sigma$.
%Similarly we  prove that the type III error cannot start in  second.
%Note that $k+p  \neq k'+p'$  by  by  Assertion   (\ref{IIiii}) in  Lemma \ref{lem:propa}.

%In particular, this means that if $a\neq a'$, we cannot have
%errors of type III in both $Z_{\sigma^n}$ and $Z_{\tau^m}$.

\item Now, suppose that   errors are  of different types.
Assume first  that $Z_{\sigma^n}$  admits    errors of
type I. We have $Z_{\sigma^n}[i]=\overline{a}$ and
$Z'_{\sigma^n}[i]=a$, as it is the first letter of the error, and
$Z_{\sigma^n}[k+p]=a$ and $Z'_{\sigma^n}[k+p]=\overline{a}$. By
Assertion (\ref{III:vii}) of Lemma \ref{lem:type3} and since $a\neq
a'$,  $k+p$ cannot belong to  the interval  $[i',p'+k'(m-1)]$  (which corresponds to the set of indices of letters  involved in the first and  second level of block-transformations in   $Z_{\tau}$), so either $k+p<i'$, or $k+p>k'(m-1)+i'$. In the first
case we have a contradiction with the same assertion at position
$i<k+p<i'$. In the second case we can take squares $\sigma^{2n}$
and $\tau^{2m}$ (or bigger powers), so that the inequality does
not hold with $2m$ instead of $m$. The case when $Z_{\tau^m}$
admits  errors of type I is symmetric. \qedhere
\end{itemize}
\end{proof}

\begin{lemma}\label{lem:samelength}
Let $\sigma$ and $\tau $ be episturmian  substitutions  such that $\sigma^n =\tau^m$,  for $ n,  m \geq 2$, with   $\sigma$ and $\tau$  both admitting   at least  two  distinct letters
in their  normal form.
 If  both  substitutions   have the same length, then there  exists   an  episturmian substitution
$\varrho$ such that  $\tau =\sigma \circ \varrho$  (and $\varrho$ is a  permutation).
\end{lemma}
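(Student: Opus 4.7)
The plan is to reduce the statement to cases already treated by the preceding decomposition lemmas, leveraging the fact that the equality of lengths automatically forces $\varrho$ to be a permutation. First, I would observe that $|\sigma|=|\tau|=k$ together with $kn=k'm$ yields $n=m$, so that $\sigma^n=\tau^n$, and $Z_{\sigma^n}$ and $Z_{\tau^n}$ both have $k$-periodic structure with common normal form $[\sigma^n]=[\tau^n]$. Moreover, any episturmian substitution $\varrho$ satisfying $\tau=\sigma\circ\varrho$ must have $|\varrho|=|\tau|-|\sigma|=0$ and is therefore automatically a permutation. So it suffices to produce any episturmian $\varrho$ with $\tau=\sigma\circ\varrho$.

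Second, I would split according to the presence of simple errors. If $Z_{\sigma^n}$ admits no simple error, Lemma \ref{lem:2no} gives that $Z_{\tau^n}$ admits none either, so both coincide with the common normal form and in particular $Z_{\sigma^n}=Z_{\tau^n}$. Comparing the first $k$ letters yields $w_\sigma=w_\tau$, and taking $\varrho=\theta_\sigma^{-1}\circ\theta_\tau$ gives $\sigma\circ\varrho=\mu(w_\sigma)\circ\theta_\sigma\circ\theta_\sigma^{-1}\circ\theta_\tau=\mu(w_\tau)\circ\theta_\tau=\tau$, as required.

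Third, when both $Z_{\sigma^n}$ and $Z_{\tau^n}$ admit simple errors, I would denote by $a,a'$ the letters of their respective leftmost propagated errors, with starting indices $i,i'$. If $a\neq a'$, Lemma \ref{lem:diff} delivers the required $\varrho$. If $a=a'$ and $i=i'$, Lemma \ref{lem:prefix} does the same. In both subcases, the length identity from the first paragraph forces $\varrho$ to be a permutation.

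Fourth, the only remaining configuration is $a=a'$ with $i\neq i'$, which I would show cannot occur. Assume without loss of generality $i<i'$. Since $\sigma$'s leftmost propagated error has support $[i,k+p]$ and $i'\leq k$, position $i'$ lies strictly inside this support; the underlying letter there is $a$ (since it coincides with $Z_{\tau^n}[i']=\overline{a}$), so Lemma \ref{lem:propa}(\ref{IIbis}) gives $Z_{\sigma^n}[i']=\overline{a}$. On the other hand, Lemma \ref{lem:propa}(\ref{IIii}) applied at $\tau$'s leftmost error yields $[\tau^n][i']=a$. When $\sigma$'s leftmost error has no central part, Lemma \ref{lem:propa}(\ref{IIiv}) shows that the spin at $i'$ is preserved, so $[\sigma^n][i']=\overline{a}\neq a=[\tau^n][i']$, contradicting $[\sigma^n]=[\tau^n]$. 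The main obstacle, and the step requiring the most care, is the subcase where $\sigma$'s leftmost error contains a central part and $i'$ falls inside its $\overline{a}$-block; here I would exploit the precise description of the normalization given by Lemma \ref{lem:propa}(\ref{IIv}), combined with a parallel examination at position $i$ (where $[\sigma^n][i]=a$ by Lemma \ref{lem:propa}(\ref{IIii}) must match $[\tau^n][i]=w_\tau[i]$, forcing $w_\tau[i]=a$ while $w_\sigma[i]=\overline{a}$), together with a similar comparison at the symmetric endpoint $k+p$ versus $k+p'$, to derive the final contradiction and thus rule out this last case.
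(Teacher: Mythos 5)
Your overall architecture is sound: the reduction $n=m$, the automatic fact that any $\varrho$ with $\tau=\sigma\circ\varrho$ has length $0$ and is a permutation, the dispatch of the no-error case, of $i=i'$ via Lemma \ref{lem:prefix}, and of $a\neq a'$ via Lemma \ref{lem:diff} (the paper instead shows directly that $i\neq i'$ forces $a=a'$ here, but your reduction is legitimate and not circular). Your treatment of the subcase $a=a'$, $i<i'$ with no central part is also correct: the spin at $i'$ is preserved under normalization of $Z_{\sigma^n}$ while Lemma \ref{lem:propa}(\ref{IIii}) forces $[\tau^n][i']=a$, a contradiction.

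The genuine gap is the final subcase, where $i'$ lies in the $\overline{a}$-block of a central part of $\sigma$'s leftmost error. There your local spin-tracking breaks down for a structural reason: by Lemma \ref{lem:propa}(\ref{IIv}) the central part $\overline{a}^{\,r}\mid a^{q}$ normalizes to $a^{q'}\overline{a}^{\,r'}$, so the spin at $i'$ \emph{can} legitimately flip to $a$ during the normalization of $Z_{\sigma^n}$, and no contradiction arises at position $i'$ itself; one can check that in this configuration $\tau$ also acquires a central part, so the symmetric comparison at $i$ and at $k+p$ versus $k+p'$ that you propose yields only consistent constraints, not an immediate contradiction. You explicitly flag this as ``the step requiring the most care'' and then list ingredients without deriving anything, so the hardest case is unproven. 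The paper closes this case by a global counting argument rather than position-by-position spin tracking: when $\theta_\sigma(a)=a$ every block-transformation preserves the numbers of $a$'s and $\overline{a}$'s, so $|\mathrm{pref}_k(Z_{\sigma^n})|_a=|\mathrm{pref}_k(Z_{\tau^n})|_a$, which is incompatible with $Z_{\sigma^n}$ having an extra unbarred $a$ in the interval between the two starting indices; and when $\theta_\sigma(a)\neq a$ it compares the two words in the \emph{last} period, where the letter of the error has rotated away from $a$. Some such non-local ingredient (the invariance of the $a$-count, i.e.\ Proposition \ref{prop:prefix}, or the last-period argument) is needed; without it your case analysis does not close.
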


\begin{proof} First note that  since $|\sigma|=|\tau|$, then
$n=m$.  We also note that in the case of equal
lengths, we must have, for each letter $\tilde{b}$ occurring in
$Z_{\sigma}$,  that it  also occurs in $Z_{\tau}$ at the same
positions, and moreover $\theta_{\sigma}(b)=\theta_{\tau}(b)$.
Otherwise, in the normal form we have different letters in
the second period at a same position in $[\sigma^n]$ and in
$[\tau^n]$.

By Lemma \ref{lem:2no}, $Z_{\sigma^n}$  and $Z_{\tau^n}$  both  have an error.
We distinguish  two cases according to the fact that they start at the same position, or not.

\begin{itemize}
\item Let us assume that   the leftmost  errors in $Z_{\sigma^n}$  and $Z_{\tau^n}$ start at the same position, i.e., $i=i'$.  Lemma \ref{lem:prefix}
together with   $|\sigma|=|\tau|$  imply that   there exists  a permutation
$\varrho$ such that  $\tau =\sigma \circ \varrho$.

\item Assume  now that $Z_{\sigma^n}$  and $Z_{\tau^n}$   are such that their leftmost errors   start at different
positions, i.e., $ i\neq i'$. We are going to prove that this case is impossible.

%By  Lemma \ref{lem:diff}, we can assume that  $\sigma$ and $\tau$ have the same letter of
%the leftmost error.
%We assume $i'<i$ w.l.o.g.
%Now consider the first occurrence of a
%non-barred letter in $Z_{\tau}$. Clearly, this occurrence is at
%most $p'$ (by the structure of any kind of error: a propagated
%error ends with a non-barred letter), and hence is smaller than
%$i'$. In the first occurrence of $Z_{\tau}$ and $Z_{\sigma}$
%nothing changes during normalization before the beginning $i'$ of
%the earlier error, so they must be equal in this prefix, and hence
%the first non-barred letter is at the same place, and its image at
%the corresponding position in the second period is the letter of
%the leftmost error.

We assume $i'<i$ w.l.o.g. This implies that
$Z_{\sigma^n}$ and $Z_{\tau^n}$ coincide before $i'$ (i.e., in
$[1,i'-1]$), because nothing changes here during the
normalization. Let $c$ be the first non-barred letter. The letter
of the error in $Z_\tau$ is the first non-barred letter in the
second period, which is $\theta_{\tau}(c)$. The same argument
applies for $\sigma$, so the letters of the leftmost errors
coincide in both substitutions.

Now consider two further cases.
\begin{itemize}
\item  We assume $\theta_{\sigma}(a)=a$ (hence by the above $\theta_{\tau}
(a)=a$).

%We  also assume  $\Theta_{\tau} (a)=a$.
Since all the errors are
$a$-errors,  the respective total numbers of occurrences of $a$'s
and $\overline{a}$'s do not change during the normalization. Hence
they are   equal in both $Z_{\sigma^n}$ and $Z_{\tau^m}$, and
since $|\sigma|=|\tau|$, they are also equal in the first period,
i.e., $|\pref_{k}Z_{\sigma^n}|_a=|\pref_k Z_{\tau^n}|_a$. Note
that we use   the fact  that all  errors are $a$-errors in
$Z_{\tau^n}$. We recall that %$i'<i$.  This implies that
$Z_{\sigma^n}$ and $Z_{\tau^n}$ coincide before $i'$ (i.e., in
$[1,i'-1]$), % because nothing changes here during the normalization,
and in $[i,k]$ (this is due to the structure of
errors:  all the letters there have spin $1$). In $Z_{\tau^n}$,
all the letters have spin $1$ also between $i'$ and $i$. After
normalization we have $[\tau^n][i']=a$, so we must have
$Z_{\sigma^n}[i']=a$ (since $i'<i$, i.e.,  in $Z_{\sigma^n}$, the
position $i'$ is before the occurrences of the errors, so it does
not change during normalization). So, we have $|\pref_{k}
Z_{\sigma^n}|_{a}>|\pref_{k} Z_{\tau^n}|_{a}$, a contradiction.

\item We now assume $\theta_{\sigma}(a)\neq a$ (and hence $\theta_{\tau}(a)\neq a$).
In particular, we have no type II error by Lemma \ref{lem:type3}
in $Z_{\sigma^n}$  and in $Z_{\tau^n}$, and the
letter of the error changes in every period. Consider the position
$i'$. As in the previous case, we have $Z_{\sigma^n}[i']=a$ (since
$i'<i$), and thus, in the last period, this gives
$Z_{\sigma^n}[k(n-1)+i'] = {\theta_{\sigma}^{n-1}(a)}$.    We have
$Z_{\tau^n}[k(n-1)+i'] =
[Z_{\tau^n}][k(n-1)+i']=\overline{{\theta_{\tau}^{n-1}(a)}}$
(since   $k(n-1) + i ' >  k(n-1) +p'$, we are after the  end of
the  last error in $Z_{\tau^n}$ and we use Assertion (\ref{III:ix})
from Lemma \ref{lem:type3}).  So, we must have
$[Z_{\sigma^n}][k(n-1)+i']=\overline{{\theta_{\sigma}^{n-1}(a)}}$,
while $Z_{\sigma^n}[k(n-1)+i'] = {\theta_{\sigma}^{n-1}(a)}$. This
implies that during the normalization for $\sigma^n$,  the letter
at index  $k(n-1)+i' $ had to change its spin. This can happen
only if this position is  inside the error between  the $(n-1)$-th
and  the $n$-th periods for $\sigma^n$,   and only if this letter
is the letter of the error. But the letter of the error is
$\theta_{\sigma}^{n-2}(a)$, and we are in the case
$\theta_{\sigma}(a)\neq a$, which yields  the desired
contradiction. \qedhere
\end{itemize}
\end{itemize}
\end{proof}

%\begin{lemma}\label{lem:start}
%Assume that the error is of type II in $Z_{\tau^m}$.
 %If $i'<i$,  then $Z_{\tau^m} (i)=\overline{\psi}_a$ and  $[Z_{\tau^m}] (i) {\psi}_a$.
%\end{lemma}
 In the proof of the next lemma (namely Lemma \ref{lem:casepropa})
  we will use several times the following  argument.
\begin{claim}\label{claim:alla}
%We assume  that $\sigma$ contains at least two letters. %S: Repeat
Let $\sigma$ and $\tau $ be episturmian  substitutions  such that $\sigma^n =\tau^m$, for $ n > m\geq 2$, with   $\sigma$ and $\tau$  both admitting   at least  two  distinct  letters
in their  normal form. We assume  that  $Z_{\sigma^n}$ and $Z_{\tau^m}$ have  an error with the same letter $a$ for their leftmost error (with possibly  different types
of errors). Let the first propagated error of the first level  in $Z_{\tau^m}$ start
before the end of the first  period in $Z_{\sigma^n}$, i.e.,
$i'\leq k$. Then, either there is no central part  in $Z_{\tau^m}$,
or,  for any $i'\leq t \leq k$, the position $t$ is not in the
central part of the propagated error of the first level  in  $Z_{\tau^m}$. In
particular, there exists a letter ${b} \neq a$ (that occurs as
$\overline{b}$) in $[i',k']$ in  $Z_{\tau^m}$. \end{claim}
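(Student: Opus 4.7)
The plan is to argue the dichotomy by contradiction and to derive the ``in particular'' statement from it. Throughout, I use that $Z_{\sigma^n}$ and $Z_{\tau^m}$ have the same underlying letter at every index (Remark \ref{remarks}(\ref{rem:4})).

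Assume, for contradiction, that the leftmost propagated error of the first level in $Z_{\tau^m}$ has a central part $\overline{a}^r\mid a^q$ at positions $[k'-r+1,\,k'+q]$, and that some position $t$ with $i'\le t\le k$ lies in its $\overline{a}^r$-block. Since $t\le k<k'$ we have $r\ge k'-k+1$, so the $\overline{a}^r$-block strictly crosses the boundary at $k$. Transferring letters via Remark \ref{remarks}(\ref{rem:4}), positions $[k'-r+1,\,k'+q]$ of $Z_{\sigma^n}$ all carry the letter $a$. Reading this inside the $k$-periodic decomposition of $Z_{\sigma^n}$, whose letters cycle through $\theta_\sigma$ between successive periods, yields two structural constraints on $w_\sigma$: positions $[k'-r+1,\,k]$ in the first period force $a_{k'-r+1}=\cdots=a_k=a$, while the positions of $[k+1,\,k'+q]$ falling in the second period force $\theta_\sigma(a_j)=a$, i.e., $a_j=\theta_\sigma^{-1}(a)$, on an initial segment of $[1,k]$.

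Splitting on whether $\theta_\sigma(a)=a$: in both subcases, combining these two constraints with the analogous ones obtained by extending the $a^q$-block into successive $k$-periods forces the entire word $w_\sigma$ to become a constant word in a single letter up to spin, which contradicts the hypothesis that $\sigma$ admits at least two distinct letters in its normal form. When the central part is short ($r+q<k$), Lemma \ref{lem:propa}(\ref{IIbis}) is used to pin down the spins of the transferred $\tilde a$-block inside $Z_{\sigma^n}$'s own propagated error, so that the constraints can be iterated across periods and closed off.

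For the ``in particular'' part, suppose that no letter $\overline{b}$ with $b\ne a$ occurs in $[i',k']$ of $Z_{\tau^m}$. Combining the structure $(\overline{a}(\overline{\mathcal A'})^*)^+(a(\overline{\mathcal A'})^*)^*a$ of the propagated error with the fact that $w_\tau$ is in normal form (so that no forbidden factor $\overline{a}\overline{\mathcal A}^*a$ can sit inside $w_\tau$) forces positions $[i',k']$ to be filled with $\overline{a}$. A short case analysis at the boundary $k'\mid k'+1$ then shows either that $Z_{\tau^m}$ admits a central part whose $\overline{a}^r$-block starts at $i'$ (so that $t=i'$ belongs to it, contradicting the dichotomy just established), or that the simple error cannot contain $k'$ in its support (contradicting Lemma \ref{lem:left}). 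Either way, the assumption fails, so such a $\overline{b}$ exists. The main obstacle is the ``short central part'' subcase in the dichotomy: when $r+q<k$ the forced $a$-block in $Z_{\sigma^n}$ covers less than one $k$-period, and extending the constraint to the whole of $[1,k]$ requires a careful iteration through successive periods together with a precise use of the spin information provided by Lemma \ref{lem:propa}(\ref{IIbis}) and the precise location of $Z_{\sigma^n}$'s own propagated error.
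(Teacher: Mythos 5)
Your proof of the \emph{``in particular''} part is essentially the paper's: if $[i',k']$ carried only $\tilde a$'s, these would all be $\overline{a}$ inside the propagated error, producing a central part that starts at $i'$, contradicting the dichotomy. That portion is fine. The problem is the dichotomy itself, where your argument has a genuine gap that you yourself flag but do not close.

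Your plan is to transfer the central part of $Z_{\tau^m}$ into $Z_{\sigma^n}$, read the resulting run of $a$'s against the $k$-periodic structure of $Z_{\sigma^n}$, and conclude that $w_\sigma$ is forced to be constant. But the transferred run has length only $r+q$, and when $r+q<k$ it constrains only a window of positions of $w_\sigma$ straddling the residue of $k'$ modulo $k$; the remaining positions of $w_\sigma$ are untouched. Iterating over the $m-1$ period boundaries of $Z_{\tau^m}$ only produces windows centered at the residues $jk' \bmod k$, which need not cover all of $[1,k]$ (e.g.\ when $\gcd(k'-k,k)$ is large), and the spin information from Lemma \ref{lem:propa}(\ref{IIbis}) concerns $Z_{\sigma^n}$'s own propagated error, whose support need not meet these windows. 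So the ``careful iteration'' you defer to is not just technical: the constraints you extract are genuinely too weak to force $w_\sigma$ constant, and no contradiction follows from them. The paper avoids this entirely with a different idea: take the largest index $j<t$ carrying a letter $b\neq a$ (it exists because $\tau$ has two letters and positions $[t,k']$ are all $a$), so that $[j+1,k']$ carries only the letter $a$; then compare the \emph{last} periods of $Z_{\sigma^n}$ and $Z_{\tau^m}$. Since $k<k'$ and $kn=k'm$, one has $k(n-1)+j>k'(m-1)+j$, so the position $k(n-1)+j$, which carries $\theta_\sigma^{n-1}(b)$ in $Z_{\sigma^n}$, falls strictly inside the terminal block of $\theta_\tau^{m-1}(a)$'s in $Z_{\tau^m}$; matching the last letters gives $\theta_\sigma^{n-1}(a)=\theta_\tau^{m-1}(a)$, whence a letter mismatch at a single position and an immediate contradiction. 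This offset argument is the missing ingredient; without it (or a substitute), your proof of the dichotomy does not go through.
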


\begin{proof}  We assume that there exists  a   central part in  $Z_{\tau^m}$.  Assume  by contradiction that the position  $t$,
with $i'\leq t \leq k$, is in the  central part of
$Z_{\tau^m}$.  Clearly, we have $\overline{a}$'s at positions $t,\dots, k'$, since
we are in the central part of $Z_{\tau^m}$.  Consider the largest position $j< t$ with an
occurrence of $\tilde{b}$ for $b\neq a$  (such an occurrence exists  since $\sigma$   contains letters distinct from $a$ by assumption).

We  now look at  the last  periods  in $Z_{\sigma^n}$  and  $Z_{\tau^m}$,  by exploiting   $k$-periodicity and  $k'$-periodicity, respectively.
For  $Z_{\sigma^n}$ we get that
%\begin{equation*}\label{eq:inZsigma}
$Z_{\sigma^n}[k(n-1) +j,kn]=\tilde{{b'}}\tilde{{a'}} \cdots
\tilde{{a'}},$
where $b'\neq a'$; in fact,
$a'=\theta_{\sigma}^{n-1}(a)$ and $b'=\theta_{\sigma}^{n-1}(b)$.
On the other hand, $ Z_{\tau^m}[k'(m-1) +j,
k'm]=\tilde{{b''}}\tilde{{a''}} \cdots
\tilde{{a''}},$  where $b''\neq a''$; in fact,
$a''=\theta_{\tau}^{m-1}(a)$ and $b''=\theta_{\tau}^{m-1}(b)$.
Moreover, one has $a''=a'$ (by looking at the last letter).

Now consider the position $k(n-1) +j$ in $Z_{\tau^m}$. We have
$k(n-1) +j> k'(m-1) +j$, since $k<k'$ and $kn=k'm$. Therefore,
 we have $Z_{\tau^m}[k(n-1) +j]=\tilde{a''}=\tilde{a'}$, but
since the letters in $Z_{\tau^m}$ and $Z_{\sigma^n}$ at the same
positions coincide (up to spins), we have
 $Z_{\tau^m}[k(n-1) +j]=b'= \theta_{\sigma}^{n-1}(b)\neq a'=\theta_{\sigma}^{n-1}(a)$,  a
contradiction.

This  implies that  there cannot be  only $a$'s   in $[i' , k']$, otherwise  $i'$  would be in the central part.
\end{proof}

\begin{lemma}\label{lem:casepropa}
Let $\sigma$ and $\tau $ be episturmian  substitutions  such that $\sigma^n =\tau^m$, for $ n   > m \geq 2 $,  with   $\sigma$ and $\tau$  both admitting   at least  two  distinct letters
in their  normal form. We assume  that  $Z_{\sigma^n}$ and $Z_{\tau^m}$ have  an error with the same letter $a$ for their leftmost error and  that the
   $a$-error  is of type I   for both.   Then there exists an   episturmian substitution $\varrho$ such that
$\tau=\sigma\circ \varrho$.\end{lemma}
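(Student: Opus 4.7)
The plan is to reduce the statement to Remark \ref{rem:prefix} by exhibiting a length $\ell \leq k'$ such that the substitution $\mu(\pref_\ell Z_{\sigma^n})$ factors as $\sigma \circ \varrho_1$ for some episturmian $\varrho_1$ and simultaneously coincides with $\mu(\pref_\ell w_\tau)$; completing $\pref_\ell w_\tau$ to $w_\tau$ and appending $\theta_\tau$ then yields $\varrho$ with $\tau = \sigma \circ \varrho$, exactly as in the analogous step of the proof of Lemma \ref{lem:diff}. Since $kn = k'm$ with $n > m$, one has $k' > k$, and by Lemma \ref{lem:prefix} one may assume $i \neq i'$.

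The simplest sub-case is $i' > k + p$. Here positions $[1, k+p]$ of $Z_{\tau^m}$ lie before the leftmost error of $\tau$ and are therefore untouched by its normalization, so $\mu(\pref_{k+p} Z_{\tau^m}) = \mu(\pref_{k+p} w_\tau)$. By Lemma \ref{lem:propa}(\ref{II9}) together with the type I hypothesis for $\sigma$, the substitution $\mu(\pref_{k+p} Z_{\sigma^n}) = \mu(\pref_{k+p} [\sigma^n])$ factors as $\sigma \circ \varrho_1$; the identity $[\sigma^n] = [\tau^m]$ then gives $\sigma \circ \varrho_1 = \mu(\pref_{k+p} w_\tau)$, and the desired $\varrho$ is obtained by completing $w_\tau$ and composing with $\theta_\tau$.

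When $i' \leq k + p$, two further sub-cases arise. If $i' \leq k$ (the leftmost error in $Z_{\tau^m}$ already starts within the first period of $\sigma$), Claim \ref{claim:alla} constrains the shape of the central part in $Z_{\tau^m}$, and Proposition \ref{prop:central} combined with Lemma \ref{lem:propa}(\ref{IIv}) fixes the positions of the $\tilde{a}$'s; tracking the identity $[\sigma^n] = [\tau^m]$ position by position either forces $i = i'$ (already handled by Lemma \ref{lem:prefix}) or contradicts the type I hypothesis for $\tau$. If $k < i' \leq k + p$, position $i$ lies in the untouched prefix of $Z_{\tau^m}$, so $w_\tau(i) = [\tau^m](i) = [\sigma^n](i) = a$ by Lemma \ref{lem:propa}(\ref{IIii}), whereas $w_\sigma(i) = \overline{a}$; the strategy here is to choose $w_\varrho$ so that the concatenation $w_\sigma \cdot \theta_\sigma(w_\varrho)$ block-normalizes to $w_\tau$, with the $\overline{a}$ at position $i$ being absorbed into an error spanning the boundary between $w_\sigma$ and $\theta_\sigma(w_\varrho)$.

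The main obstacle will be the fine bookkeeping of spins in the sub-case $i' \leq k$ when both $Z_{\sigma^n}$ and $Z_{\tau^m}$ contain central parts, since these can overlap and complicate the comparison; Propositions \ref{prop:prefix} and \ref{prop:prefix2} will be the key tools for converting the equalities of counts of $a$'s and $\overline{a}$'s into equalities of the associated prefix substitutions, which in turn determines $\varrho$.
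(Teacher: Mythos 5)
Your overall strategy (find a length $\ell$ at which the prefix substitutions of $Z_{\sigma^n}$ and $Z_{\tau^m}$ agree and factor through $\sigma$, then denormalize) is the right one, and your first sub-case $i'>k+p$ is handled correctly and matches the paper. But the proposal has a genuine gap in the two remaining sub-cases, which are where the actual content of the lemma lies.

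In the sub-case $i'\leq k$ with $i\neq i'$ you assert that tracking $[\sigma^n]=[\tau^m]$ "either forces $i=i'$ or contradicts the type I hypothesis for $\tau$." This dichotomy is false. Only the half $i'<i$ is impossible (it forces $i$ into a central part of $Z_{\tau^m}$, which Claim \ref{claim:alla} rules out). The other half, $i<i'\leq k$, is perfectly consistent: there $i'$ lies in a central part of $Z_{\sigma^n}$ (since $Z_{\sigma^n}[i']=\overline{a}$ while $[\sigma^n][i']=a$), and no contradiction arises. This is a real case in which the conclusion must be \emph{established}, not excluded; the paper does so by splitting the prefix of length $k$ into the intervals $[1,i-1]$, $[i,i'-s-1]$, $[i'-s,i'-1]$, $[i',k]$, comparing the counts of $a$'s and $\overline{a}$'s in $Z_{\sigma^n}$ and $Z_{\tau^m}$ interval by interval, invoking Proposition \ref{prop:central} to show the deficit of $a$'s is exactly compensated by positions $k+1,\dots,k+s+1$, and then applying Proposition \ref{prop:prefix2} to conclude that $\mu(\pref_{k+s+1}Z_{\sigma^n})=\mu(\pref_{k+s+1}Z_{\tau^m})$ (with $k+s+1<k'$ guaranteed again by Claim \ref{claim:alla}). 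None of this is in your proposal; declaring the case contradictory would simply break the proof.

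In the sub-case $k<i'\leq k+p$ you only state a strategy ("choose $w_\varrho$ so that $w_\sigma\cdot\theta_\sigma(w_\varrho)$ block-normalizes to $w_\tau$, absorbing the $\overline{a}$ at position $i$") without showing such a $w_\varrho$ exists. This is again the crux: one must prove that the prefix substitutions agree at a suitable length, which requires distinguishing whether $i'$ lies in the central part of $Z_{\sigma^n}$ and, in each case, carrying out the same kind of letter count (locating the last occurrence of a letter $b\neq a$ before $i'$, applying Assertion (\ref{prop:prefixiii}) of Proposition \ref{prop:prefix}, and balancing the counts with Proposition \ref{prop:central}). You correctly identify in your closing paragraph that this bookkeeping is "the main obstacle," but that obstacle is precisely the proof, and it is missing.
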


\begin{proof}
The  supports   of the propagated errors of the first level   in $Z_{\sigma ^n}$  are the intervals $[i+kj, p+k(j+1)]$, with $0 \leq j  \leq n-2$.

 If  $i=i'$,  the  desired conclusion comes from Lemma \ref{lem:prefix}. If $ k+p  < i'$, we denormalize $Z_{\tau}$
 to find the  normal form $[\sigma]$ as a prefix of some
word over $\A\cup \overline{\A}$ representing $\tau$, i.e., we
apply block-transformations $$xv \overline{x} \rightarrow
\overline{x} \overline{v} x    \quad (x \in {\mathcal A}, \  v \in
( {\mathcal A} \setminus \{x\})^*).$$ Indeed, the process of
normalization of the prefix of length $k+p$  for $Z_\sigma$  is
independent of the rest of the word.

Moreover one has  $ k +p \neq i'$ (hence $k+p > i'$) since otherwise, in the normalized form,  we would have $a$  at  $i'$  in $[\tau^m]$.
 On the other hand, it is  a final position for an error, hence   we would have $\overline{a}$  at  $k+p$  in $[\sigma^n]$.

 We thus assume $k+p > i'$   and $i \neq i'$ and we distinguish the  three following cases, namely  $i'<i$, $i<i' \leq k$,  and   $k<i'  <  k+p$.

\bigskip

{\bf Case 1.} We first  assume that $i'<i$. One has
$Z_{\tau^m}[i]= \overline{a}$, since  we have $i'<i  \leq  k'$
(indeed $i\leq  k < k'$) and due to Assertion (\ref{IIbis}) of
Lemma \ref{lem:propa}. Moreover,   $[\tau^m][i]= [\sigma^n][i]= a$
(by Assertion (\ref{IIii}) of Lemma  \ref{lem:propa},  since $i$
is the index of the  first letter  of  an error  in
$Z_{\sigma^n})$. Since  at index $i$  the letter $a$   has
different  spins in $Z_{\tau^m}$ and $[\tau^m]$ and  $i\in
[i'+1,k']$,  then  there is an  $a$-central part  $ \overline{a}
^+ \mid  {a }^+  $ in $Z_{\tau^m}$, and  the index $i$
 belongs to it by Assertion (\ref{IIiv}) of Lemma \ref{lem:propa}.
But,  due to Claim \ref{claim:alla} applied to $t=i$, this is impossible.  We thus reach a contradiction.

 \bigskip

{\bf Case 2.}  We  now  assume that $i<i'\leq k$.  At index $i'$,  we are in the   central part of
$Z_{\sigma^n}$. Indeed, we use an  argument    symmetric  to the one used in  the previous case:
$Z_{\sigma^n}(i')=\overline{a}$  and $[\sigma^n][i']= a$.  There thus exists a  central part in $Z_{\sigma^n}$. Let $s\geq 0$   be  such that the
central part  in  $Z_{\sigma^n}$ starts at $i'-s$; one has  $  i \leq i'-s \leq i'$.

The strategy  is  to use Proposition  \ref{prop:prefix}, Assertion \ref{prop:prefixi}.
To compare the numbers of
$a$'s in  prefixes of the same  length  of $Z_{\sigma^n}$ and $Z_{\tau^m}$, we  first split the prefixes of length $k$ into four parts, namely
$[1,i-1]$, $[i,i'-s-1]$, $[i'-s,i'-1]$,  and $[i',k]$. The numbers of $a$'s in the  first   part $[1,i-1]$ (before the  beginning of errors)  and in the  fourth parts $[i',k]$
 (made of $\overline{a}$'s since $i'$ is in the  central part of
$Z_{\sigma^n}$)  coincide.
In
the third part, we have
$Z_{\sigma^n}[i'-s,i'-1]=\overline{a}^{s}$,
$Z_{\tau^m}[i'-s,i'-1]=a^s$ (one has  $[\sigma^n][i'-s,i'-1]=  a^s$    by  Lemma  \ref{lem:propa} and this part is unchanged in the normalization
of $Z_{\tau^m}$).
Hence $| Z_{\sigma^n} [i'-s,i'] |_{a}= 0$ and  $|Z_{\tau^m}[i'-s,i'] |_{a}= s$.

For the  second  part, we   distinguish the  two cases $ i<i'-s $ and $i =i'-s$.

\begin{itemize}
\item We first  handle the case $i<i'-s$.  We will apply Proposition  \ref{prop:prefix}, Assertion (\ref{prop:prefixi}) to prefixes of length $k+s+1$.

  In
the second part $[i,i'-s-1]$  (which is outside of  the  central part),    the only index of occurrence of the letter $a$
where $Z_{\sigma^n}$ and $Z_{\tau^m}$ differ is $i$ by Assertion
(\ref{IIv}) of Lemma \ref{lem:propa}:
$Z_{\sigma^n}[i]=\overline{a}$ and $Z_{\tau^m}[i]=a$. By summing up, this gives   from above
$|\pref_{k } (Z_{\sigma^n})|_{a}-|\pref_{k } (Z_{\tau^m})|_{a}= -s-1$.

We now apply Assertion (\ref{prop:i}) of  Proposition \ref{prop:central}  to $\ell=i'$ and $t=i'-s$. Let $r$ stand for the number  of
$\overline{a}$'s  after $k+1$.  One has  $r\geq i- (i'-s) +1$, i.e., $r \geq s+1$.
This gives $s+1$ occurrences of $\overline{a}$'s in
$Z_{\tau^m}$ starting from index $k+1$.

Note that   $k' > k+s+1$. Otherwise, $ k'\leq k+s+1$ implies that
$k$ is in the central part of $Z_{\tau^m}$, yielding  a contradiction
 with  Claim \ref{claim:alla}.

 It remains to  compare the  parts
$[k+1,k+s+1]$.     We have
$Z_{\sigma^n}[k+1,k+s+1]={a}^{s+1}$,
$Z_{\tau^m}[k+1,k+s+1]=\overline{a}^{s+1}$.

 Summing up,  we have
equal numbers of $a$'s in the prefixes of length $k+s+1$, and
hence they are equal as substitutions by Assertion 1 of  Proposition
\ref{prop:prefix2}.

So, we can denormalize the prefix of length $k +s+1$ in $Z_{\tau
^m}$ and find $Z_\sigma$ as a prefix of a word over $\A\cup
\overline{\A}$ representing $\tau$.  Indeed, the substitution
which corresponds to  $\pref_{k +s+1}Z_{\tau ^m}$ equals  to the
substitution  which corresponds to  $\pref_{k +s+1}Z_{\sigma ^n}$.
There exists $h$ such that $ \mu(\pref_{k +s+1}Z_{\sigma ^n})=
\sigma \circ h= \sigma \circ \mu (Z_{\sigma ^n}[k+1\cdot\cdot
k+s])$. And  $ \mu(\pref_{k +s+1}Z_{\tau ^m}) $ is a prefix of
$[\tau]$ ($k'>k+s+1$). %SP: Yes, here $[\tau]$ is correct

\item We  now assume $i=i'-s$.
One has from above   $|\pref_{k} (Z_{\sigma^n})|_{a}- |\pref_{k } (Z_{\sigma^n})|_{a}= -s$.

We now apply   Assertion
(\ref{prop:ii}) of Proposition \ref{prop:central}  to $l=i'$ and $t=i'-s$. Let $r$ stand for the number
of $\overline{a}$'s  after $k+1$.  One has  $r\geq i- (i'-s) $,
i.e., $r \geq s$. Hence  there are   at least $s$  occurrences of
the letter $a$  in $ Z_{\sigma^n}$ starting from index $k+1$. This
gives $s$ occurrences of $\overline{a}$'s in $Z_{\tau^m}$ starting
from index $k+1$ ($k'>k+s+1$ from above). Hence,  we have
$Z_{\sigma^n}[k+1,k+s]={a}^{s}$,
$Z_{\tau^m}[k+1,k+s]=\overline{a}^{s}$. Summing up,  we have equal
numbers of $a$'s in the prefixes of length $k+s$, and hence they
are equal as substitutions by Assertion (1) of  Proposition \ref{prop:prefix2},
\end{itemize}

\bigskip

{\bf Case 3.} We now  assume that  $k<i' <  k+p$.
 %At index $i'$,  one has $\psi_a$ in $Z_{\sigma^n}$  (since $k<i' <  k+p$ by Assertion \ref{IIbis} of Lemma \ref{lem:propa})  and $\overline{\psi}_a$ in $Z_{\tau^m}$ (we are in the error of $Z_{\tau_m}$).
We distinguish two cases, whether  $i$' belongs to  the central
part of $Z_{\sigma^n}$ or not.

\medskip

{\em Case 3.1.}  Let us  assume  that $i$' is not  in the central
part of $\sigma$. We  will show  that we can denormalize the
prefix of length $i'$ of $Z_{\tau^m}$
  to find $\sigma$ as a prefix of some
word over $\A\cup \overline{\A}$ representing $\tau$.  To do this,
we count the numbers of occurrences of $a$ and $\overline{a}$ in
the prefix of length $i'$, show that they are the same and apply
Proposition
  \ref{prop:prefix2}.

 Consider the largest  index $t$ of occurrence of a letter distinct from $a$
  before the position $i'$;  let $b$ stand for this letter. Since $k<i'$ by the conditions of Case 3
  and since $i'$ is not in the  central  part of $\sigma$, we have $k<t<i'$.

 Since $t$ is inside the error in $Z_{\sigma^n}$, we have
  $Z_{\sigma^n}(t)=\overline{b}$ and
  $[{\sigma^n}][t]={b}$ by Lemma
  \ref{lem:propa}, Assertion (\ref{IIi}).  Moreover, since $t<i'$, we  have    $[\tau^m][t]=Z_{\tau^m}[t]$, and thus $Z_{\tau^m}[t]={b}$ by $[\sigma^n]=[\tau^m]$.  So, applying Part 3 of Proposition \ref{prop:prefix2}, we get $|\pref_t(Z_{\sigma^n})|_{a}=|\pref_t(Z_{\tau^m})|_{a}-1$.

Now in the part $[t+1,i'-1]$ (which might as well beempty) we have
$a$'s in $Z_{\sigma^n}$ (since we are in the second part of
the error with  $k<t$), as well as in the normal form (as occurrences of
$\tilde{a}$ which are neither central, nor initial, nor final, according to
Lemma \ref{lem:propa}), so the same is true for  $Z_{\tau^m}$ (we
are before the beginning $i$ of the leftmost error, so everything there
coincides with the normal form).

Finally, since $k<i' <  k+p$ is in the second part of the error,
  by Assertion (\ref{IIbis}) of Lemma \ref{lem:propa}, we get $Z_{\sigma^n}[i']={a}$. We also have
$Z_{\tau^m}[i']=\overline{a}$ (as the first letter of an error by
Assertion (\ref{IIii})).

Summing up, we get equal numbers of occurrences of $a$'s in the
prefixes of length $i'$ of $Z_{\tau^m}$ and $Z_{\sigma^n}$. So, by
Proposition \ref{prop:prefix2}, Part 1, they are equal as
permutations, and  we can denormalize the prefix of length $i'$ of
$Z_{\tau}$ to find $Z_{\sigma}$ as a prefix of a word over $\A\cup
\overline{\A}$ representing $\tau$.
\medskip

{\em Case 3.2.} We now assume  that $i$' is   in the central  part of $Z_{\sigma^n}$.
 Let $s$   (with  $s\geq 1$ and $i\leq k-s$)  be the length of  the   longest suffix of $Z_\sigma$ filled with $\overline{a}$'s.
We
  have
  $k-s+1  \leq  k <i'< k+p$.

  One has
$Z_{\sigma^n}[k+1, i']= a^{i'-k}$,
 $ Z_{\tau^m}(i')=\overline{a} $, $ [{\tau^m}][i']=[\sigma^n][i']=a$, and $ Z_{\sigma^n}[i']=a$.
  At index $k$ in $Z_{\sigma^n}$, one has $\overline{a}$  (by Assertion (\ref{IIii}) of Lemma \ref{lem:propa}).
    In $Z_{\tau^m}$, from index $ k-s+1$ to $i'-1$, one has  only $a$'s; otherwise it would create  an error  and  we  also use that $i'$ was not the  beginning of the error.

  Similarly as in the previous case, to compare the numbers of
$a$'s in prefixes of the same length, we split the prefixes of length $i'$ into   four  parts, namely
$[1,i-1]$, $[i,k-s]$, $[k-s+1,k]$, $[k+1,i']$.
For the  first part $[1,i-1]$, one has the same number of $a$'s in  $Z_{\sigma^n}$ and $Z_{\tau^m}$ (by being before the errors).
For  $[k-s+1,k]$, one has  $|Z_{\sigma^n}[k-s+1,k]|_{a}-  |Z_{\tau^m}[k-s+1,k]|_{a}=-s$.
For $[k+1,i'-1]$,  we have only  $a$'s  for both; moreover,  $ Z_{\tau^m}(i')=\overline{a} $ and $ Z_{\sigma^n}[i']=a$.  This  yields   $|Z_{\sigma^n}[k+1,i']|_{a}-  |Z_{\tau^m}[k+1,i']|_{a}=+1$.

For the second part, we  distinguish the two cases  $i  <k-s$ and $i=k-s+1$.
Note that the  case  $i=k-s$ is  impossible. Indeed, if the error starts before the index  $k-s+1$  ($i<k-s+1$),  we have
$Z_{\sigma} [k-s] = \overline{b}$ for some $b \neq a$).

\begin{itemize}
\item  We assume $i  <k-s$.

In the second part  $[i,k-s]$,  the only index of occurrence
 of the letter $a$  where $Z_{\sigma^n}$ and $Z_{\tau^m}$ differ is $i$ with  $Z_{\sigma^n}(i)=\overline{a}$.
By summing up, this gives from above
$|\pref_{i'}(Z_{\sigma^n})|_{a}-  \pref_{i'}(Z_{\tau^m})|_{a}=-s$.

We apply Proposition \ref{prop:central} to $\ell=i'$ and $t=k-s+1$. We recall that  $r$ stands for the number of $a$'s  after $k+1$ in $Z_{\sigma^n}$. This gives
$r \geq i'- (k-s+1) +1$. Hence  one has at least  $s$   $a$'s after $i'+1$ (included).
This gives  $|\pref_{i'+s}(Z_{\sigma^n})|_{a}=  \pref_{i'+s}(Z_{\tau^m})|_{a}$.
 \item   We  now assume  $s= k-i+1$. Applying  Proposition \ref{prop:central}, one has at least  $s-1$   $a$'s after $i'+1$ (included).
 This gives  $|\pref_{i'+s-1}(Z_{\sigma^n})|_{a}=  |\pref_{i'+s-1}(Z_{\tau^m})|_{a}$. \qedhere
\end{itemize}
\end{proof}

We will make use of the following observation:

\begin{claim}\label{lem:type3a} Let $\sigma$ and $\tau $ be episturmian  substitutions  such that $\sigma^n =\tau^m$, with $ n \geq m \geq 2$,   with   $\sigma$ and $\tau$  both admitting   at least  two distinct   letters
in their  normal form.
 We assume  that  $Z_{\sigma^n}$ and $Z_{\tau^m}$ have  an error with the same letter $a$ for their leftmost error.
If  errors  are of  type II in  $Z_{\tau^m}$,  then
$\theta_{\sigma} (a)=a$. Similarly,  if   errors are of type II in  $Z_{\sigma^n}$,  then
$\theta_{\tau} (a)=a$.\end{claim}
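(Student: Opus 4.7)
The proof proceeds by contradiction, the second statement being entirely symmetric. Assume the errors of $Z_{\tau^m}$ are of type II while $\theta_\sigma(a) \neq a$. By Lemma~\ref{lem:type3pre}, the errors of $Z_{\sigma^n}$ must then be of type I, so Assertion~(\ref{II10}) of Lemma~\ref{lem:propa} gives $[\sigma^n] = Z'_{\sigma^n}$. From $\sigma^n = \tau^m$ one has $\theta_\sigma^n = \theta_\tau^m$, and the type II assumption on $\tau$ forces $\theta_\tau(a) = a$ (again by Lemma~\ref{lem:type3pre}), hence $\theta_\sigma^n(a) = a$: the order $d$ of $a$ under $\theta_\sigma$ satisfies $d \geq 2$ and $d \mid n$. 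Since Lemma~\ref{lem:samelength} handles the case $|\sigma| = |\tau|$, I may further assume $k < k'$, i.e.\ $n > m \geq 3$.

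By the $k$-periodicity of propagated errors (Lemma~\ref{lem:left}) combined with Assertions~(\ref{IIii}) and~(\ref{IIiii}) of Lemma~\ref{lem:propa}, for each $j \in \{1, \ldots, n-1\}$ the normal form $[\sigma^n]$ carries the letter $\theta_\sigma^{j-1}(a)$ with spin $0$ at position $i + k(j-1)$ and with spin $1$ at position $p + kj$; moreover $\theta_\sigma^{j-1}(a) \neq a$ precisely when $j \not\equiv 1 \pmod d$. On the other hand, Assertion~(\ref{III:vii}) of Lemma~\ref{lem:type3} applied to $\tau$ forces in $[\tau^m] = [\sigma^n]$ every non-$a$ letter to carry spin $0$ throughout $[i', p' + k'(m-1)]$ and spin $1$ both in the prefix $[1, i'-1]$ and in the suffix $[k'(m-1) + p' + 1, k'm]$. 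A contradiction therefore arises as soon as, for some $j \in \{2, \ldots, n-1\}$ with $j \not\equiv 1 \pmod d$, either the end position $p + kj$ lies in $[i', p' + k'(m-1)]$, or the start position $i + k(j-1)$ lies in the prefix or suffix region.

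It remains to exhibit such a $j$. Using the identity $k'(m-1) = kn - k'$ together with the bounds $1 \leq p < i \leq k$ and $1 \leq p' < i' \leq k'$, I would show that the set of indices $j$ for which the end position $p + kj$ belongs to $[i', p' + k'(m-1)]$ forms an integer interval of length at least $(m-2)n/m \geq n/3$, which in the bulk of parameter configurations contains $d$ consecutive integers and therefore an index $j$ with $j \not\equiv 1 \pmod d$. The main obstacle I anticipate is the small-parameter boundary regime, in which this admissible interval is too short to avoid the single bad residue class modulo $d$; these residual cases are to be resolved by the dual observation that, since $k < k'$ and $i' \leq k'$, the start position $i + k(j-1)$ of a suitable error with $j \not\equiv 1 \pmod d$ necessarily falls into the prefix region $[1, i'-1]$ or the suffix region $[k'(m-1) + p' + 1, k'm]$, yielding the required spin contradiction through the second half of Assertion~(\ref{III:vii}) of Lemma~\ref{lem:type3}.
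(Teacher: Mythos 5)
Your overall mechanism is the right one and matches the paper's: compare the spins forced at the start and end positions of the type~I propagated errors of $Z_{\sigma^n}$ (Assertions (\ref{IIii}) and (\ref{IIiii}) of Lemma \ref{lem:propa}) with the rigid spin pattern that Assertion (\ref{III:vii}) of Lemma \ref{lem:type3} imposes on $[\tau^m]$ for letters distinct from $a$. But as written the argument has a genuine gap, and you acknowledge it yourself: the counting step asserts that the set of admissible indices $j$ is an interval of length at least $(m-2)n/m$ and "in the bulk of parameter configurations" contains an index with $j\not\equiv 1 \pmod d$, and the residual boundary cases are "to be resolved by the dual observation" that a suitable start position lands in the prefix or suffix region. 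That dual observation is not proved and is not obviously true: there is no a priori reason why, when the interval of admissible end positions is too short, some start position $i+k(j-1)$ with $j\not\equiv 1\pmod d$ must fall in $[1,i'-1]$ or after $k'(m-1)+p'$. So the proof is a plan with an unresolved case, not a proof. (A secondary issue: you reduce to $k<k'$ by citing Lemma \ref{lem:samelength}, but that lemma concludes the existence of $\varrho$, not $\theta_\sigma(a)=a$; the case $k=k'$ needs its own one-line argument, e.g.\ that the normal permutations then agree on $a$.)

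The missing idea that makes the whole residue-counting unnecessary is the paper's use of powers. One fixes the single index $2k+p$, the end of the propagated error between the second and third periods, whose letter is $\theta_\sigma(a)\neq a$ and which carries spin $1$ in $[\sigma^n]$ by Assertion (\ref{IIiii}). By Assertion (\ref{III:vii}) this forces $2k+p\notin[i',\,k'(m-1)+p']$. If $2k+p$ escapes to the right of that support, one replaces $\sigma^n=\tau^m$ by $\sigma^{2n}=\tau^{2m}$ (or a higher power): the right end $k'(2m-1)+p'$ of the type~II support grows while $2k+p$, $i$, $p$, $i'$, $p'$ and the error types are unchanged, so after finitely many doublings the position is forced into the support or to the left of $i'$. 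In the latter case one uses the start position $k+i<2k+p<i'$, which carries the letter $\theta_\sigma(a)\neq a$ with spin $0$ in $[\sigma^n]$ but must have spin $1$ in the prefix region of $[\tau^m]$. This two-case analysis at $j=2$ replaces your interval-length estimate entirely and leaves no boundary regime to worry about; I recommend restructuring your argument around it.
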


\begin{proof}  We assume that  errors are of type II in $Z_{\tau^m}$. In particular,  this yields   $m \geq 3$ by Assertion (\ref{III:i}) of Lemma \ref{lem:type3pre}. So, one has $n \geq 3$. Assume that  $\theta_{\sigma} (a) \neq a$. Then  errors in $Z_{\sigma^n} $ are of type
I (by Assertion (\ref{III:i}) of Lemma \ref{lem:type3pre}). Now,
consider the index $2k+p$. It is the end of the error in the third
period in $Z_{\sigma^n}$, so one has the letter
$\theta_{\sigma}(a)$. It has spin $1$ after normalization in
$[\sigma^n]$  by Assertion (\ref{IIiii}) of Lemma \ref{lem:propa}.
Due to Assertion (\ref{III:vii}) of Lemma \ref{lem:type3}, this can
only happen if $2k+p \not \in [i', m(k'-1)+p']$.  If $2k+p  \geq
m(k'-1) +p'$, we then can take squares $Z_{\tau^{2m}}$ and
$Z_{\sigma^{2n}}$ to yield  a contradiction by  having $m$ large
enough. If $2k+p<i'$, consider the position $k+i$ (and thus $k+i <
i'$). We have $Z_{\sigma}[k+i]= \overline{\theta_{\sigma}(a)}$ (as
it is the first letter of a propagated error of the  first level). After
normalization, we have $ [\sigma^n][k+i] = \theta_{\sigma}(a)$.
And in the normalized form $ [\tau^m]$, for letters which are not
equal to $a$, occurrences with  spin 0 can only be inside the
error (due to Assertion (\ref{III:iv}) of Lemma \ref{lem:type3}). We
reach a contradiction in both cases.

We assume  now that the error is of type II in $Z_{\tau^m}$. By  considering   squares, i.e., $\sigma^{2n}$ and $\tau^{2m}$, we can assume $m \geq 3$
and the proof works as above.
\end{proof}

The following lemmas handle the case analogous to Lemma
\ref{lem:casepropa} for the case when both errors are of type II
(Lemma \ref{lem:type33}) or  not (Lemma
\ref{lem:type32}).

\begin{lemma}\label{lem:type33} Let $\sigma$ and $\tau $ be episturmian  substitutions  such that $\sigma^n =\tau^m$, with $ n \geq m \geq 2$,  with   $\sigma$ and $\tau$  both admitting   at least  two  distinct  letters
in their  normal form. We assume  that  $Z_{\sigma^n}$ and $Z_{\tau^m}$ have  an error with the same letter $a$ for their leftmost error, and that errors
 are of type II for both substitutions.  Then, there exists an  episturmian
substitution $\varrho$ such that $\tau=\sigma\circ
\varrho$.\end{lemma}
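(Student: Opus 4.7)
The plan is to follow the same overall strategy as in Lemma \ref{lem:casepropa}, namely to exhibit a length $\ell \leq k'$ for which the numbers of occurrences of the letter $a$ in $\pref_\ell(Z_{\sigma^n})$ and $\pref_\ell(Z_{\tau^m})$ coincide, and then invoke Proposition \ref{prop:prefix} together with Remark \ref{rem:prefix} to denormalize the prefix of length $\ell$ of $Z_{\tau^m}$ and recover $\sigma$ as a prefix of $\tau$, producing the desired $\varrho$. The starting point is Claim \ref{lem:type3a}: since the errors are of type II for both substitutions, it yields $\theta_\sigma(a)=\theta_\tau(a)=a$. In particular, every error in $Z_{\sigma^n}$ and every error in $Z_{\tau^m}$ is of the single letter $a$, so Proposition \ref{prop:prefix} applies directly, without the length restriction of Proposition \ref{prop:prefix2}.

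I would then split into cases according to the position of $i'$ relative to $i$, $k$ and $k+p$. If $i=i'$, the conclusion is already given by Lemma \ref{lem:prefix}. If $k+p<i'$, then the prefix of length $k+p$ of $Z_{\tau^m}$ lies entirely before the first propagated error in $Z_{\tau^m}$, hence is untouched by the two levels of block-normalization; combining $[\sigma^n]=[\tau^m]$ with Assertion \ref{II9} of Lemma \ref{lem:propa} produces $\varrho_1$ with $\mu(\pref_{k+p}[\sigma^n])=\sigma\circ\varrho_1$, and since $k+p<i'\leq k'$, the prefix $\pref_{k+p}Z_{\tau^m}$ extends to $\tau$, so that $\tau=\sigma\circ\varrho$. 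The remaining cases $i'<i$, $i<i'\leq k$ and $k<i'\leq k+p$ form the heart of the argument and would be treated by case analysis parallel to that of Lemma \ref{lem:casepropa}.

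For each of those cases, I would split the prefix of suitable length into sub-intervals delimited by $i$, $i'$, $k$, and by the endpoints of the central part of the leftmost first-level error in $Z_{\sigma^n}$ (when it exists), then count occurrences of $a$ in each sub-interval in both $Z_{\sigma^n}$ and $Z_{\tau^m}$. The structural input is: Lemma \ref{lem:propa} for the spin configuration inside the first propagated error of the first level; Lemma \ref{lem:type3} for the additional changes caused by the second level of block-transformations specific to type II (occurrences of $a$ that flip spin outside the first-level support, inside the maximal blocks $\tilde{a}^s$ and $\tilde{a}^q$ described in Assertion \ref{III:vi}); Proposition \ref{prop:central} to lower-bound the length of the $\overline{a}$-run that appears just after index $k$ in $Z_{\sigma^n}$ whenever a central part is present; and Claim \ref{claim:alla} to rule out pathological configurations in which an early position of $Z_{\tau^m}$ would lie inside its central part. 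A symmetric accounting is then carried out in $Z_{\tau^m}$, and balancing the two counts identifies the prefix length $\ell$ at which Proposition \ref{prop:prefix} can be applied.

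The main obstacle, compared with Lemma \ref{lem:casepropa}, is precisely that the second level of block-normalization in type II modifies positions outside the supports of the first-level propagated errors, so the counting cannot be localized to the intervals $[i+kj,p+k(j+1)]$ as in the type I situation. These extra spin changes must be tracked carefully via Assertion \ref{III:vi} of Lemma \ref{lem:type3}, distinguishing the sub-cases ``$Z'_{\sigma^n}[i,k+p]$ contains a letter different from $\tilde{a}$'' and ``$Z'_{\sigma^n}[i,k+p]$ consists only of occurrences of $\tilde{a}$ (central part only)''; in the latter sub-case the blocks $a^\ell\overline{a}^{k+p-i-\ell+1}$ described in the second bullet of Assertion \ref{III:vi} must be used to adjust the count of $a$'s. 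Once these bookkeeping steps are in place, the argument concludes, as in Lemma \ref{lem:casepropa}, by applying Proposition \ref{prop:prefix}, Assertion \ref{prop:prefixi}, to the chosen $\ell$ and denormalizing the prefix of $Z_{\tau^m}$.
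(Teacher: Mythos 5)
Your proposal is a plan rather than a proof: the three cases you identify as ``the heart of the argument'' ($i'<i$, $i<i'\leq k$, $k<i'\leq k+p$) are left entirely to ``case analysis parallel to that of Lemma \ref{lem:casepropa}'', and you yourself point out the obstruction that makes that parallel break down --- in type II the second level of block-normalization flips spins of $a$'s outside the first-level supports, so the local bookkeeping of Lemma \ref{lem:casepropa} does not transfer. Flagging the difficulty is not the same as resolving it, and nothing in your write-up shows how the counts of $a$'s would actually balance once the second-level changes of Assertion \ref{III:vi} are taken into account. That is a genuine gap.

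The paper's proof avoids this counting altogether by exploiting a structural feature specific to type II that your plan does not use. By Assertion \ref{III:iii} of Lemma \ref{lem:type3}, every occurrence of a letter $b\neq a$ has spin $1$ in both $Z_{\sigma^n}$ and $Z_{\tau^m}$, and in the common normal form $[\sigma^n]=[\tau^m]$ such an occurrence has spin $0$ exactly when it lies inside the support of the full propagated error (Assertion \ref{III:vii}). Hence the two full propagated errors must contain exactly the same occurrences of non-$a$ letters, so both leftmost errors begin with a block of $\tilde{a}$'s of the same nature; combined with the preservation of the number of $a$'s versus $\overline{a}$'s under $a$-block-transformations and Assertion \ref{III:iv}, this forces $i<i'$ --- so two of your three cases never occur. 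One then observes that $i'$ must lie in the central part of $Z_{\sigma^n}$ (its spin changes during normalization), that Claim \ref{claim:alla} excludes $[i',k]$ from the central part of $Z_{\tau^m}$, and therefore that some index $\ell$ with $k<\ell\leq k'$ carries $\overline{b}$ ($b\neq a$) in \emph{both} words. The conclusion then comes from Assertion \ref{prop:prefixii} of Proposition \ref{prop:prefix} (equal substitutions on the prefix of length $\ell$), not from the $a$-counting Assertion \ref{prop:prefixi} your plan relies on. If you want to salvage your approach you would need to carry out the second-level bookkeeping explicitly, but the alignment argument above makes that unnecessary.
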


\begin{proof}
Since the errors are of type II, it means that for each letter
$b\neq a$,  with $ b \in {\mathcal A}$, each occurrence of $\tilde{b}$ in both $Z_{\tau^m}$ and
in $Z_{\sigma^n}$  has spin $1$ by  Assertion (\ref{III:iii}) of
Lemma \ref{lem:type3}.  After normalization, each such occurrence
inside  $[i, p+k(n-1)]$    goes from spin $1$ to spin $0$, and outside this  interval,
 it keeps spin $1$ by Lemma \ref{lem:type3}. The same applies to   $[i', p'+k'(m-1)]$. This means that
each occurrence of $\tilde{b}$ should be either inside    $[i, p+k(n-1)]$ and $[i', p'+k'(m-1)]$,
or outside both intervals. So, both errors start with the same block
of $\tilde{a}$'s.
%Lemma \ref{lem:type3} implies that they start in
% the same position.
%S: Explanation was missing. I could not find a one-line proof, this is the shortest explanation I came up with.
Now notice that since we have only block-transformations of letter $a$, the proportion of $a$ and
$\overline{a}$ must be the same in both $[\sigma]$ and $[\tau]$,
hence $i<i'$ (by Assertion (\ref{III:iv}) of Lemma \ref{lem:type3}).
Now since $[\sigma^n][i']=[\tau^m][i']=a$, the spin of  the letter at index $i'$ must
change during the normalization of $Z_{\sigma}$, which means that
it is in the central part of $Z_{\sigma}$, i.e.,
$Z_{\sigma}[i', k]=\overline{a}\cdots \overline{a}$. Now the
indices in $[i',k]$ cannot belong to the central part of $Z_{\tau}$ (by
Claim \ref{claim:alla}), so there exists a letter $b\neq a$ and an
index $\ell$  of occurrence of  $\tilde{b}$, with  $k< \ell \leq k'$, which is in fact
$\overline{b}$ in both $Z_{\sigma}$ and $Z_{\tau}$. So, by Assertion (\ref{prop:prefixii})
 of Proposition \ref{prop:prefix},  the
substitutions corresponding to  the prefixes of $Z_{\sigma}$ and
$Z_{\tau}$ of length $\ell$ are equal. It follows that we can
denormalize the prefix of $Z_{\tau}$ of length $\ell$ (which is
also a prefix of $[\tau]$), introducing $\tau$ as a composition of
$\sigma$ and an episturmian substitution. \end{proof}

\begin{claim}\label{lem:claim2}
Let $\sigma$ and $\tau $ be episturmian  substitutions  such that
$\sigma^n =\tau^m$, for $ n > m \geq 2$,   with   $\sigma$ and
$\tau$  both admitting   at least  two distinct  letters in their  normal
form. We assume that all the errors in $\sigma^n$ and $\tau^m$ are
of letter $a$, and let $i$ and $j$ be integers with $ik \neq jk'$,
$i<n$ and $j<m$. Let $M=\min (ik, jk')$ and $N=\max (ik, jk')$.
Then there exists a letter $b \neq a$ that occurs in $[M,N]$, and
a letter $c \neq a$ (which can be the same as $b$)  that occurs in
$[M+1,N+1]$.
\end{claim}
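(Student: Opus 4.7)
The plan is to argue by contradiction: I will show that neither $[M,N]$ nor $[M+1,N+1]$ can consist entirely of occurrences of the letter $a$, the two arguments being essentially identical.

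Since all errors in $Z_{\sigma^n}$ and $Z_{\tau^m}$ are $a$-errors, Lemma \ref{lem:left} forces $\theta_{\sigma}(a)=a=\theta_{\tau}(a)$. By Remark \ref{remarks}, item (\ref{rem:4}), the words $Z_{\sigma^n}$ and $Z_{\tau^m}$ agree letter by letter up to spins on positions $[1,nk]=[1,mk']$; write $W[t]$ for this common underlying letter. The $k$-periodic structure of $Z_{\sigma^n}$ yields $W[t+k]=\theta_{\sigma}(W[t])$; since $\theta_{\sigma}$ fixes $a$ and permutes the remaining letters among themselves, the set $A=\{t\in[1,nk]:W[t]=a\}$ is invariant under the shifts $t\mapsto t\pm k$ whose images stay in $[1,nk]$, and symmetrically under $t\mapsto t\pm k'$ via $\tau$.

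Next I would note that $N-M=jk'-ik$ is a nonzero element of $k\mathbb{Z}+k'\mathbb{Z}=\gcd(k,k')\mathbb{Z}$, hence $N-M\geq\gcd(k,k')$. Therefore the intervals $[M,N]$ and $[M+1,N+1]$ each have length at least $\gcd(k,k')+1$, and each meets every residue class modulo $\gcd(k,k')$ inside $[1,nk]$. (One checks that $[M+1,N+1]$ is still inside $[1,nk]$: since $j<m$, $N+1=jk'+1\leq(m-1)k'+1\leq nk$.) Now assume for a contradiction $[M,N]\subseteq A$. Since $\sigma$ has at least two distinct letters in its normal form, there exists $u\in[1,k]$ with $w_{\sigma}[u]\neq a$, so $u\in[1,nk]\setminus A$. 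The complement of $A$ is also invariant under $\pm k,\pm k'$. Because $n>m\geq 2$ implies $k<k'$ and hence $nk=mk'\geq 2k'\geq k+k'$, a standard Bezout/connectivity argument on the segment $[1,nk]$ shows that the shifts $\pm k,\pm k'$ allow one to move between any two positions of $[1,nk]$ in the same residue class modulo $\gcd(k,k')$ while remaining in $[1,nk]$. Thus the entire class $u\bmod\gcd(k,k')$ lies in $[1,nk]\setminus A$; this class intersects $[M,N]$ by the length bound, contradicting $[M,N]\subseteq A$. Repeating the argument with $[M+1,N+1]$ in place of $[M,N]$ yields the required letter $c\neq a$.

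The only nonroutine step is the Bezout-connectivity statement on $[1,nk]$, namely that the graph on $[1,nk]$ whose edges are the pairs $\{t,t+k\}$ and $\{t,t+k'\}$ with both endpoints in $[1,nk]$ has as connected components exactly the residue classes modulo $\gcd(k,k')$. This is a standard lemma once $nk\geq k+k'$, which our hypotheses secure; the rest of the proof is bookkeeping about the $k$- and $k'$-periodic structure of $Z_{\sigma^n}$ and $Z_{\tau^m}$ together with the invariance of $A$.
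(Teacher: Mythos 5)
Your proof is correct, but it takes a genuinely different route from the paper. The paper disposes of this claim with ``the proof is similar to the one of Claim~\ref{claim:alla}'', whose argument is an ad hoc transport: one picks the largest position carrying a letter $b\neq a$ to the left of the allegedly all-$a$ window, pushes it by $k$-periodicity into the last period of $Z_{\sigma^n}$ and by $k'$-periodicity into the last period of $Z_{\tau^m}$, and derives a contradiction by comparing the underlying letters at a single well-chosen position. You instead prove the cleaner and stronger structural statement that, since $\theta_\sigma$ and $\theta_\tau$ both fix $a$, the set $A$ of positions carrying the letter $a$ is closed under the steps $\pm k$ and $\pm k'$, hence (by the Fine--Wilf connectivity lemma, applicable because $nk=mk'\geq 2k'>k+k'-\gcd(k,k')$) is a union of residue classes modulo $\gcd(k,k')$; since some class is entirely non-$a$ and $N-M$ is a nonzero element of $k\mathbb{Z}+k'\mathbb{Z}$, hence at least $\gcd(k,k')$, every window of $N-M+1$ consecutive positions meets that class. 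This buys uniformity: both intervals $[M,N]$ and $[M+1,N+1]$ are handled at once, and in fact any $\gcd(k,k')+1$ consecutive positions work, whereas the paper's transport argument must be re-run for each configuration. The cost is that you must import the connectivity lemma, which the paper never states; you correctly identify it as the one nonroutine ingredient and its hypothesis is indeed met.

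One small caveat: your opening inference that the hypothesis forces $\theta_\tau(a)=a$ via Lemma~\ref{lem:left} is only justified when $m\geq 3$ (for $m=2$ there is a single period boundary in $Z_{\tau^m}$, so ``all errors have letter $a$'' says nothing about $\theta_\tau$). This is harmless here: the paper itself glosses ``all the errors are of letter $a$'' as ``i.e., $\theta_\sigma(a)=a$ and $\theta_\tau(a)=a$'' (see the hypotheses of Proposition~\ref{prop:prefix}), and in the sole application of the claim (Lemma~\ref{lem:type32}) both identities are established beforehand via Claim~\ref{lem:type3a}. But you should either read the hypothesis that way or cite Claim~\ref{lem:type3a} rather than Lemma~\ref{lem:left} for the $\tau$ side.
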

\begin{proof}
The proof  is similar to the one of Claim  \ref{claim:alla}.
\end{proof}

\begin{lemma}\label{lem:type32}
Let $\sigma$ and $\tau $ be episturmian  substitutions  such that $\sigma^n =\tau^m$, for $ n > m \geq 2$,   with   $\sigma$ and $\tau$  both admitting   at least  two distinct letters
in their  normal form. We assume  that  $Z_{\sigma^n}$ and $Z_{\tau^m}$ have  an error with the same letter $a$ for their leftmost error,  with different  types of errors.
% and that $Z_{\sigma^n}$ has an error of type
%III, and $Z_{\tau^m}$ has an error of type II.
Then,   there exists an   episturmian substitution $\varrho$ such that  $\tau=\sigma\circ
\varrho$. \end{lemma}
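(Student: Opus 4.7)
My plan is to reduce this to the counting machinery of Proposition \ref{prop:prefix}, following the pattern of the proofs of Lemmas \ref{lem:casepropa} and \ref{lem:type33}. By Claim \ref{lem:type3a} combined with Assertion (\ref{III:i}) of Lemma \ref{lem:type3pre}, the presence of a type II error in either word forces $\theta_\sigma(a)=\theta_\tau(a)=a$, and the exponent attached to that word is at least $3$. In particular, every error in both $Z_{\sigma^n}$ and $Z_{\tau^m}$ is an $a$-error, so Proposition \ref{prop:prefix} is at our disposal throughout. The argument then splits into two parallel sub-cases according to which of $\sigma^n$ or $\tau^m$ is of type II; I will describe the case in which $Z_{\sigma^n}$ is of type I and $Z_{\tau^m}$ is of type II, the reverse being analogous.

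Next, I would exploit the asymmetry between the two types. Lemma \ref{lem:type3}, Assertions (\ref{III:iii}) and (\ref{III:iv}), pins down $Z_{\tau^m}$: every letter $b\neq a$ occurs with spin $1$, in the first period $[1,k']$ the letter $\tilde a$ has spin $0$ up to index $p'$ and spin $1$ beyond, and the single full propagated error has support $[i',\,k'(m-1)+p']$, outside which spins coincide with those of $[\tau^m]=[\sigma^n]$. Conversely, since $Z_{\sigma^n}$ is of type I while $\theta_\sigma(a)=a$ and $n\geq 3$, Assertion (\ref{III:ii}) of Lemma \ref{lem:type3pre} must fail, so at some index $\ell_0\in[p+1,i-1]$ the letter of $Z_{\sigma^n}$ is either $a$ (necessarily unbarred, since an $\overline a$ would be absorbed into the maximal first-level error by Lemma \ref{lem:propa}) or some unbarred $c\neq a$. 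By Remark \ref{remarks}, Assertion (\ref{rem:4}), the letter at $\ell_0$ in $Z_{\tau^m}$ is the same up to spin, and that spin is uniquely determined by the above description of $Z_{\tau^m}$.

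Using this discriminating position $\ell_0$, the equality $[\sigma^n]=[\tau^m]$, and the fact recalled in Assertion (\ref{rem:3}) of Remark \ref{remarks} that during normalization the number of $a$'s in a prefix can only change at an index where a rule $\overline{a\,b_1}\cdots\overline{b_s}a \to ab_1\cdots b_s\overline a$ is applied, I would locate an index $\ell\leq k'$ at which $|\pref_\ell Z_{\sigma^n}|_a = |\pref_\ell Z_{\tau^m}|_a$. Proposition \ref{prop:prefix}, Assertion (\ref{prop:prefixi}), then yields $\mu(\pref_\ell Z_{\sigma^n}) = \mu(\pref_\ell Z_{\tau^m})$, and since $\ell\leq k'$, Remark \ref{rem:prefix} produces the required episturmian substitution $\varrho$ with $\tau=\sigma\circ\varrho$.

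I expect this last step to be the main obstacle. Depending on whether the letter at $\ell_0$ equals $a$ and on the relative position of $\ell_0$ with respect to the single error support $[i',\,k'(m-1)+p']$ of $Z_{\tau^m}$, the count of $a$'s has to be tracked through the central-part analysis of Lemma \ref{lem:propa}, Assertion (\ref{IIv}), together with Proposition \ref{prop:central} and Claim \ref{claim:alla}, in the spirit of Case 3 of the proof of Lemma \ref{lem:casepropa}. The technical subtlety is that on the type II side the errors fuse into a single long support, whereas on the type I side they remain disjoint, so balancing the numbers of $a$'s in the two prefixes has a genuinely different character on the two sides, and a careful case split on the local spin configuration near $\ell_0$ is necessary. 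In the reverse sub-case ($Z_{\sigma^n}$ of type II, $Z_{\tau^m}$ of type I), the same strategy applies with $\sigma$ and $\tau$ interchanged, except that one must be careful when $m=2$: either the structural condition of Lemma \ref{lem:type3pre} propagates under squaring and the problem reduces to Lemma \ref{lem:type33} via $\sigma^{2n}=\tau^{2m}$, or it does not, in which case the type I description of $\tau^m$ persists and the same $\ell_0$-based argument goes through.
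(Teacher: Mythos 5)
Your overall toolkit is the right one (Claim \ref{lem:type3a} to force $\theta_\sigma(a)=\theta_\tau(a)=a$, counting occurrences of $a$ in prefixes, Proposition \ref{prop:prefix}, denormalization via Remark \ref{rem:prefix}), but the architecture of your case split has a genuine flaw. You treat the two assignments of types as ``parallel sub-cases,'' work out the case where $Z_{\sigma^n}$ is of type I and $Z_{\tau^m}$ is of type II, and declare the reverse ``analogous.'' The two cases are not interchangeable: since $n>m$ and $kn=k'm$, one has $k'>k$, and the target conclusion $\tau=\sigma\circ\varrho$ is asymmetric in $\sigma$ and $\tau$. In fact the configuration you chose to elaborate ($\sigma$ of type I, $\tau$ of type II) cannot occur at all: type II in $Z_{\tau^m}$ forces $m\geq 3$, hence $n\geq 4$; type I in $Z_{\sigma^n}$ forces, in each gap $[k\ell+p+1,k\ell+i-1]$, an occurrence of some $\tilde b$ with $b\neq a$; tracking the spin of that occurrence through the single long type II support $[i',k'(m-1)+p']$ of $Z_{\tau^m}$ yields $t>i'>p'$ together with $t-k<p'-k'$, i.e.\ $k>k'$, a contradiction with $k'>k$. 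So the only case that needs an argument is the one you dismissed as ``analogous'': $Z_{\sigma^n}$ of type II and $Z_{\tau^m}$ of type I, and nothing from your elaborated case transfers to it by symmetry.

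The second gap is that the step you yourself flag as ``the main obstacle'' --- locating an index $\ell\leq k'$ with $|\pref_\ell Z_{\sigma^n}|_a=|\pref_\ell Z_{\tau^m}|_a$ --- is precisely where all the work lies, and it does not reduce to a local analysis around a single discriminating position $\ell_0\in[p+1,i-1]$. In the surviving case one first derives $i'>i$ and $k'>i+k-p$, then splits on $i<i'\leq k$ versus $i'>k$; in the first sub-case one shows $i'$ lies in the central part of $Z_{\sigma^n}$ and uses Claim \ref{claim:alla} to find an index $t>k$ carrying $\overline b$ in both words, while in the second sub-case one must further distinguish whether $[i'+1,k']$ contains a letter $b\neq a$, invoke Claim \ref{lem:claim2} to produce a barred non-$a$ letter at a position $s<i'$, apply Assertion (\ref{prop:prefixiii}) of Proposition \ref{prop:prefix} to get $|\pref_s(Z_{\sigma^n})|_a=|\pref_s(Z_{\tau^m})|_a+1$, and then rule out $Z_{\sigma^n}[s+1]=\overline a$ by a separate contradiction argument (itself split on whether $(j+1)k=k'$). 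None of this is suggested by, or recoverable from, the sketch around $\ell_0$; as written, the proposal does not close the lemma.
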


\begin{proof}
First note that since the letters of the leftmost errors in
$Z_{\sigma}$ and $Z_{\tau}$ are equal to $a$,
 Claim \ref{lem:type3a} implies that
 $\theta_{\sigma} ( a)= \theta_{\tau} ( a)= a$.

We  now prove that   the case   where $Z_{\sigma^n}$ has  errors
of type I and $Z_{\tau^m}$ has  errors of type II cannot hold.
Since in $\tau$ we have  errors of type II, we have $m\geq 3$
(by Assertion (\ref{III:i}) of Lemma \ref{lem:type3pre}) and hence
$n\geq 4$  (since $n>m$). Now consider parts outside the  support
of the errors of type I in $Z_{\sigma^n}$ (i.e., at positions
$k\ell+p+1,  \ldots  ,k\ell+k+i-1$ for $\ell=1, \ldots, n-2$).
%We have  $p \leq i+2$ by  Lemma \ref{lem:propa}.
In particular, this means that $p\leq i+2$ (we have $p<i$ by
definition, and $p=i+1$ means an error of type II by Corollary
\ref{IIsuppbis}). Since errors are  of type I, each part
corresponding to a value of $\ell$ should contain $\tilde{b}$ for
some $b\neq a$, also by Corollary \ref{IIsuppbis} ($b$ could be
different in different parts due to the  permutation
$\theta_{\sigma}$). Let $k+t$ stand for the index of  such an
occurrence in the second period   (hence $p<t<i$), i.e.,
$Z_{\sigma^n}[k+t]=\tilde{b}.$ Since the spin is the same in all
the periods of $\sigma$, and at least one of them is covered by
the support $[i', (m-1)k'+p']$ of the error of type II in $Z_{\tau^m}$ ($m\geq 3$ and $n\geq 4$),
$[\sigma^n][k+t]=b$ (by Lemma \ref{lem:type3}, Assertion
(\ref{III:vii})).  This implies $Z_{\sigma^n}[k+t]=b$ (since the
spin is not changed after normalization outside the supports of the errors of type
I). By $k$-periodicity, this implies that in the first period in
$Z_{\sigma^n}$,  one has $Z_{\sigma^n}[t]=b$.  We also have $[
\sigma^n][t]=b$. Due to the structure of the error of type II,
all the occurrences of  the $ \tilde{b}$'s  have spin $1$ in
$Z_{\tau^m}$ (by Assertion (\ref{III:iii}) of Lemma
\ref{lem:type3}), and  in  $[\tau^m]$
 all the occurrences of the $b$'s are inside the error (by Assertion (\ref{III:vii}) of Lemma \ref{lem:type3}); outside nothing changes, so they keep their spin equal to $1$.
One thus has  $i'<t$  and $k(n-1)+t <  k'(m-1)+p'$,  which implies $t-k< p'-k'$ since $kn=k'm$.
However, $ t>i' $ together with   $ i'>p'$ implies   $t>p'$. The inequality  $ k-t > k'-p'$ implies  $ k>k'$,  which contradicts
$k'>k$.

We thus assume that  $Z_{\sigma^n}$ has  errors of type
II, and $Z_{\tau^m}$ has  errors of type I.  Symmetrically to the  above arguments,
% from Lemma
%\ref{lem:type23}
we get that  each occurrence of $\tilde{b}$ at a position $t$ in
$Z_{\tau^m}$, with $k'l+p'<t<k'l+i'$, $l=0, \dots, m-1$, must be
an occurrence of $b$ and hence must be inside the type II error in
$Z_{\sigma^n}$. These considerations work for $m>2$, and to make
them work in the case $m=2$, it is enough to take squares of
$\sigma^n$ and $\tau^m$. For each such $t$ with $p'<t<i'$, one
has  $t> i$ (by considering the  first period) and $k'-t>k-p$ (by
considering the last period). But, contrary to the previous case,
this does not complete the proof. In particular, $t> i$ implies
 $i'>i$, as well as $k'>i+k-p$.

We  now distinguish two cases, namely   $i<i'\leq k$ and   $i'>k$.

\bigskip

{\bf  Case 1.}  We first assume that  $i<i'\leq k$.

 By   Claim \ref{claim:alla}, there exists an occurrence  $t$ in $[i'+1,k']$  of a letter   different from $a$, say $b$.
One has $Z_{\tau^m}(t)= \overline{b}= Z_{\sigma^n}(t)$. We then use Proposition  \ref{prop:prefix}  for $t-1$.

Let us prove  that  $t>k$. The position    $i$' belongs to  the
central part of  $Z_{\sigma^n}$. Indeed,
$Z_{\tau^m}(i')=\overline{a}$ and  $[Z_{\tau^m}][i']=a$ by
Assertion (\ref{IIii})  of Lemma \ref{lem:propa}. But
$Z_{\sigma^n}(i')=\overline{a}$ since   $i' \leq k$ together  with
Assertion  (\ref{IIbis}) of Lemma \ref{lem:propa}. The position
$i'$  belongs to the support of the  leftmost error of
$Z_{\sigma^n}$, so by Lemma \ref{lem:type3} there can be a  change
in the spin only if $i'$ in the central part of the error in
$Z_{\sigma^n}$, so we have
$Z_{\sigma^n}[i',k]=\overline{a}\cdots\overline{a}$ and
$Z_{\sigma^n}[k+1]=a$. This corresponds to $Z_{\tau^m}[i',
k+1]=\overline{a}\cdots\overline{a}$ since $k+1 \leq k'$ by
Assertion  (\ref{IIbis}) of Lemma \ref{lem:propa}. Hence $t>k$.
Now we use  Proposition \ref{prop:prefix} for $t-1$. This gives  $
\pref_{t-1} Z_{\sigma^n} \equiv  \pref_{t-1} Z_{\tau^m}$ and so we
can denormalize $ \pref_{t-1} Z_{\tau^m}$ to find $[\sigma]$ as a
prefix of a word over $\A\cup \overline{\A}$ representing $\tau$.

\bigskip

{\bf  Case 2.}  We  assume  now $i' > k$. We distinguish two
cases according to the occurrence or not  of  a letter $b \neq a$
in $[i'+1,k']$.
\begin{itemize}
\item[-]
Assume that  there exists an index $s$, with $i'<s\leq k'+1$, such
that $Z_{\tau^m}[s]=\tilde{b}$ for some $b\neq a$. Clearly, we
then have $Z_{\tau^m}[s]=\overline{b}$ by Assertion (\ref{IIi}) of
Lemma \ref{lem:propa}, and $Z_{\sigma^n}[s]=\overline{b}$ (by
Assertion (\ref{III:iii}) of  Lemma \ref{lem:type3}). By
Proposition \ref{prop:prefix}, the prefixes of $Z_{\tau^m}$ and
$Z_{\sigma^n}$ of length $s$ correspond to the same substitution
and also for $s-1$. Hence, $\pref_{s-1} Z_{\tau^m}$ can be
denormalized to find $Z_{\sigma}$ as a prefix of some word over
$\A\cup \overline{\A}$ representing $\tau$ since $ s
>k$ (see Remark \ref{rem:prefix}).

\item[-]
Assume  now that $Z_{\tau^m}[i' ,k'+1]=\overline{a}\cdots
\overline{a}$, $Z_{\tau^m}[k'+1]=a$. First, we deduce from Claim
\ref{lem:claim2} that there is an index $s$, with  $jk\leq s<i'$,
with  $j$  being  the largest integer such  that $jk  < k'$, such
that
 $Z_{\sigma^n}[s]=\overline{b}$ for some $b\neq a$. We use $M=jk$ and $N=k'$  with $[jk+1,i'-1]$.
Consider  indeed an occurrence of   a period of $Z_{\sigma^n}$
containing the position $i'$. It contains an occurrence of
$\tilde{b}$ for some $b\neq a$ and it is in fact an occurrence of $\overline{b}$
%since we are inside the  support  of  full propagated   error  in $Z_{\sigma^n}$
 (by Assertion (\ref{III:iii}) of Lemma
\ref{lem:type3}). Note also that we must have such an
occurrence before $i'$.

Now let us choose the largest such occurrence $ s$,  with  $ jk \leq s < i'$, such  that   $Z_{\sigma^n}[s]=\overline{b}$ for some $b\neq a$.
One has $Z_{\tau^m}[s]=b$. Indeed, since $s<i'$,  we have $Z_{\tau^m}[s] = [\tau^m][s]$. Moreover,  since  $s$ in the support  of  the error of    the second level of $Z_{\sigma^n}$ (by  Assertion (\ref{III:vii}) of  Lemma
\ref{lem:type3}),  we have
 $Z_{\sigma^n}[s]=\overline{b}$ (by Assertion (\ref{III:iii}) of  Lemma
\ref{lem:type3}). Now, by Assertion (\ref{prop:prefixiii}) of Proposition \ref{prop:prefix},  we have
$$|\pref_s(Z_{\sigma^n})|_{a}=|\pref_s(Z_{\tau^m})|_{a}+1 .$$
Observe  that  we  have
$Z_{\tau^m}[s, i']=b a^{i'-s-1} \overline{a}$ (the letters are $a$'s,   the  spins cannot alternate from $1$ to $0$,
 and   by definition of a   propagated error of the first level the spin  at index $i'-1$ equals $0$).

Let us assume that  $Z_{\sigma^n}[s+1]=a$. Then $Z_{\tau^m}[s,
i']=Z_{\tau}[s ,i']=b a^{i'-s-1} \overline{a}$, which we can
denormalize to $b \overline{a} a^{i'-s-1} (\overline{a})^{k'-i'}$.
Now we can apply Proposition \ref{prop:prefix} to $Z_{\sigma^n}$
and $b \overline{a} a^{i'-s-1}
(\overline{a})^{k'-i'}Z_{\tau^{m-1}}$ for the index $s+1$. With
the denormalization we did before, we get that we can denormalize
the prefix of length $i'$ of a word over $\A\cup \overline{\A}$
representing $\tau$ to find $\sigma$.

We  now  conclude the proof by  showing  that the case
$Z_{\sigma^n}(s+1)=\overline{a}$ is impossible. In this case   $s+1$
belongs to the end of some  copy of $Z_\sigma$ (in the beginning of the
propagated error of the first level), i.e., $ s+1 \in [ kj+i, k(j +1)].$   We recall that $s+1 \leq i'$ and $i'>k$.

First suppose that $ (j+1)k \neq k'$. By Claim \ref{lem:claim2},
we
 have an occurrence of $\overline{c}$ ($c \neq a$) at some position
$t$ between $i'$ and the end of the copy of $Z_{\sigma}$. Take the
smallest such $t$. Then,  the desired contradiction comes from
estimating the difference between the numbers of
occurrences of $a$ in $Z_{\sigma^n}[s+1 , t-1]$ and in
$Z_{\tau^m}[s+1,  t-1]$. On  the one hand, since
$Z_{\sigma^n}[s+1,t-1]=\overline {a}^{t-s-1}$,
we have $|Z_{\sigma^n}[s+1 , t-1]|_{a}=0$, and so
\begin{equation}
\label{eq:dif1}|Z_{\tau^m}[s+1 , t-1]|_{a}-|Z_{\sigma^n}[s+1 ,
t-1]|_{a}\geq 0.\end{equation}

On the other hand, we can estimate this difference  by considering
prefixes of length $s$ and $t$  together with  Proposition
\ref{prop:prefix}. Since $Z_{\sigma^n}[s]=\overline{b}$ and
$Z_{\tau^m}[s]=b$, by   Proposition \ref{prop:prefix},  Assertion \ref{prop:prefixiii}, one has
\begin{equation}
\label{eq:dif2}|\pref_s Z_{\sigma^n}|_a= |\pref_s
Z_{\tau^m}|_a-1.\end{equation}

We first assume  $Z_{\tau^m}[t]=\overline{c}$. Since
$Z_{\sigma^n}[t]=\overline{c}$,
%$\mu(\pref_t Z_{\sigma^n})= \mu(\pref_t Z_{\tau^m})$,
by Proposition \ref{prop:prefix}, Assertion
\ref{prop:prefixii}, one has
%In particular, this means that
\begin{equation}
\label{eq:dif3}|\pref_t Z_{\sigma^n}|_a= |\pref_t
Z_{\tau^m}|_a.\end{equation}
%\tcb{ Indeed, under the conditions of the  present lemma, we only have
%errors of type $a$. This implies that   during the normalization process, we
%only use transformations $\overline a \overline v a \to a v
%\overline a$, so the numbers of $a$'s in $\pref_t Z_{\sigma^n}$
%and in $\pref_t Z_{\tau^m}$ either do not change, or decrease once
%by 1 when the occurrence of $\overline v$ contains index  $t$. Pas compris?}
Subtracting \eqref{eq:dif2} from \eqref{eq:dif3}, we get a
contradiction with \eqref{eq:dif1}.

We now assume $Z_{\tau^m}[t]=c$. Then, by Proposition \ref{prop:prefix},
Assertion \ref{prop:prefixiii}, one has \begin{equation}
\label{eq:dif4}|\pref_t Z_{\sigma^n}|_a= |\pref_t
Z_{\tau^m}|_a-1.\end{equation} In addition, as  $Z_{\tau^m}[t]=c$,
the error in $Z_{\tau^m}$ ends before $t$, i.e., there exists an
index $j$,  with $i' < j < t$, such that $Z_{\tau^m}[j]=a$. This implies
$|Z_{\tau^m}[s+1,   t-1]|_{a}  \geq 1$, and thus we have an
inequality  even stronger than  \eqref{eq:dif1}:
\begin{equation} \label{eq:dif5}|Z_{\tau^m}[s+1 , t-1]|_{a}-|Z_{\sigma^n}[s+1 ,
t-1]|_{a}\geq 1.\end{equation} Subtracting \eqref{eq:dif2} from
\eqref{eq:dif4}, we get a contradiction with \eqref{eq:dif5}.

Now assume $ k' =  (j+1)k$. We show that this case is not possible.
We  consider  the position $k(n-1-j)+s$  which  corresponds to $s$   in the last period of $Z_{\sigma^n}$.
One has   $[\sigma^n] [k(n-1-j)+s]= Z_{\sigma^n} [k(n-1-j)+s] =\overline{b}$ since
$k(n-1-j)+s > k(n-1)+p$. One has   $[\tau^m] [k(n-1-j)+s]= Z_{\tau^m} [k(n-1-j)+s] ={b}$
since $b$ has spin $0$  and  does not change its spin during the normalization ($ b \neq a$). Hence, the desired contradiction. \qedhere
\end{itemize}
\end{proof}

We  now   consider the one-letter case.
\begin{lemma}\label{lemma:oneletter}
Let $\sigma$ and $\tau $ be episturmian substitutions such that
$\sigma^n =\tau^m$,  with $n\geq m \geq 2$. We assume that either
$\sigma$, or $\tau$, or both contain only one letter in its normal
form. Then  there  exists  an  episturmian substitution $\varrho$
such that  $\tau=\sigma\circ \varrho$.
\end{lemma}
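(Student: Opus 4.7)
My plan is to proceed by a case analysis on whether $\sigma$, $\tau$, or both have a one-letter normal form. Suppose first that both have one-letter normal forms: write $[\sigma] = a^s \overline{a}^t \theta_\sigma$ and $[\tau] = (a')^{s'} \overline{a'}^{t'} \theta_\tau$, and apply the explicit formulas for $[\sigma^n]$ and $[\tau^m]$ from Section~\ref{subsec:oneletter}. Comparing underlying words (preserved under block-normalization) forces $a = a'$. If $\theta_\sigma(a) = a$, then $[\sigma^n] = a^{sn}\overline{a}^{tn}\theta_\sigma^n$ has an all-$a$ underlying, which in turn forces $\theta_\tau(a) = a$; equating normal forms gives $sn = s'm$, $tn = t'm$, $\theta_\sigma^n = \theta_\tau^m$, and since $n \geq m$, also $s' \geq s$ and $t' \geq t$. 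The episturmian substitution
\[
\varrho := \psi_a^{s'-s} \circ \overline{\psi}_a^{t'-t} \circ \theta_\sigma^{-1} \circ \theta_\tau
\]
satisfies $\sigma \circ \varrho = \tau$ by a direct computation using $\theta_\sigma \psi_a = \psi_a \theta_\sigma$ (from~\eqref{eq:permut} combined with $\theta_\sigma(a) = a$) and $\psi_a \overline{\psi}_a = \overline{\psi}_a \psi_a$. If instead $\theta_\sigma(a) \neq a$, then $[\sigma^n]$ has $n$ uniform-letter blocks of length $s+t$; equating with $[\tau^m]$ forces $\theta_\tau(a) \neq a$, matching block lengths gives $s+t = s'+t'$ (hence $n = m$), and per-block comparison gives $s = s'$, $t = t'$, and $\theta_\sigma^i(a) = \theta_\tau^i(a)$ for $0 \leq i \leq n-1$. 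Taking the permutation $\varrho := \theta_\sigma^{-1}\theta_\tau$, one verifies $\sigma \circ \varrho = \tau$ directly.

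Next, suppose $\sigma$ has a one-letter normal form but $\tau$ has at least two distinct letters. I would first argue $\theta_\sigma(a) \neq a$: otherwise $[\sigma^n]$'s underlying is all $a$'s, forcing $\tau$'s normal form to use only $a$. So $[\sigma^n]$ exhibits the $n$-block underlying structure. With $k = |\sigma|$ and $k' = |\tau|$, I would then show that $k' = qk$ for some integer $q \geq 2$ (so $n = qm$). Indeed, if $k' = qk + r$ with $0 < r < k$, comparing the underlying words of $Z_{\tau^m}$'s first two periods ($w_\tau$ and $\theta_\tau(w_\tau)$) against the rigid block structure of $[\sigma^n]$ near position $k'+1$ forces both $\theta_\tau(a) = \theta_\sigma^q(a)$ (from position $k'+1$) and $\theta_\tau(a) = \theta_\sigma^{q+1}(a)$ (from position $(q+1)k+1$), whence $\theta_\sigma(a) = a$, a contradiction; and $q=1$ would mean the underlying of $w_\tau$ is $a^k$, contradicting that $\tau$ uses multiple letters. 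Since $\sigma^q$ has the explicit normal form $a^s \overline{a}^t \theta_\sigma(a)^s \overline{\theta_\sigma(a)}^t \cdots \theta_\sigma^{q-1}(a)^s \overline{\theta_\sigma^{q-1}(a)}^t \theta_\sigma^q$ (in normal form and involving at least two distinct letters for $q \geq 2$), and $|\sigma^q| = qk = k' = |\tau|$, and $(\sigma^q)^m = \sigma^{qm} = \sigma^n = \tau^m$, the hypotheses of Lemma~\ref{lem:samelength} are met; applying it to the pair $(\sigma^q, \tau)$ produces a permutation $\varrho'$ with $\tau = \sigma^q \circ \varrho'$. Then $\varrho := \sigma^{q-1} \circ \varrho'$ is an episturmian substitution with $\sigma \circ \varrho = \tau$.

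The remaining case, $\sigma$ with multiple letters and $\tau$ with one letter, cannot occur: since $n \geq m$ gives $k' \geq k$, the underlying of $w_\sigma$ (a word of length $k$ using multiple letters) would have to lie inside the first uniform-letter block of $[\tau^m]$'s underlying (of length $k'$), a contradiction.

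The main obstacle is the combinatorial argument in the second case that $k \mid k'$: it requires carefully tracking the underlying word across the boundary of $Z_{\tau^m}$'s first two periods and exploiting the rigid block structure of $[\sigma^n]$'s underlying. Once divisibility is established, the reduction to Lemma~\ref{lem:samelength} is routine, and the one-letter sub-cases reduce to direct verifications from the closed-form expressions for $[\sigma^n]$ and $[\tau^m]$ provided in Section~\ref{subsec:oneletter}.
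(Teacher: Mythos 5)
Your proof is correct, and for the critical sub-case it takes a genuinely different route from the paper. The easy parts coincide with the paper's: the both-one-letter case via the closed forms of Section \ref{subsec:oneletter} (note that the paper's own inequality ``$t'\leq t$'' there is a typo for $t'\geq t$, which you state correctly), and the observation that $\sigma$ multi-letter with $\tau$ one-letter is impossible because underlying letters are invariant under block-transformations. The divergence is in the case where $\sigma$ is one-letter with $\theta_\sigma(a)\neq a$ and $\tau$ is multi-letter. The paper splits on whether the leftmost error of $Z_{\tau^m}$ starts at an index $i'\leq k$ or not: if not, $w_\sigma$ is a prefix of $w_\tau$ and Remark \ref{rem:prefix} applies; if so, it derives a contradiction by tracking spins through Lemmas \ref{lem:propa} and \ref{lem:type3} (the error would force $t=0$, and then the final letter of the propagated error could not have spin $1$). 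You instead prove $k\mid k'$ by comparing the underlying word of $Z_{\tau^m}$ (which is $v\,\theta_\tau(v)\cdots$) with the rigid block word $a^k\theta_\sigma(a)^k\cdots$ of $[\sigma^n]$: reading position $k'+1$ gives $\theta_\tau(a)=\theta_\sigma^{q}(a)$ while position $(q+1)k+1$ gives $\theta_\tau(a)=\theta_\sigma^{q+1}(a)$, forcing $\theta_\sigma(a)=a$ unless $r=0$; I checked that both positions lie in range given $m\geq 2$, $0<r<k$ and $n\geq 2q+1$. Writing $k'=qk$ with $q\geq 2$, noting that $[\sigma^q]=Z_{\sigma^q}$ is already in normal form with at least two distinct letters, and that $(\sigma^q)^m=\sigma^{qm}=\sigma^n=\tau^m$, the reduction to Lemma \ref{lem:samelength} is legitimate and non-circular (that lemma's proof does not invoke the one-letter lemma). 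Your route buys a purely ``underlying-word'' argument that bypasses the error-propagation machinery entirely in this case; the paper's route avoids introducing the auxiliary substitution $\sigma^q$ and the divisibility claim.
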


\begin{proof}   We distinguish two cases according to the substitution whose normal form contains only one letter  (the case where both words contain
only one letter is handled inside Case 1).

\bigskip

{\bf Case 1.} Assume first that  $\sigma$  contains only one letter
in its normal form. We have
$[\sigma]=a^{s}\overline{a}^t\theta_{\sigma}$, with $s,t\geq 0$, $s+t\geq 1$.

\bigskip
We   now distinguish two further cases  according to the fact that
$\theta_{\sigma}(a)=a$ or not.

\begin{itemize}
\item Assume first  that  $\theta_{\sigma}(a)=a$.  One has $[\sigma^n]=a^{sn}\overline{a}^{tn}\theta_{\sigma}^n$. We have only letters $\tilde{a}$'s
in $[\sigma^n]$, so  in $[\tau^m]$ as well. It follows that
$\theta_{\tau}(a)=a$,
$[\tau]=a^{s'}\overline{a}^{t'}\theta_{\tau}$ for some integers
$s',t'\geq 0$, $s'+t'\geq 1$, and
$[\tau^m]=a^{s'm}\overline{a}^{t'm}\theta_{\tau}^m$. Hence
$s'm=sn$, $t'm=tn$, and since $n\geq m$, we have $s\leq s'$, and
$t'\leq t$. Applying several times the block-transformation
$a\overline{a}\to \overline{a}a$, we can denormalize $[\tau]$,
which gives $[\tau] \equiv a^s \overline{a}^t a^{s'-s}
\overline{a}^{t'-t} \theta_{\tau}$ and we conclude  as in Remark
\ref{rem:prefix} by inserting   $\theta_{\tau}^{-1}.$

\item  Now consider the case where $\theta_{\sigma}(a)\neq a$.
One has
$$[\sigma^n]=a^{s}\overline{a}^{t} \mid
(\theta_{\sigma}(a))^{s}\overline{\theta_{\sigma}(a)}^{t}\mid
(\theta^2_{\sigma}(a))^{s}\overline{\theta^2_{\sigma}(a)}^{t}\mid
\cdots
\mid(\theta^{n-1}_{\sigma}(a))^{s}\overline{\theta^{n-1}_{\sigma}(a)}^{t}\mid \theta_{\sigma}^n.$$

If $|\tau|=|\sigma|$ (and hence $m=n$),  inspection of  the first period of $[\sigma^n]= [\tau^m]$
yields that $[\tau]$ also
contains only $\tilde{a}$'s. Similarly, inspection of  the second period of $[\tau^m]$   yields $\theta_{\tau}(a)=\theta_{\sigma}(a)\neq a$, etc. So, $[\tau]^m$ is of the same form as $[\sigma]^n $ up to  the
permutations,
$[\tau]=a^s\overline{a}^t\theta_{\tau}$, and we have
$\tau=\sigma\circ \varrho$ with
$\varrho=\theta_{\sigma}^{-1}\theta_{\tau}$.

Assume now  $|\tau|\neq|\sigma|$. If  $\tau$   also  contains only one letter
in its normal form, we conclude as   above in the case  where $\theta_{\sigma}(a)=a$.   We thus assume now that  $[\tau]$ contains at least two
distinct letters. % $\tilde{a}$ and $\tilde{b}$.
If  $Z_{\tau^m}$ contains
no  error, or  if  its  leftmost error starts at after the index $ k=|\sigma|$ (i.e., $i'>k$),
then  $w_{\sigma}$ is  a prefix of $w_{\tau}$, and  we conclude by Remark \ref{rem:prefix}.
So, it remains to consider the case where there is an
error in $Z_{\tau^m}$  which starts at $i' \leq k$;   the letter of this error is thus  $a$. Let us prove that we reach a contradiction.
The letter at the
ending position $k'+p'$ of the first propagated error of the first level (in $Z'_{\tau^m}$)  must also be $\tilde{a}$. %, and it also must be after $k'$.
Hence, the  leftmost error in $Z'_{\tau^m}$  (with support $[i',k'+p']$)  must contain all the $\tilde{b}$'s  (with $b=\theta_{\sigma}(a)$)    from the second period of $\sigma^n$ (i.e.,
with support $   [k+1,2k]$). After normalization,  the
letters $\tilde{b}$'s have spin $ 0$  (by Lemma \ref{lem:propa} and \ref{lem:type3}), since $ b \neq a$.  This implies  that $t=0$.
All letters  thus have spin $0$   in  $\tau^m]$; however, the last letter of a  leftmost error must have spin  $1$  in its normalized form    (at index $k'+p'$ by Lemma \ref{lem:propa}  for an error of type I  and  at index $k'(m-1)+p'$   for an error of type II by Lemma \ref{lem:type3}),  which yields the  desired  contradiction.
\end{itemize}

\bigskip

{\bf Case 2.} Now assume that  $\tau$ contains only one letter in its normal form.
Since $|\sigma | \leq | \tau|$,  $\sigma$ contains only one letter in its normal form.
Indeed, the condition $[\sigma^n]=[\tau^m]$
guarantees we have the same letters probably with different spins
at the same positions inside both $[\sigma]$ and $[\tau]$. The
case where $[\sigma]$ contains only letter  has  already
been handled in Case 1, which ends the proof.
\end{proof}

\section{Variations on the notion of  rigidity}\label{sec:morerigidity}

%%%%%%%%%%%%%%%%%%%%%%%%%

%S: indices changed to start from 1 for consistency with the previous sections (indices are now bigger by 1 and odd and even trade places). Previous version for comparison dates 09/01/2022.

%%%%%%%%%%%%%%%%%%%%%%%%%

In the present paper, the  notion of  rigidity  has been  based so far   on  the study of the set of  substitutions having a  same  given   fixed point $u$. We have indeed focused
on $$\SStab{u} =\{  \tau  \mid    \tau \mbox{ is a  substitution  with
}   \tau(u)=u\}.$$  In this section,
we  first discuss  further  relevant variations    on the notion  of  rigidity inspired by \cite{Krieger:08,DiekertKrieger:09},
 and then provide an example illustrating  the notion of a weak rigidity.

We  recall that a  \emph{shift } $X$ is  a closed  shift-invariant set of infinite words of some ${\mathcal A}^{\mathbb N}$, where ${\mathcal A}$ is a finite alphabet.  Here the set ${\mathcal A}^{\mathbb N }$  is endowed with the product topology of the discrete topology on each copy of ${\mathcal A} $ and the {\em shift map}  $S$ is  defined by $S \left( (x_n)_{n \in \mathbb{N}} \right) = (x_{n+1})_{n \in \mathbb{N}}$.
The \emph{stabilizer}   of a  shift $X$   is then  defined   as the  monoid    
%of the form    $\Stab X = \{ f : f(X) \subset X \}$, where $X$ is a given  set. 
$$\SStab X =  \{ \tau \mid  \tau \mbox{  is a  substitution with }  \tau(X) \subset X \}.$$
In particular, for $u$ in $X$, one has $\SStab u  \subset \SStab  X $ when    $X$ is the orbit closure of $u$, i.e., $X$ is the closure of  $\{S^n u  \mid n \in {\mathbb N}\}$.

 We will
restrict to the {\em minimal case} for ease of simplicity,  that is, all
infinite words in $X$  have the same set of factors.
 Given a  primitive substitution $\sigma$,   the subshift $X_{\sigma}$ generated by $\sigma$  is   the set   of infinite words  having the same  language   as any  infinite  word fixed  by  some positive  power of $\sigma$ (they all have the same language by primitivity, hence  $X_{\sigma}$ is minimal).

We then  can  consider   variations of the notion of rigidity for a minimal subshift  $X_{\sigma}$, or else
for a primitive substitution $\sigma$ by considering the following
sets, called \emph{primitive stabilizers}\footnote{The inclusion is replaced   by an equality  by minimality.}:
$$\Stab{\sigma}= \{ \tau   \mid \tau \mbox{ is a primitive
substitution   with  } X_{\sigma}=X_{\tau}\},$$ $$\Stab{X} = \{
\tau \mid \tau \mbox{ is a primitive substitution   with  }
X_{\tau}=X\},$$
  and by  asking whether    their  primitive  stabilizer is
cyclic, i.e., it is   generated  by  a  single element.
%$$\Stab{u} =\{  \tau  \mid    \tau \mbox{ is a  substitution  with
%}   \tau(u)=u\},$$
%The {\em primitive stabilizer} of a minimal subshift $X$ is then the
%set of all primitive substitutions $\sigma$ defined on the
%alphabet $\A$ that satisfy $X_{\sigma} = X$. In other words, these
%are the primitive substitutions whose language coincides with the
%language of $X$. The subshift $X$ is said {\em rigid} if its
%stabilizer is cyclic.
In    \cite{Krieger:08}, the author  considers the notion of an  ``iterative stabiliser''  as the set of morphisms that generate, by iteration, a given shift  (together with the identity to make it a monoid).

We  also can define weaker  notions of rigidity. We recall that two
substitutions    $\sigma$ and $\tau$ over the same alphabet are
conjugate ($\sigma \sim \tau$) if there exists a word $w$ such
that
 $w \sigma(x)= \tau(x) w$ for   every letter  $x$.
A shift, an infinite word, or a substitution is  then  said {\em  weakly rigid} if for any two substitutions
$\sigma$, $\tau $ in  one of its stabilizers, $\sigma^k \sim \tau^m$ for some
positive integers $k, m$,  and for some given  equivalence relation.
Further   equivalence relations can also  lead   to  various notions of weak rigidity.  Consider  for instance  the operation $  w
\mapsto \overline{w}$ which  means that we replace  $ 0$ by $1$   and $1$
by $0$.
 We then can  define a notion of weak rigidity where
the equivalence  relation    relies on    the operation $  w
\mapsto \overline{w}$. Other equivalence relations  can be considered by   asking  whether  two substitutions
$\sigma, \tau$ have a common power (i.e.,  there exist $m,n$ such
that $\sigma^m= \tau^n$)   or whether  there are each power of a
common substitution (i.e., there exist $ k, \ell$ and $\varrho$
such that $\sigma=\varrho ^k$, $\tau= \varrho ^{\ell}$).

This can be illustrated via   the Sturmian case for two-letter
alphabets; see e.g.   \cite{BerFreSir}. Let $u$ be the Fibonacci
word  generated  by the Fibonacci substitution $\sigma \colon 0
\mapsto 01, 1 \mapsto 0$. One has $\Stab{u} \neq \Stab{\sigma}$.
Consider indeed   the  conjugate $\tau$  of the Fibonacci
substitution, i.e., $\tau \colon  0 \mapsto 10, \  1 \mapsto 0$. Both substitutions  $\sigma$ and $\tau$ are primitive.
One has  $\tau \not \in \Stab{u}$,  but  $\tau \in \Stab{\sigma}$,
$\Stab{u}=\{\sigma^n\}$  and $\Stab{\sigma}= \{ \mu \mid
\exists n, \    \mu \sim   \sigma ^n\}$.

We now detail  an example of   an infinite word $u$, obtained as  a fixed
point of a two-letter substitution $\sigma$,  which  displays  some rigidity
 involving  the equivalence relation   $  w
\mapsto \overline{w}$,  in the sense that  $\Stab{u} $ contains only
powers and products of the  two substitutions $\sigma $ and $\tau$, with     $\tau(0)=\overline{\sigma (1) }$  and $\sigma(1)=\overline{\tau (0) }$.

\begin{theorem} \label{theo:exnr}
Consider the following substitutions over the alphabet $\{0,1\}$
$$\sigma \colon 0 \mapsto  01,  \ 1 \mapsto  100110, \qquad
\tau  \colon  0 \mapsto  011001, \ 1 \mapsto  10.$$
Let $u$
be  the fixed point of $\sigma$ starting with $0$, i.e.,
$u=\sigma^{\infty}(0)$.
 Then, the  primitive  stabilizer $\Stab{u}$ of the word $u$  satisfies
$$\Stab{u}=\{\sigma^i, \tau^j, \sigma \tau^k, \tau \sigma ^{\ell}\mid i,j,k, \ell \in {\mathbb N} \}.$$ \end{theorem}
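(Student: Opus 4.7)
The argument unfolds in four steps.

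\emph{Step 1: the key relations.} I first check by direct computation on the letters $0$ and $1$ the two identities
$$\tau\circ\sigma = \sigma^2 \qquad \text{and} \qquad \sigma\circ\tau = \tau^2.$$
For instance $\tau\sigma(0)=\tau(01)=\tau(0)\tau(1)=011001\cdot 10=01100110=\sigma(0)\sigma(1)=\sigma^2(0)$; the analogous check on the letter $1$ involves words of length $24$ and is equally straightforward, and a symmetric computation yields $\sigma\tau=\tau^2$.

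\emph{Step 2: the inclusion $\supseteq$.} From $\sigma(u)=u$ and $\tau\circ\sigma=\sigma^2$ I obtain $\tau(u)=\tau(\sigma(u))=\sigma^2(u)=u$, so $\tau\in\Stab{u}$. Iterating the two relations, I show that the monoid generated by $\sigma,\tau$ reduces to $\{\sigma^i\mid i\geq 0\}\cup\{\tau^j\mid j\geq 0\}$: using $\sigma\tau=\tau^2$ any word in $\{\sigma,\tau\}^\ast$ may be brought to the form $\tau^a\sigma^b$ (each rewrite strictly decreases the number of $\sigma$'s), and then $\tau\sigma=\sigma^2$ collapses $\tau^a\sigma^b$ to $\sigma^{a+b}$ whenever $b\geq 1$. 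Since $\sigma\tau^k=\tau^{k+1}$ and $\tau\sigma^\ell=\sigma^{\ell+1}$ for $k,\ell\geq 1$, the set listed in the statement is exactly $\{\sigma^i\mid i\geq 0\}\cup\{\tau^j\mid j\geq 0\}$, and this set is contained in $\Stab{u}$.

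\emph{Step 3: the reverse inclusion, by induction on $|\rho(0)|$.} This is the crux. For the base case $|\rho(0)|=1$, the prefix condition forces $\rho(0)=0$; using that $u$ is aperiodic and uniformly recurrent (consequences of $\sigma$ being primitive with Perron eigenvalue $4$), a short analysis of $\rho(u)=0\rho(1)\rho(1)0\cdots$ yields $\rho(1)=1$ and hence $\rho=\Id=\sigma^0$. For the induction step, I rely on Moss\'e's recognizability theorem applied to the primitive aperiodic substitution $\sigma$: $u$ admits a unique decomposition into the $\sigma$-blocks $\sigma(0)=01$ and $\sigma(1)=100110$; since $\tau\in\Stab{u}$ is also primitive and aperiodic, $u$ simultaneously admits a unique decomposition into the $\tau$-blocks $\tau(0)=011001$ and $\tau(1)=10$. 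For $\rho\in\Stab{u}$ with $|\rho(0)|\geq 2$, the endpoint of $\rho(0)$ is the first boundary of the $\rho$-decomposition $\rho(u)=u$, and the plan is to show that this boundary must coincide with a $\sigma$-boundary or a $\tau$-boundary; in the first case $\rho$ factors as $\sigma\circ\rho'$ with $\rho'\in\Stab{u}$ and $|\rho'(0)|<|\rho(0)|$, in the second as $\tau\circ\rho'$. The induction hypothesis then gives $\rho'\in\{\sigma^i,\tau^j\}$, and the relations of Step 1 bring $\sigma\rho'$ or $\tau\rho'$ back into the same set.

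\emph{Main obstacle.} The hardest point is establishing the dichotomy in the induction step: that every $\rho\in\Stab{u}$ of length at least $2$ must factor through $\sigma$ or through $\tau$. The intuition is that the relations $\sigma\tau=\tau^2$ and $\tau\sigma=\sigma^2$ force $\sigma(01)=\tau(01)=01100110$ and $\sigma(10)=\tau(10)=10011001$, so the $\sigma$- and $\tau$-block decompositions of $u$ are intertwined and cover every occurrence of a letter in $u$ with blocks belonging to one of the two structures, leaving no room for a third recognizable structure. Turning this intuition into a proof requires a careful case analysis of how short prefixes of $\rho(0)$ sit inside the $\sigma$- and $\tau$-decompositions of $u$, combined with the recognizability of both substitutions and an application of left-cancellativity of $\sigma$ (and of $\tau$) on the language of $u$ in order to pass from $\rho(u)=\sigma(\rho'(u))=\sigma(u)$ to $\rho'(u)=u$.
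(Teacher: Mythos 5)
Your Steps 1 and 2 are correct and coincide with the paper's own algebraic skeleton: since $\sigma(01)=\tau(01)=01100110$ and $\sigma(10)=\tau(10)=10011001$, and every letter image under $\sigma$ or $\tau$ is a concatenation of the blocks $01$ and $10$, the relations $\tau\circ\sigma=\sigma^2$ and $\sigma\circ\tau=\tau^2$ follow, giving $\tau\in\Stab{u}$ and the collapse of the listed set to $\{\sigma^i\mid i\geq 0\}\cup\{\tau^j\mid j\geq 0\}$. Your overall architecture for the reverse inclusion --- a descent based on the claim that every non-trivial $\rho\in\Stab{u}$ factors as $\sigma\circ\rho'$ or $\tau\circ\rho'$ with $\rho'\in\Stab{u}$ --- is also exactly the paper's (its Lemma \ref{lem:exlast}), and the induction bookkeeping (injectivity of $\sigma$ and $\tau$ on infinite words via the prefix-code property, strict decrease of $|\rho'(0)|$) is sound.

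The genuine gap is Step 3: you never prove the factorization dichotomy, and it does not follow from Moss\'e recognizability as you suggest. Recognizability gives a unique $\sigma$-decomposition and a unique $\tau$-decomposition of $u$, but nothing a priori prevents the first cutting point of the $\rho$-decomposition (at position $|\rho(0)|$) from falling strictly inside a $\sigma$-block and strictly inside a $\tau$-block simultaneously; the whole difficulty is to exclude this. The paper closes it with a chain of concrete combinatorial facts about $u$ that your plan does not supply: $00$ and $11$ occur only at even indices, whence $|\rho(0)|$ and $|\rho(1)|$ are even; $u$ is a concatenation of the blocks $C=0110$, $D=1001$ governed by the substitution $C\mapsto CCDD$, $D\mapsto DDCC$; and, decisively, every square prefix $vv$ of $u$ satisfies $|vv|=2\cdot 4^i$. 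Applying this last fact to the square prefix $\rho(0110)\rho(0110)$ pins down $|\rho(01)|$, hence $\rho(01)$ itself, to be $\sigma^{n}(01)=\tau^{n}(01)$ for some $n$, and only then does a parity analysis of where the individual cut between $\rho(0)$ and $\rho(1)$ can sit inside this block yield the alternative $\rho=\sigma\circ\rho'$ or $\rho=\tau\circ\rho'$. Without an argument of this quantitative kind (or an equivalent one), the induction step is unsupported; your base case $|\rho(0)|=1$ is likewise incomplete, since aperiodicity and uniform recurrence alone do not rule out $\rho(0)=0$ with $|\rho(1)|>1$ --- one again needs the positional constraints on $00$, $11$ (or the square-prefix lemma) to force $\rho(1)=1$.
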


To prove Theorem  \ref{theo:exnr}, we  make use of several auxiliary
lemmas.

\begin{lemma}\label{lem:odd} In $u$, the factors $00$ and $11$ occur only at even indices.\end{lemma}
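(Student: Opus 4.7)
The plan is to exploit the natural factorization of $u$ induced by the relation $u=\sigma(u)$, together with the key arithmetic observation that both $|\sigma(0)|=2$ and $|\sigma(1)|=6$ are even.

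First, I would write $u = u_1 u_2 u_3 \cdots$ with $u_k \in \{0,1\}$, and use the identity $u=\sigma(u)=\sigma(u_1)\sigma(u_2)\sigma(u_3)\cdots$ to obtain a canonical decomposition of $u$ into blocks $\sigma(u_k)$. Letting $p_k = 1+\sum_{j<k}|\sigma(u_j)|$ denote the (1-indexed) starting position of the $k$-th block, a straightforward induction gives that $p_k$ is odd for every $k$: indeed $p_1=1$ is odd, and $p_{k+1}=p_k+|\sigma(u_k)|$ is the sum of an odd number and an even number.

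Next, I would enumerate where a factor $00$ or $11$ of length $2$ can occur relative to this decomposition: either it is contained entirely inside a single block $\sigma(u_k)$, or it straddles the boundary between $\sigma(u_k)$ and $\sigma(u_{k+1})$. In the first case, $\sigma(0)=01$ contains neither $00$ nor $11$, while inside $\sigma(1)=100110$ the factor $00$ occurs at relative position $2$ and $11$ at relative position $4$, so the absolute starting indices are $p_k+1$ and $p_k+3$, both even. In the second case, examining the four possible ordered pairs of consecutive blocks shows that only $\sigma(0)\sigma(1)$ produces a boundary $11$ and only $\sigma(1)\sigma(0)$ produces a boundary $00$; in either situation the straddling factor starts at index $p_k+|\sigma(u_k)|-1$, which equals $\textrm{odd}+\textrm{even}-1$ and is therefore even.

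Combining the two cases, every occurrence of $00$ or $11$ in $u$ starts at an even index, which is the claim. No step here is deep; the only thing to be careful about is to genuinely cover all positions relative to the block decomposition (that is, the four boundary cases and the two intra-block cases), so that no occurrence escapes the analysis.
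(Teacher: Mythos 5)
Your proof is correct. The paper states Lemma \ref{lem:odd} without any proof, so there is nothing to compare against; your argument — using $u=\sigma(u)$ to decompose $u$ into blocks $\sigma(u_k)$ of even length (so that every block starts at an odd index), then checking the intra-block occurrences of $00$ and $11$ in $\sigma(1)=100110$ and the four boundary cases — is the natural argument, and it covers all cases since a length-$2$ factor can straddle at most one block boundary.
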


\begin{lemma} Each factor $v$ of $u$ of length at least 5 contains a factor $aa$, where $a$ is a letter.\end{lemma}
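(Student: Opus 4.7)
My plan is to proceed by contrapositive. Any binary word of length at least $5$ that contains neither $00$ nor $11$ must be strictly alternating, and the only two strictly alternating binary words of length $5$ are $01010$ and $10101$. Since every factor of length at least $5$ contains a factor of length exactly $5$, it suffices to exclude $01010$ and $10101$ from the language of $u$.

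The central idea is to encode $u$ as a sequence of pairs. By Lemma~\ref{lem:odd}, the factor $u_{2k-1}u_{2k}$ starts at the odd index $2k-1$ and so cannot equal $00$ or $11$; hence $u_{2k-1}u_{2k}\in \{01,10\}$ for every $k\geq 1$. Writing $A:=01$ and $B:=10$, this yields a coding of $u$ as an infinite word $p_1 p_2 \cdots \in \{A,B\}^{\mathbb{N}}$. A short case analysis on the parity of the starting index of a purported occurrence of $01010$ or of $10101$ in $u$ shows that any such occurrence forces three consecutive equal pairs, i.e.\ $AAA$ or $BBB$, to appear in $(p_k)_{k\geq 1}$; it is therefore enough to rule out $AAA$ and $BBB$ from the pair sequence.

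Finally, since $\sigma(0)=01=A$ contributes one pair and $\sigma(1)=100110=10\cdot 01\cdot 10=BAB$ contributes three pairs, the fixed-point equation $u=\sigma(u)$ identifies the pair sequence with $\nu(u)$ for the morphism $\nu\colon 0\mapsto A,\ 1\mapsto BAB$. A direct inspection of $\nu$ on factors of length at most $3$ will then complete the proof: on the one hand, the two $B$'s inside $\nu(1)=BAB$ are separated by an $A$, and across two consecutive blocks the only $BB$ arises in $\nu(1)\nu(1)=BABBAB$, where it is immediately flanked by $A$'s, so $BBB$ never appears in $\nu(u)$; on the other hand, the same kind of inspection shows that $AAA$ can appear in $\nu(u)$ only as the image $\nu(000)$ of a factor $000$. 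A second application of Lemma~\ref{lem:odd} excludes $000$ from $u$ (such a factor would place $00$ at two consecutive starting indices, and one of two consecutive integers is necessarily odd). The only real difficulty is the bookkeeping in this last inspection of $\nu$; the structural step is the reduction to the pair sequence via Lemma~\ref{lem:odd}.
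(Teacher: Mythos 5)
Your proof is correct. Note that the paper actually states this lemma without any proof at all, so there is nothing to compare against; your argument fills that gap. The reduction to excluding $01010$ and $10101$, the recoding of $u$ into the pair sequence over $A=01$, $B=10$ via Lemma~\ref{lem:odd}, and the observation that the pair sequence is $\nu(u)$ for $\nu\colon 0\mapsto A,\ 1\mapsto BAB$ (valid because $|\sigma(0)|$ and $|\sigma(1)|$ are even, so each $\sigma$-block starts at an odd index) are all sound, and they mesh naturally with the machinery the paper sets up immediately afterwards (the word $u'$ over $\{A,B\}$ with $u=\psi(u')$). The parity case analysis showing that an occurrence of $01010$ or $10101$ forces $AAA$ or $BBB$ in the pair sequence is right in both cases. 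The only step you leave slightly compressed is the "same kind of inspection" for $AAA$: one should say explicitly that every $A$ in $\nu(u)$ is either a $\nu(0)$-block or the middle letter of a $\nu(1)=BAB$-block, and in the latter case it is adjacent to a $B$ on both sides, so three consecutive $A$'s must all be $\nu(0)$-blocks, i.e.\ come from a factor $000$ of $u$; this is routine and your appeal to Lemma~\ref{lem:odd} to exclude $000$ (two consecutive starting indices of $00$, one of them odd) finishes it.
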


\begin{lemma} \label{lem:even}For each $\varphi \in \Stab{u}$, the lengths  $|\varphi(0)|$, $|\varphi(1)|$ are even.\end{lemma}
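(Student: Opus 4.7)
The plan is to assume that $\varphi\in\Stab{u}$ is not the identity and that at least one of $S_0:=|\varphi(0)|$, $S_1:=|\varphi(1)|$ is odd, and to derive a contradiction using the two preceding lemmas. Since $\varphi(u)=u$, there is a canonical decomposition $u=\varphi(u_1)\varphi(u_2)\cdots$, where $u_i$ denotes the $i$-th letter of $u$. The starting position of the $i$-th block is $p_i=1+\sum_{j<i}|\varphi(u_j)|$, and its parity is given by $p_i\equiv 1+n_0(i-1)S_0+n_1(i-1)S_1\pmod 2$, where $n_a(k)$ counts the occurrences of $a$ in $u_1\cdots u_k$.

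The first step is to verify that, as soon as at least one of $S_0,S_1$ is odd, the parity set $\{p_i\bmod 2:u_i=a\}$ contains both $0$ and $1$ for each $a\in\{0,1\}$. This is a direct inspection along a short prefix of $u$: the indices $i$ with $u_i=0$ start $1,4,5,8,10,11,\dots$ and those with $u_i=1$ start $2,3,6,7,9,12,\dots$; substituting these values of $i$ into the formula for $p_i\bmod 2$ shows that both parities are attained on each of the two subsets in each of the three parity patterns $(S_0,S_1)\in\{(\mathrm{odd},\mathrm{odd}),(\mathrm{odd},\mathrm{even}),(\mathrm{even},\mathrm{odd})\}$.

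The second step exploits Lemma \ref{lem:odd}: every occurrence of $00$ or $11$ in $u$ lies at an even index. Hence if $\varphi(a)$ contained a factor $00$ or $11$ at an internal index $j$, then $p_i+j-1$ would have to be even for every $p_i$ with $u_i=a$, forcing all such $p_i$ to share a single parity and contradicting the first step. So neither $\varphi(0)$ nor $\varphi(1)$ contains $00$ or $11$, and the preceding lemma that every factor of $u$ of length at least $5$ contains some $aa$ then forces $S_0,S_1\leq 4$.

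It remains to run the finite case analysis. Since $\varphi(0)$ is forced to equal the prefix of $u$ of length $S_0$, and $u[1,4]=0110$, the only $S_0\leq 4$ yielding a $\varphi(0)$ without $aa$ are $S_0=1$ (giving $\varphi(0)=0$) and $S_0=2$ (giving $\varphi(0)=01$). Similarly $\varphi(1)=u[S_0+1,S_0+S_1]$ is forced. Enumerating the admissible pairs $(S_0,S_1)$ in which at least one of $S_0,S_1$ is odd, the only pairs surviving the no-$aa$ condition are $(S_0,S_1)=(1,1)$, which gives $\varphi=\Id$ and is excluded, and $(S_0,S_1)=(2,1)$ with $\varphi(0)=01,\varphi(1)=1$, for which $\varphi(u)$ starts with $011101$ while $u$ starts with $011001$, again a contradiction. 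Hence both $S_0$ and $S_1$ must be even. The main obstacle I anticipate is the first step, in which both parities must be witnessed explicitly on the two index subsets; once this is in place, the rest of the argument is mechanical.
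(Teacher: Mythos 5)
Your proof is correct, and it rests on the same two ingredients as the paper's argument (Lemma \ref{lem:odd} and the unnumbered lemma that every factor of $u$ of length at least $5$ contains a square $aa$), but it is organized rather differently. The paper proceeds sequentially: it first rules out $|\varphi(0)|$ odd by exhibiting an explicit occurrence of $11$ at an odd index inside the fourth block of $u=\varphi(0)\varphi(1)\varphi(1)\varphi(0)\cdots$, with a separate sub-argument via the forbidden factor $111$ in the degenerate case $\varphi(0)=0$, and only then rules out $|\varphi(1)|$ odd by cases on $|\varphi(1)|\in\{1,3\}\cup[5,\infty)$. You instead prove one uniform statement --- when some image has odd length, the starting positions of the $\varphi(0)$-blocks, and likewise of the $\varphi(1)$-blocks, realize both parities --- from which ``no internal $00$ or $11$ in either image'' follows at once, the bound $S_0,S_1\le 4$ follows from the length-$5$ lemma, and what remains is a short finite check. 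This buys a cleaner, more mechanical proof that avoids the ad hoc $111$ argument; the price is that your parity claim must be verified in all three parity patterns of $(S_0,S_1)$ and genuinely needs a prefix of $u$ of length about $9$ (for $(S_0,S_1)=(\mathrm{odd},\mathrm{even})$ the first $1$-block starting at an odd position is the ninth block), whereas the paper only ever looks at the prefix $0110$; I checked the three patterns against $u=011001101001\cdots$ and they hold. One remark: as literally stated the lemma fails for $\varphi=\Id\in\Stab{u}$, whose images have length $1$, so your explicit exclusion of the identity is the correct reading (the paper's proof tacitly assumes the same when it asserts that $\varphi(1)$ must begin with $11$).
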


\begin{proof} Suppose  first  that  $|\varphi(0)|$ is odd, i.e.,  $|\varphi(0)|=2k+1$
for some integer $k$. Then
$$u=\varphi(0)\varphi(1)\varphi(1)\varphi(0)\cdots.$$

Let  $|\varphi(1)|=m$.
 If $|\varphi(0)|\geq 3$, then $\varphi(0)$
starts with $011$ and we find an occurrence of $11$ at position
$2k+1+2m+2$, which is odd; a contradiction with Lemma
\ref{lem:odd}. If $|\varphi(0)|< 3$, then $\varphi(0)=0$. Hence
$\varphi(1)$ must begin with $11$. We have
$$\varphi(0110)=0\varphi(1)\varphi(1)0,$$
which means that $\varphi(1)$ ends with $1$; otherwise we have an
occurrence of $00$ at index $2m+1$. But in this case we have an
occurrence of $111$ at index $m+1$,  which is not possible  since
$111$  cannot be  a factor of  $u$. We thus have proved that
 $|\varphi(0)|$ is even; let $p$ be such that  $|\varphi(0)|=2p$.

Now suppose $|\varphi(1)|$ is odd, i.e., $|\varphi(1)|=2\ell+1$
for some integer $\ell$.

\begin{itemize}
\item
If $|\varphi(1)|\geq 5$, then consider the prefix
$\varphi(0)\varphi(1)\varphi(1)$ of $u$. Since  $\varphi(1)$
contains a factor $aa$ and it is of odd length, it follows that
the prefix contains $aa$ at an odd position; a contradiction.
\item
If $|\varphi(1)|=3$, i.e., $\varphi(1) = abc$ for some $a,b,c
\in\{0,1\}$, then consider the prefix $\varphi(0)abcabc$ of $u$.
Considering factors at odd positions $2p+1$, $2p+3$ and $2p+5$, we
get that $a\neq b$, $c\neq a$, $b\neq c$, which is not possible
over a two-letter alphabet.

\item
If $|\varphi(1)|=1$, i.e., $\varphi(1) = a$ for some $a
\in\{0,1\}$, then consider the prefix $\varphi(0)aa$ of $u$. The
factor $aa$ of $u$ is at odd position $2p+1$, which is not
possible by Lemma \ref{lem:odd}. \qedhere
\end{itemize}\end{proof}

%Let $A=01$, and $B=10$.
 Let  $u'$
 be the word over $\{A,B\}$  defined  as $u=\psi(u')$, where
$$\psi: A \mapsto 01, B \mapsto 10.$$
(Since this  does not cause  any confusion, we   use the same
notation for the  letters $A,B$  and  for their   ``value'' in
$\{0,1\}$, i.e., $A=01, B=10$. The same holds for the letters $C$
and $D$ defined below.)
%SP: I think it is better to keep this part, otherwise not clear.
Let $C=0110$, $D=1001$. It is not hard to see that $u$ is a
concatenation of   blocks  $C$ and $D$. Indeed, by Lemma
\ref{lem:odd}, $u$ is a concatenation  of blocks $01$ and $10$,
and since $\sigma(u)=u$, we have that $\sigma(u)$ is a
concatenation  of blocks   $\sigma(0)\sigma(1)=CC=01100110$ and
$\sigma(1)\sigma(0)=DD=10011001$.   This also implies that $u'$
consists of blocks $ABAB$ and $BABA$. We  now define the following
recurrence relations:
$$\begin{cases} T_0=C \\
T_n=T_{n-1}T_{n-1}\overline{T_{n-1}}\overline{T_{n-1}},\end{cases}$$
where $\overline{C}=D$, $\overline{D}=C$.  We recall that  the operation $  v \mapsto \overline{v}$ means that we replace  $ 0$ by $1$   and $1$ by $0$.
We then define the word
$u''$ over $\{C,D\}$  as $u''=\lim_{n\to\infty}T_n$.

\begin{lemma} \label{lem:varrho}
One has $u=\varrho(u'')$ where $\varrho$ is the morphism from $\{C,D\}^*$ to $\{0,1\}^*$ defined  by  $\varrho: C\mapsto 0110, D \mapsto 1001$.\end{lemma}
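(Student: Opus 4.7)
The plan is to introduce the auxiliary morphism $\tilde{\sigma}\colon\{C,D\}^*\to\{C,D\}^*$ defined by $\tilde{\sigma}(C)=CCDD$ and $\tilde{\sigma}(D)=DDCC$, to identify $u''$ as its infinite fixed point $\tilde{\sigma}^{\infty}(C)$, and then to establish the morphism identity
\[
\sigma\circ\varrho\;=\;\varrho\circ\tilde{\sigma}
\]
from $\{C,D\}^*$ to $\{0,1\}^*$. Granting both, $\varrho(u'')$ is a fixed point of $\sigma$ starting with $0$, which by uniqueness of such fixed points must equal $u=\sigma^{\infty}(0)$.

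First I would check the commutation $\sigma\circ\varrho=\varrho\circ\tilde{\sigma}$ by evaluating each side on the two letters $C$ and $D$. Using $\sigma(0)=01$ and $\sigma(1)=100110$, a direct computation gives
\[
\sigma(\varrho(C))=\sigma(0110)=01\cdot100110\cdot100110\cdot01,
\]
and regrouping this length-$16$ word into blocks of length four yields $0110\cdot0110\cdot1001\cdot1001=\varrho(CCDD)=\varrho(\tilde{\sigma}(C))$. The symmetric computation for $D$ gives $\sigma(\varrho(D))=\sigma(1001)=1001\cdot1001\cdot0110\cdot0110=\varrho(\tilde{\sigma}(D))$. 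This is the routine bookkeeping that underlies the whole argument.

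Next I would show that $u''=\tilde{\sigma}^{\infty}(C)$ by proving $T_n=\tilde{\sigma}^n(C)$ by induction on $n$. The key observation is that $\tilde{\sigma}$ commutes with the bar operation on $\{C,D\}^*$: from $\overline{\tilde{\sigma}(C)}=\overline{CCDD}=DDCC=\tilde{\sigma}(D)=\tilde{\sigma}(\overline{C})$ and the analogous identity for $D$, extension by morphism yields $\overline{\tilde{\sigma}(w)}=\tilde{\sigma}(\overline{w})$ for every $w\in\{C,D\}^*$. The base case $T_0=C=\tilde{\sigma}^0(C)$ is immediate. For the inductive step,
\[
\tilde{\sigma}^n(C)=\tilde{\sigma}^{n-1}(CCDD)=T_{n-1}\,T_{n-1}\,\tilde{\sigma}^{n-1}(D)\,\tilde{\sigma}^{n-1}(D)=T_{n-1}\,T_{n-1}\,\overline{T_{n-1}}\,\overline{T_{n-1}}=T_n,
\]
where the second equality uses the inductive hypothesis and the third uses the bar commutation applied to $\tilde{\sigma}^{n-1}(D)=\tilde{\sigma}^{n-1}(\overline{C})=\overline{T_{n-1}}$. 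Since $T_{n-1}$ is a prefix of $T_n$, passing to the limit gives $u''=\tilde{\sigma}^{\infty}(C)$.

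Putting the pieces together, applying the commutation to $u''$ yields $\sigma(\varrho(u''))=\varrho(\tilde{\sigma}(u''))=\varrho(u'')$, so $\varrho(u'')$ is fixed by $\sigma$; since it begins with $\varrho(C)=0110$, hence with the letter $0$, the uniqueness of the iterative fixed point of $\sigma$ starting with $0$ forces $\varrho(u'')=u$. There is no real obstacle beyond the concrete factorisations $\sigma(0110)=CCDD$ and $\sigma(1001)=DDCC$ on the $\{0,1\}$-side; once these are verified, the rest is a formal manipulation of morphisms on the two-letter alphabet $\{C,D\}$.
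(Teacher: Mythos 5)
Your proof is correct, and it is organized differently from the paper's. The paper proves the lemma by a direct induction on prefixes, showing $\sigma^{n+1}(01)=\varrho(T_nT_n)$ and $\sigma^{n+1}(10)=\varrho(\overline{T_n}\,\overline{T_n})$ via the recursion $T_{n+1}=T_nT_n\overline{T_n}\,\overline{T_n}$, and only afterwards records (as Claim~\ref{claim:CD}) that $u''=\mu^{\infty}(C)$ for $\mu\colon C\mapsto CCDD,\ D\mapsto DDCC$ --- which is exactly your $\tilde\sigma$. You instead front-load that identification (your induction $T_n=\tilde\sigma^n(C)$ is essentially the paper's proof of Claim~\ref{claim:CD}), verify the intertwining relation $\sigma\circ\varrho=\varrho\circ\tilde\sigma$ on the two letters $C,D$, and then conclude by uniqueness of the fixed point of $\sigma$ beginning with $0$. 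The computations underlying both arguments are the same regroupings of $\{0,1\}$-words into blocks of length $4$ (your $\sigma(0110)=\varrho(CCDD)$, $\sigma(1001)=\varrho(DDCC)$ versus the paper's $\sigma(01)=\varrho(CC)$, $\sigma(10)=\varrho(DD)$ plus the $T_n$-recursion), but your version is more structural: the commutation identity isolates the one finite verification needed, and the rest follows formally from fixed-point uniqueness rather than from comparing arbitrarily long prefixes. It also makes the logical dependence on Claim~\ref{claim:CD} explicit rather than leaving it as a separate statement proved later.
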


\begin{proof} The proof works by induction.
One has
\begin{align*}
\sigma(01)&=\varrho(T_0T_0),\\
 \sigma(10)&=\varrho(\overline{T_0}\overline{T_0}),\\
\sigma^{n+2}(01)&=\sigma^{n+1}(01100110)\\
&=\sigma^{n+1}(01)\sigma^{n+1}(10)\sigma^{n+1}(01)\sigma^{n+1}(10)\\
&=\varrho(T_nT_n)\varrho(\overline{T_n}\overline{T_n})\varrho(T_nT_n)\varrho(\overline{T_n}\overline{T_n})\\
&=\varrho(T_{n+1}T_{n+1}).
\end{align*}

The same  holds similarly for $\sigma(10)$. So, we proved that
arbitrarily long prefixes of $u$ and $\varrho(u'')$ coincide,
which completes the proof.\end{proof}

\begin{claim} \label{claim:CD} One has  $u''=\mu^{\infty}(C)$, where  $\mu$  is the substitution over $\{C,D\}$  defined as $\mu: C\mapsto CCDD,
D\mapsto DDCC$.
\end{claim}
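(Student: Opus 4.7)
The plan is to prove $u'' = \mu^\infty(C)$ by showing that $\mu^n(C) = T_n$ for every $n \geq 0$, and then taking limits. Since $u'' = \lim_{n \to \infty} T_n$ by definition, and $\mu^\infty(C) = \lim_{n \to \infty} \mu^n(C)$ (once we check $\mu(C)$ begins with $C$, which is clear as $\mu(C) = CCDD$), this suffices.

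The crucial observation is that the substitution $\mu$ commutes with the bar operation: indeed $\mu(\overline{C}) = \mu(D) = DDCC = \overline{CCDD} = \overline{\mu(C)}$ and likewise $\mu(\overline{D}) = \overline{\mu(D)}$, so by concatenation $\mu(\overline{w}) = \overline{\mu(w)}$ for every word $w \in \{C,D\}^*$. This compatibility is what makes the recursive shape of $T_n$ align with the iteration of $\mu$.

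I would then prove by induction on $n$ that $\mu(T_n) = T_{n+1}$. The base case is immediate: $\mu(T_0) = \mu(C) = CCDD = T_0 T_0 \overline{T_0}\,\overline{T_0} = T_1$. For the inductive step, assuming $\mu(T_{n-1}) = T_n$, one computes
\[
\mu(T_n) = \mu\bigl(T_{n-1} T_{n-1} \overline{T_{n-1}}\,\overline{T_{n-1}}\bigr) = \mu(T_{n-1})\,\mu(T_{n-1})\,\overline{\mu(T_{n-1})}\,\overline{\mu(T_{n-1})} = T_n T_n \overline{T_n}\,\overline{T_n} = T_{n+1},
\]
where the second equality uses the bar-commutation property and the third uses the inductive hypothesis. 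A straightforward second induction then gives $\mu^n(C) = T_n$ for all $n$, since $\mu^0(C) = C = T_0$ and $\mu^{n+1}(C) = \mu(\mu^n(C)) = \mu(T_n) = T_{n+1}$.

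There is no real obstacle here, the claim is essentially a book-keeping verification; the only point that needs mild care is the bar-commutation identity $\mu \circ \overline{\,\cdot\,} = \overline{\,\cdot\,} \circ \mu$, which is the algebraic reason why the self-similar structure $T_n = T_{n-1} T_{n-1} \overline{T_{n-1}}\,\overline{T_{n-1}}$ coincides with iteration of $\mu$.
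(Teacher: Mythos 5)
Your proof is correct and follows essentially the same route as the paper: an induction showing $\mu^n(C)=T_n$, exploiting the symmetry of $\mu$ under the bar operation. The only cosmetic difference is that you isolate the identity $\mu(\overline{w})=\overline{\mu(w)}$ as a separate observation, whereas the paper carries the equivalent fact $\overline{T_n}=\mu^n(D)$ as a second clause of the induction hypothesis.
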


\begin{proof} We  prove by induction on $n$ that
$T_n=\mu^n(C)$, $\overline{T_n}=\mu^n(D)$.
Indeed, $T_1=CCDD=\mu(C)$, $\overline{T_1}=DDCC=\mu(D)$.
Now
$$T_{n+1}=T_nT_n\overline{T_n}\overline{T_n}=\mu^n(C)\mu^n(C)\mu^n(D)\mu^n(D)=\mu^n(CCDD)=\mu^{n+1}(C).$$

The case of  $T_{n+1}$ is handled similarly. \end{proof}

We use the following notation  when ``desubstituting''. For $w$  word over $\{C,D\}$,  $\varrho^{-1}(w)$  stands for  the  word   over $\{0,1\}$
that satisfies  $w= \varrho (\varrho^{-1}(w))$. The same holds for  $ \psi^{-1}$, by noticing that the  desubstitution is unique in both cases.

\begin{lemma}If  $u$  has a  prefix of the form $vv$, then  there exists $i$ such that $|vv|=2\cdot 4^i$.\end{lemma}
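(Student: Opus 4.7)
The plan is to prove by strong induction on $|v|$ that $|v|$ is a power of $4$ (equivalently $|vv|=2\cdot 4^i$), exploiting the hierarchical decomposition $u=\varrho(u'')$ from Lemma~\ref{lem:varrho} together with the self-similarity $u''=\mu(u'')$ from Claim~\ref{claim:CD}. Both $\varrho$ and $\mu$ are length-$4$ uniform substitutions whose images $\{\varrho(C),\varrho(D)\}=\{0110,1001\}$ (resp.\ $\{\mu(C),\mu(D)\}=\{CCDD,DDCC\}$) are distinct, so $u$ (resp.\ $u''$) admits a unique parsing into length-$4$ blocks. The base case $|v|=1=4^0$ is vacuous since $u$ begins with $01$ and hence has no single-letter square prefix.

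The technical core consists of two parallel divisibility claims: \textbf{(A)} if $vv$ is a prefix of $u$ and $|v|\geq 2$, then $4\mid |v|$; and \textbf{(B)} if $ww$ is a prefix of $u''$ and $|w|\geq 2$, then $4\mid |w|$. In each, I would perform a case analysis on the residue of the length modulo~$4$, reading off the forced ``block type'' of each length-$4$ window of the second copy from the known letters of $u$ (resp.\ $u''$). Residues~$1$ and~$3$ die within the first two or three comparisons: the required values at the initial shifted positions already force incompatible types on a single would-be block. For instance, in~(A) with $|v|\equiv 3\pmod 4$, the equality $u_{|v|+1}=u_1=0$ forces the block ending at position $|v|+1$ to be $\varrho(C)=0110$, while $u_{|v|+2}=1$ and $u_{|v|+3}=1$ force the next block to be $\varrho(D)$ with its position-$2$ letter equal to~$1$, which is impossible.

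The delicate case is residue~$2\pmod 4$. Here the block-type constraints remain internally consistent for several shifts and effectively require $u''$ to contain a run of $D$'s starting at position $\lceil(|v|+1)/4\rceil$ that is long enough to cover the second copy of $v$. Since $u''$ contains $C$'s at every scale of the $\mu$-hierarchy, this run cannot persist, producing the contradiction (typically by shift $7$ or~$9$). The small cases $|v|\in\{2,6\}$ (and $|w|\in\{2,6\}$) not reached by this general argument are settled by direct inspection, using $u_3u_4=10\neq 01=u_1u_2$ and $u_7u_8=10\neq 01$ respectively; the proof of~(B) follows the same template with $X=CCDD$, $Y=DDCC$ playing the roles of $\varrho(C),\varrho(D)$.

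With (A) and (B) in hand, the induction closes quickly. For $|v|\geq 2$, apply~(A) to write $|v|=4k$ and use the unique $\varrho$-parsing to factor $v=\varrho(v')$ with $v'=u''_1\cdots u''_k$ a prefix of $u''$; then $v'v'$ is a prefix of $u''$. If $|v'|=1$ we stop with $|v|=4=4^1$; otherwise~(B) yields $|v'|=4k'$ and the self-similarity $u''=\mu(u'')$ gives $v'=\mu(v'')$ with $v''v''$ a prefix of $u''$ of length $k'<|v'|$, and we iterate. Because the length drops strictly at each step, after finitely many iterations it reaches~$1$, yielding $|v|=4^i$ as required. The main obstacle is the residue-$2$ analysis in (A) and (B), where the constraints propagate consistently for several steps before forcing the impossible long $D$-run in~$u''$; once this is carefully laid out, all other sub-cases resolve within a handful of letter comparisons.
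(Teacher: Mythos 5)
Your proposal is correct and follows essentially the same route as the paper: both exploit the length-$4$ hierarchy $u=\varrho(u'')$, $u''=\mu(u'')$ to desubstitute square prefixes, and both rule out the lengths not divisible by $4$ by a residue-class analysis over the aligned blocks $\{0110,1001\}$ (resp.\ $\{CCDD,DDCC\}$). The only differences are organizational (you package the two divisibility statements as explicit claims and run a descending induction, whereas the paper desubstitutes first and treats the non-divisible residues at the bottom level, invoking Lemma~\ref{lem:odd} and the positions of $BB$ and $D^4$), and these do not change the substance of the argument.
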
 \label{lem:square}

\begin{proof} Suppose first that $|v|$ is divisible by $4$. Then the prefix  $vv$ corresponds
to a square $v'v'=\mu^{-1}(v)\mu^{-1}(v)$ in $u''$. If $|v'|$ is
 divisible by 4, we  continue to desubstitute and find a smaller square
 $\mu^{-1}(v')\mu^{-1}(v')$; we continue desubstituting
 until we find a square $v_0v_0$ of length not divisible by 4. If
 $|v_0|=1$, the lemma is proved; otherwise consider the following
 cases on the length of $v_0$ $\mod 4$, for which we  reach a contradiction. This will prove  that $|v_0|=1$.

\begin{itemize}
\item Assume that  $|v_0|=4k+1$ ($k>0$) or  $|v_0|=4k+3$. If $v_0$ ends with
$C$, then it ends with either $DCCC$ or with $DC$ (from the
structure of the word given by Claim \ref{claim:CD}). The second copy
of $v_0$ starts with $CCD$, therefore we have an occurrence of
$DC^3D$ or $DC^5D$ in $u''$, which is not possible. If $v_0$ ends
with $D$, the proof is symmetric.

\item Assume that  $|v_0|=4k+2$. For $k=0$ and $k=1$ we clearly do not have
squares. For $k>1$, the first occurrence of $v_0$ contains $D^4$,
and in $u''$ the factor $D^4$ can occur only at positions of the
form $8m-1$. Then in the second copy of $v_0$ the copy of the
factor $D^4$ occurs at positions  of the form  $8\ell+1$ or
$8\ell+5$, which is not possible in $u''$.
\end{itemize}

Now consider the case where $v$ is not divisible by $4$.
\begin{itemize}
\item
 If  $|v|=4k+1$ ($k>0$) or   $|v_0|=4k+3$, then $v$ contains an
 occurrence of $11$ or $00$;  hence its second copy occurs at an odd
 position, which is not possible by Lemma \ref{lem:odd}.
\item
 If  $|v|=4k+2$, then  the contradiction comes  from considering an    occurrence of $BB$ in
 $u'$: it can only occur at an even position, and the second copy of $v$  provides an  occurrence at an odd position. \qedhere
 \end{itemize}
\end{proof}

\begin{corollary}\label{cor:varphi01}  Let  $\varphi\in \Stab{u}$. We have $\varphi(01)=\varrho(T_nT_n)$ for some $n$, and $\varphi(10)=\varrho(\overline{T_n}\overline{T_n})$.\end{corollary}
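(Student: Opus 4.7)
The plan is to exploit the fact that $u$ begins with the double block $0110\cdot 0110$, apply $\varphi$ to this prefix, and combine Lemma \ref{lem:square} with the $\varrho$-factorisation $u=\varrho(u'')$ and the recursive structure of $T_n$ in order to identify $\varphi(0110)$.

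First, since $u=\sigma^{\infty}(0)$ and $\sigma^{2}(0)=01100110$, the word $u$ begins with $0110\cdot 0110$. Applying $\varphi\in\Stab{u}$ shows that $\varphi(0110)\varphi(0110)$ is a prefix of $u$, i.e., a prefix square $vv$ with $v=\varphi(0110)=\varphi(01)\varphi(10)$. By Lemma \ref{lem:square}, $|vv|=2\cdot 4^{i}$ for some $i$, so $|v|=4^{i}$. Because $\varphi$ is non-erasing and Lemma \ref{lem:even} forces $|\varphi(0)|,|\varphi(1)|\geq 2$ (for $\varphi\neq \Id$), we get $|v|\geq 8$, so $i\geq 2$; setting $n=i-2\geq 0$ yields $|v|=4^{n+2}$.

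Next, I would invoke Lemma \ref{lem:varrho}: $u=\varrho(u'')$ with $\varrho$ sending every letter to a word of length $4$. Since $4\mid |v|$ and $v$ is a prefix of $u$, there is a unique prefix $W$ of $u''$ with $\varrho(W)=v$, and $|W|=4^{n+1}$. By Claim \ref{claim:CD}, $T_{n+1}=\mu^{n+1}(C)$ is a prefix of $u''=\mu^{\infty}(C)$ of length $4^{n+1}$, so $W=T_{n+1}$. Unfolding the recurrence $T_{n+1}=T_{n}T_{n}\overline{T_{n}}\,\overline{T_{n}}$ then gives
$$\varphi(0110)=\varrho(T_{n}T_{n})\,\varrho(\overline{T_{n}}\,\overline{T_{n}}).$$
Finally, since $\varphi(0110)=\varphi(01)\varphi(10)$ with $|\varphi(01)|=|\varphi(10)|=|v|/2=|\varrho(T_{n}T_{n})|$, the two halves line up, yielding $\varphi(01)=\varrho(T_{n}T_{n})$ and $\varphi(10)=\varrho(\overline{T_{n}}\,\overline{T_{n}})$.

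The main obstacle is really the length identification at the outset: once Lemma \ref{lem:square} produces the strong constraint $|v|=4^{i}$, everything else reduces to matching prefixes of $u$ and $u''$ along the $\varrho$-block decomposition, for which the recurrence $T_{n+1}=T_{n}T_{n}\overline{T_{n}}\,\overline{T_{n}}$ is tailor-made. The identity $\varphi=\Id$, to which Lemma \ref{lem:even} does not apply, is trivial and is the only case needing separate mention.
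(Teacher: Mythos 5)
Your proof is correct and follows essentially the same route as the paper: apply $\varphi$ to the square prefix $01100110$, use Lemma \ref{lem:square} to pin the length down to $2\cdot 4^i$, and then identify the resulting prefix with $\varrho(T_{n}T_{n}\overline{T_{n}}\,\overline{T_{n}})$ via the block structure of $u''$. You are merely more explicit than the paper about the $\varrho$-defactorisation step and about splitting off $\varphi(10)$, and your remark that $\varphi=\Id$ must be excluded is a legitimate (if minor) point the paper glosses over.
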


\begin{proof} The prefix $01100110$ is a square prefix of $u$, so
$\varphi(01100110)$ is also  a square prefix of $u$,   and  its
length   is thus of the form $2\cdot 4^i$ by Lemma
\ref{lem:square}. This means that $\varphi(01)$ has length $2\cdot
4^{i-1}$, which in turn means exactly $\varphi(01)=\varrho(T_nT_n)$
for $n=i-2$.
\end{proof}
\begin{lemma}\label{lem:exlast}
For each $\varphi\in \Stab{u}$, there exist $\varphi'$ and $\varphi ''$  such that  one has either  $\varphi=\sigma \circ \varphi'$, or $\varphi=\tau \circ \varphi''$. \end{lemma}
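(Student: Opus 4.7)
The plan is to use Corollary~\ref{cor:varphi01} to derive a word equation relating $\varphi(0)$ and $\varphi(1)$, and then to analyse it via the recursive structure of $T_n$. Since $\varrho(D) = \overline{\varrho(C)}$, the morphism $\varrho$ commutes with bit complementation, so Corollary~\ref{cor:varphi01} yields $\varphi(10) = \varrho(\overline{T_n}\,\overline{T_n}) = \overline{\varrho(T_n T_n)} = \overline{\varphi(01)}$. Writing $X = \varphi(0)$, $Y = \varphi(1)$, $L = |X|$, the identities $XY = \varphi(01)$ and $YX = \varphi(10)$ give the word equation $YX = \overline{X}\,\overline{Y}$; equivalently, the cyclic shift of $A := \varphi(01)$ by $L$ characters equals $\overline{A}$.

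I would then split on $L \bmod 4$, using that $L$ is even by Lemma~\ref{lem:even}. If $L \equiv 2 \pmod 4$, the shift straddles every $\varrho$-block, and matching the resulting blocks against those of $\overline{A}$ forces consecutive letters of $T_n T_n$ to agree; iterating, $T_n T_n$ must be constant over $\{C,D\}$, which occurs only for $n = 0$ and recovers $L \in \{2, 6\}$ with $\varphi \in \{\sigma, \tau\}$. If $L \equiv 0 \pmod 4$, injectivity of $\varrho$ transfers the equation to the $\{C,D\}$-level: the cyclic shift of $T_n T_n$ by $L/4$ positions must equal $\overline{T_n T_n}$. Using $T_n = T_{n-1}T_{n-1}\overline{T_{n-1}}\,\overline{T_{n-1}}$, write $T_n T_n$ in blocks of size $4^{n-1}$ as $X'X'Y'Y'X'X'Y'Y'$ with $X' = T_{n-1}$, $Y' = \overline{T_{n-1}}$; then $\overline{T_n T_n}$ reads $Y'Y'X'X'Y'Y'X'X'$, and the only block-aligned shifts carrying one to the other are by $2$ and $6$ blocks. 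A non-block-aligned shift would force either a cyclic shift of $T_{n-1}$ strictly inside $(0, 4^{n-1})$ to equal $\overline{T_{n-1}}$, or a non-trivial border of $T_{n-1}$; both fail by a parallel induction on the level $m$, whose content is that for $m \geq 1$ the only cyclic shift of $T_m$ producing $\overline{T_m}$ is by $2\cdot 4^{m-1}$, and that even this internal shift fails to extend consistently at the next recursive level (one obtains $Y''Y''Y''Y''$ instead of the required $Y''Y''X''X''$).

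Consequently $L \in \{2\cdot 4^n, 6\cdot 4^n\} = \{|\sigma^{n+1}(0)|, |\tau^{n+1}(0)|\}$, and the identity $\varphi(0)\varphi(1) = \sigma^{n+1}(01)$ forces $(\varphi(0), \varphi(1)) = (\sigma^{n+1}(0), \sigma^{n+1}(1))$ in the first case and $(\tau^{n+1}(0), \tau^{n+1}(1))$ in the second. Setting $\varphi' := \sigma^n$ or $\varphi'' := \tau^n$ yields $\varphi = \sigma \circ \varphi'$ or $\varphi = \tau \circ \varphi''$, as required. The main obstacle will be the nested induction at the $\{C,D\}$-level: odd shifts are excluded immediately by parity, but ruling out even misaligned shifts requires carefully tracking the recursive block structure of $T_n$ and the absence of non-trivial borders of $T_{n-1}$.
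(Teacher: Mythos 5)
Your route is genuinely different from the paper's. The paper works in the coded word $u'$ over $\{A,B\}$, uses the factorization of $u'$ into blocks $A$ and $BAB$, and rules out a misaligned $\varphi$-factorization by tracking the positions of $BB$ modulo $4$ together with Corollary~\ref{cor:varphi01}. You instead turn Corollary~\ref{cor:varphi01} into the cyclic word equation $YX=\overline{X}\,\overline{Y}$, i.e.\ ``rotating $\varphi(01)=\varrho(T_nT_n)$ by $L=|\varphi(0)|$ yields its complement,'' and solve it against the recursive structure of $T_n$. The skeleton is sound: the reduction $\varphi(10)=\overline{\varphi(01)}$ is correct, the case $L\equiv 2\pmod 4$ correctly forces $T_nT_n$ constant hence $n=0$ and $L\in\{2,6\}$, the transfer of the equation to the $\{C,D\}$ level for $4\mid L$ is legitimate by injectivity of $\varrho$ on letters, and the endgame (two admissible lengths $2\cdot 4^n$, $6\cdot 4^n$ matching $|\sigma^{n+1}(0)|$, $|\tau^{n+1}(0)|$, both images being prefixes of $u$) is fine and in fact yields the stronger conclusion $\varphi\in\{\sigma^{n+1},\tau^{n+1}\}$.

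The gap is in the exclusion of non-block-aligned shifts at the $\{C,D\}$ level. Your stated dichotomy --- such a shift forces either a cyclic rotation of $T_{n-1}$ onto $\overline{T_{n-1}}$ by some $r\in(0,4^{n-1})$, or a non-trivial border of $T_{n-1}$, ``both of which fail'' --- does not close. First, $T_{n-1}$ \emph{does} have non-trivial borders for $n-1$ even: $T_2=CCDDCCDD\,DDCCDDCC$ begins and ends with $CC$. Second, a misaligned shift whose block-offset is $\equiv 2\pmod 4$ forces a non-trivial rotation of $T_{n-1}$ onto \emph{itself} (not onto its complement), so you also need that $T_{n-1}$ is primitive (not fixed by any proper rotation), a fact your induction does not supply. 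What actually kills each misaligned shift is the \emph{conjunction} of the eight matching conditions (one per block of $X'X'Y'Y'X'X'Y'Y'$): for every residue of the block-offset some member of that conjunction fails, but identifying which one requires a case analysis you have not done. The cleanest repair is to abandon the decomposition into blocks of size $4^{n-1}$ and instead recurse at the scale of the $\mu$-blocks $CCDD$, $DDCC$ of size $4$: since these blocks have the shape $yy\bar y\bar y$, the same residue-mod-$4$ computation you already performed at the $\varrho$-level shows that an odd shift is impossible, a shift $\equiv 2\pmod 4$ forces $T_{n-j-1}T_{n-j-1}$ constant (hence bottoms out exactly at $\ell\in\{2\cdot 4^{n-1},6\cdot 4^{n-1}\}$), and a shift $\equiv 0\pmod 4$ descends to a shift of $T_{n-j-1}T_{n-j-1}$ by $\ell/4$; this uniform recursion needs no border or primitivity facts at all. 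With that replacement your argument is complete and, I would say, tidier than the paper's.
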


\begin{proof} One  has $\varphi(u)=u$.  Suppose that $|\varphi(0)|<|\varphi(1)|$  (the other  case will be handled  analogously $ |\varphi(0)| \geq |\varphi(1)|$).
The word $u'$ (with $u =\psi (u'))$) consists of blocks $ABAB$ and
$BABA$. Since $|\varphi(0)|$ and $|\varphi(1)|$ are even (by Lemma
\ref{lem:even}), we have that $\psi^{-1}(\varphi(0))$ consists of
$A$'s and $B$'s, as well as $\psi^{-1}(\varphi(1))$. Since each
block $ABAB$ and $BABA$ is a concatenation of $A$ and $BAB$, we
get a natural factorization in these blocks. Now if this
factorization is a refining of the one with
$\psi^{-1}(\varphi(0))$ and $\psi^{-1}(\varphi(1))$, then the
lemma is proved.

If not, then we have some position of factorization of the form
$4k+2$, 3 or 4, and in addition $\psi^{-1}(\varphi(1))$ contains
$BB$, since the length of $\varphi(1)$ is bigger than the length
of $\varphi(0)$ (all the short substitutions can be checked with
an exhaustive search).

Now consider the prefix $\psi^{-1}(\varphi(0)\varphi(1))$ of $u'$,
whose  length  is  of the form $2\cdot 4^n$ by Corollary
\ref{cor:varphi01}. If the length of $\psi^{-1}(\varphi(0))$ is
divisible by 4, then all the positions of $\psi^{-1}(\varphi(0))$
and $\psi^{-1}(\varphi(1))$ are equivalent to $1$ modulo $4$. So,
the length of $\psi^{-1}(\varphi(0))$ is of the form $4\ell+1$, 2
or 3 for some $\ell$, and so is $\psi^{-1}(\varphi(1))$. Consider
the factor $\psi^{-1}(\varphi(1)\varphi(1))$ of $u'$. Consider an
occurrence of $BB$ in the first copy of $\psi^{-1}(\varphi(1))$,
which can only be at a position of the form $4m$ in $u'$. Then the
occurrence of $BB$ in the second copy cannot be of the form $4m'$.
\end{proof}

The proof of Theorem \ref{theo:exnr}  follows from  Lemma
\ref{lem:exlast} with the observation that $\sigma(01)=\tau(01)$,
$\sigma(10)=\tau(10)$, and hence $\sigma\circ  \sigma = \sigma
\circ \tau$, $\tau\circ \sigma = \tau\circ  \tau$ (by checking  it directly on letters).

\bibliographystyle{amsalpha}
\bibliography{Rigidbib}

\newcommand{\etalchar}[1]{$^{#1}$}
\providecommand{\bysame}{\leavevmode\hbox to3em{\hrulefill}\thinspace}
\providecommand{\MR}{\relax\ifhmode\unskip\space\fi MR }
% \MRhref is called by the amsart/book/proc definition of \MR.
\providecommand{\MRhref}[2]{%
  \href{http://www.ams.org/mathscinet-getitem?mr=#1}{#2}
}
\providecommand{\href}[2]{#2}
\begin{thebibliography}{BDD{\etalchar{+}}18}

\bibitem[AI01]{10.36045/bbms/1102714169}
P.~Arnoux and S.~Ito, \emph{{Pisot substitutions and Rauzy fractals}}, Bulletin
  of the Belgian Mathematical Society - Simon Stevin \textbf{8} (2001), no.~2,
  181 -- 207.

\bibitem[AR91]{AR}
P.~Arnoux and G.~Rauzy, \emph{Repr\'{e}sentation g\'{e}om\'{e}trique de suites
  de complexit\'{e} {$2n+1$}}, Bull. Soc. Math. France \textbf{119} (1991),
  no.~2, 199--215.

\bibitem[BCS13]{DBLP:journals/ijac/BertheCS13}
V.~Berth{\'{e}}, J.~Cassaigne, and W.~Steiner, \emph{Balance properties of
  arnoux-rauzy words}, Int. J. Algebra Comput. \textbf{23} (2013), no.~4,
  689--704.

\bibitem[BDD{\etalchar{+}}18]{BerthDDLP:17}
V.~Berth\'e, F.~Dolce, F.~Durand, J.~Leroy, and D.~Perrin, \emph{Rigidity and
  substitutive dendric words}, Internat. J. Found. Comput. Sci. (2018), no.~29,
  705--720.

\bibitem[BFS12]{BerFreSir}
V.~Berth\'{e}, D.~Frettl\"{o}h, and V.~Sirvent, \emph{Selfdual substitutions in
  dimension one}, European J. Combin. \textbf{33} (2012), no.~6, 981--1000.

\bibitem[DJP01]{DJP:01}
X.~Droubay, J.~Justin, and G.~Pirillo, \emph{Episturmian words and some
  constructions of de {L}uca and {R}auzy}, Theoret. Comput. Sci. \textbf{255}
  (2001), no.~1-2, 539--553.

\bibitem[DK09]{DiekertKrieger:09}
V.~Diekert and D.~Krieger, \emph{Some remarks about stabilizers}, Theoret.
  Comput. Sci. \textbf{410} (2009), no.~30-32, 2935--2946.

\bibitem[DR09]{DurRig}
F.~Durand and M.~Rigo, \emph{Syndeticity and independent substitutions}, Adv.
  in Appl. Math. \textbf{42} (2009), no.~1, 1--22.

\bibitem[Dur11]{Du:11}
F.~Durand, \emph{Cobham's theorem for substitutions}, J. Eur. Math. Soc. (JEMS)
  \textbf{13} (2011), no.~6, 1799--1814.

\bibitem[Fog02]{Fog02}
N.~Pytheas Fogg, \emph{Substitutions in dynamics, arithmetics and
  combinatorics}, Lecture Notes in Mathematics, vol. 1794, Springer-Verlag,
  Berlin, 2002.

\bibitem[GJ09]{GlenJustin}
A.~Glen and J.~Justin, \emph{Episturmian words: a survey}, Theor. Inform. Appl.
  \textbf{43} (2009), no.~3, 403--442.

\bibitem[GLR08]{GlenLeveRich:08}
A.~Glen, F.~Lev\'e, and G.~Richomme, \emph{Quasiperiodic and {L}yndon
  episturmian words}, Theoret. Comput. Sci. \textbf{409} (2008), no.~3,
  578--600.

\bibitem[GLR09]{GlenLeveRich:09}
\bysame, \emph{Directive words of episturmian words: equivalences and
  normalization}, Theor. Inform. Appl. \textbf{43} (2009), no.~2, 299--319.

\bibitem[JP02]{JustinPirillo:02}
J.~Justin and G.~Pirillo, \emph{Episturmian words and episturmian morphisms},
  Theoret. Comput. Sci. \textbf{276} (2002), no.~1-2, 281--313.

\bibitem[KPW18]{DBLP:conf/dlt/KarhumakiPW18}
J.~Karhum{\"{a}}ki, S.~Puzynina, and M.~A. Whiteland, \emph{On abelian
  subshifts}, Developments in Language Theory - 22nd International Conference,
  {DLT} 2018, Tokyo, Japan, September 10-14, 2018, Proceedings (M.~Hoshi and
  S.~Seki, eds.), Lecture Notes in Computer Science, vol. 11088, Springer,
  2018, pp.~453--464.

\bibitem[Kri08]{Krieger:08}
D.~Krieger, \emph{On stabilizers of infinite words}, Theoret. Comput. Sci.
  \textbf{400} (2008), no.~1-3, 169--181.

\bibitem[Pan81]{Pansiot:81}
J.-J. Pansiot, \emph{The {M}orse sequence and iterated morphisms}, Inform.
  Process. Lett. \textbf{12} (1981), no.~2, 68--70.

\bibitem[Pan83]{Pansiot:83}
\bysame, \emph{Mots infinis de {F}ibonacci et morphismes it\'{e}r\'{e}s}, RAIRO
  Inform. Th\'{e}or. \textbf{17} (1983), no.~2, 131--135.

\bibitem[Que10]{Queffelec:10}
M.~Queff{\'e}lec, \emph{Substitution dynamical systems---spectral analysis},
  second ed., Lecture Notes in Mathematics, vol. 1294, Springer-Verlag, Berlin,
  2010.

\bibitem[Ric03]{Richomme:03}
G.~Richomme, \emph{Conjugacy and episturmian morphisms}, Theoret. Comput. Sci.
  \textbf{302} (2003), no.~1-3, 1--34.

\bibitem[RS12]{RichommeSeebold:12}
G.~Richomme and P.~S{\'e}{\'e}bold, \emph{Completing a combinatorial proof of
  the rigidity of {S}turmian words generated by morphisms}, Theoret. Comput.
  Sci. \textbf{428} (2012), 92--97.

\bibitem[RW10]{RaoWen}
H.~Rao and Z.-Y. Wen, \emph{Invertible substitutions with a common periodic
  point}, Recent developments in fractals and related fields, Appl. Numer.
  Harmon. Anal., Birkh\"{a}user Boston, Boston, MA, 2010, pp.~401--409.

\bibitem[S{\'e}{\'e}98]{Seebold:98}
P.~S{\'e}{\'e}bold, \emph{On the conjugation of standard morphisms}, Theoret.
  Comput. Sci. \textbf{195} (1998), no.~1, 91--109, Mathematical foundations of
  computer science (Cracow, 1996).

\bibitem[S{\'e}{\'e}02]{Seebold:02}
\bysame, \emph{About some overlap-free morphisms on a {$n$}-letter alphabet},
  J. Autom. Lang. Comb. \textbf{7} (2002), no.~4, 579--597, Descriptional
  complexity of automata, grammars and related structures (Vienna, 2001).

\end{thebibliography}
\end{document}